\documentclass{scholarly-arxiv}

\usepackage{cmll}
\DeclareSymbolFont{stmry}{U}{stmry}{m}{n}
\DeclareMathDelimiter\llbracket{\mathopen}{stmry}{"4A}{stmry}{"71}
\DeclareMathDelimiter\rrbracket{\mathclose}{stmry}{"4B}{stmry}{"79}
\DeclareMathSymbol\llparenthesis\mathopen{stmry}{"4C}
\DeclareMathSymbol\rrparenthesis\mathclose{stmry}{"4D}
\DeclareMathSymbol\subsetplus\mathrel{stmry}{"44}
\DeclareMathSymbol\supsetplus\mathrel{stmry}{"45}
\DeclareMathSymbol\inplus\mathrel{stmry}{"41}
\DeclareMathSymbol\fatsemi\mathbin{stmry}{"23}
\DeclareMathSymbol\bigsqcap\mathop{stmry}{"64}
\DeclareMathSymbol\fatslash\mathbin{stmry}{"28}
\DeclareMathSymbol\fatbslash\mathbin{stmry}{"29}

\DeclareMathAlphabet{\mathpzc}{OT1}{pzc}{m}{it}

\newcommand\Mid{\mathrel{|}}
\newcommand\Coloneqq{\mathrel{\mathop{::}}=}
\newcommand\paren[1]{\left(#1\right)}
\renewcommand\brack[1]{\left[#1\right]}
\newcommand\set[1]{\left\{#1\right\}}
\newcommand\setcompr[2]{\left\{#1~\middle|~#2\right\}}
\newcommand\tuple[1]{\left\langle#1\right\rangle}

\newcommand\eqdef{\mathrel{\mathop{:}}=}
\renewcommand\P{\mathcal{P}}
\newcommand\pset[1]{\P{\paren{#1}}}
\newcommand\fpset[1]{\mathcal{P}_f\paren{#1}}
\newcommand\Nat{\mathbb{N}}

\renewcommand\epsilon\varepsilon
\newcommand\sem[1]{\left\llbracket {#1}\right\rrbracket}




\newcommand\subsetsim{%
  \mathrel{%
    \ooalign{%
      \raise0.2ex\hbox{$\subset$}\cr\hidewidth\raise-0.8ex\hbox{\scalebox{0.9}{$\sim$}}%
      \hidewidth\cr%
    }%
  }%
}

\makeatletter
\newcommand{\superimpose}[2]{%
  {\ooalign{$#1\@firstoftwo#2$\cr\hfil$#1\@secondoftwo#2$\hfil\cr}}}
\makeatother

\DeclareDocumentCommand\precong{ d<> O{H} D(){A} }{%
  \IfValueTF{#1}
  {\mathrel{\leq_{#2_{#3}}^{#1}}}
  {\mathrel{\leq_{#2_{#3}}}}%
}

\DeclareDocumentCommand\cl{ O{\Theta} D(){A} }{%
  {\downarrow^{#1_{#2}}}
}

\NewDocumentCommand\tree{ s d|| d() O{A} }{%
  \mathbbm{T}%
  \IfValueT{#3}{%
    ^{#3}%
  }%
  \IfValueT{#2}{%
    _{#2}%
  }%
  \IfBooleanF{#1}{%
    {\paren{#4}}%
  }%
}

\NewDocumentCommand\context{ D<>{\tree*} d|| d() O{A} }{%
  {#1}%
  \IfValueT{#3}{%
    ^{#3}%
  }%
  \IfValueT{#2}{%
    _{#2}%
  }%
  {\brack{#4}}%
}

\newcommand\myterms{\mathcal{T}}
\NewDocumentCommand\terms{ s D<>{\Sigma} O{A} }{%
  \myterms^{#2}%
  \IfBooleanF{#1}{%
    {\paren{#3}}%
  }%
}

\NewDocumentCommand\contexts{ D<>{\Sigma} d() O{A} }{%
  \IfValueTF{#2}{%
    \context<\myterms>(#2)[#3]
  }{%
    \context<\myterms>[#3]%
  }%
}

\renewcommand\vec[1]{\overrightarrow{#1}}

\makeatletter
\newcommand\XXreldoublebar{\mathrel{\smash=}}
\newcommand{\XXRightarrowfill@}[1]{%
\m@th \setboxz@h {$#1\XXreldoublebar $}\ht \z@ \z@ 
$#1\copy \z@ 
\mkern -6mu\cleaders \hbox {$#1\mkern -2mu\box \z@ \mkern -2mu$}\hfill 
\mkern -6mu
\mathord \Rightarrow $}
\newcommand{\Overrightarrow}{\mathpalette{\overarrow@\XXRightarrowfill@}}
\makeatother

\DeclareFontFamily{U} {MnSymbolA}{}

\DeclareFontShape{U}{MnSymbolA}{m}{n}{
  <-6> MnSymbolA5
  <6-7> MnSymbolA6
  <7-8> MnSymbolA7
  <8-9> MnSymbolA8
  <9-10> MnSymbolA9
  <10-12> MnSymbolA10
  <12-> MnSymbolA12}{}
\DeclareFontShape{U}{MnSymbolA}{b}{n}{
  <-6> MnSymbolA-Bold5
  <6-7> MnSymbolA-Bold6
  <7-8> MnSymbolA-Bold7
  <8-9> MnSymbolA-Bold8
  <9-10> MnSymbolA-Bold9
  <10-12> MnSymbolA-Bold10
  <12-> MnSymbolA-Bold12}{}
\DeclareSymbolFont{MnSyA} {U} {MnSymbolA}{m}{n}
\DeclareMathSymbol\downpitchfork{\mathrel}{MnSyA}{139}
\newcommand\trunkfun\downpitchfork

\NewDocumentCommand\syntax{D<>{\Alg} d()}{
  \IfNoValueTF{#2}{\mathcal{E}^{#1}}
  {{\mathcal{E}^{#1}}{\paren{#2}}}
}
\NewDocumentCommand\Algleq{s D<>{\Alg} O{A}}{
  \IfBooleanTF{#1}
  {\leqq_{#2,#3}}
  {\leqq_{#2}}}
\NewDocumentCommand\Algeq{s D<>{\Alg} O{A}}{
  \IfBooleanTF{#1}
  {\equiv_{#2,#3}}
  {\equiv_{#2}}}
\NewDocumentCommand\interpr{s D<>{\Alg} O{A} d()}{
  \IfBooleanTF{#1}
  {\mathcal{I}^{#2}_{#3}}
  {\mathcal{I}^{#2}}%
  \IfValueT{#4}{\paren{#4}}
}

\NewDocumentCommand\Alg{ d<> }{%
  \IfValueTF{#1}{\mathfrak{A}_{#1}}{\mathfrak{A}}%
}

\NewDocumentCommand\Aleqq{d<> o}{
  \IfValueTF{#2}{
    \leqq_{#2}
  }{
    \IfValueTF{#1}
    {\leqq_{\Alg<#1>}}
    {\leqq_{\Alg}}
  }
}
\NewDocumentCommand\Aeqq{d<> o}{
  \IfValueTF{#2}{
    \equiv_{#2}
  }{
    \IfValueTF{#1}
    {\equiv_{\Alg<#1>}}
    {\equiv_{\Alg}}
  }
}

\NewDocumentCommand\expr{ d|| d<> o d()}
{%
  \IfValueTF{#1}{%
    \IfValueTF{#4}{%
      {\mathbbm{E}_{#1}}{\paren{#4}}%
    }{%
      \IfValueTF{#2}{%
        {\mathbbm{E}_{#1}}{{#2}}%
      }{%
        \IfValueTF{#3}{%
          {\mathbbm{E}_{#1}}{\brack{#3}}%
        }{%
          \mathbbm{E}_{#1}%
        }%
      }%
    }%
  }{%
    \IfValueTF{#4}{%
      \IfValueTF{#2}{%
        {\mathbbm{E}}{{#2}}{\paren{#4}}%
      }{%
        \IfValueTF{#3}{%
          {\mathbbm{E}}{\brack{#3}}{\paren{#4}}%
        }{%
          \mathbbm{E}{\paren{#4}}%
        }%
      }%
    }{%
      \IfValueTF{#2}{%
        {\mathbbm{E}}{{#2}}%
      }{%
        \IfValueTF{#3}{%
          {\mathbbm{E}}{\brack{#3}}%
        }{%
          \mathbbm{E}%
        }%
      }%
    }%
  }%
}

\newcommand\Traces{\mathcal{T}}

\NewDocumentCommand\traces{ D||{\Traces} d:: d<> o d()}
{%
  \IfValueTF{#2}{%
    \IfValueTF{#5}{%
      \IfValueTF{#3}{%
        {\mathbbm{T}_{#1}^{#2}}{{#3}}\paren{#5}
      }{%
        \IfValueTF{#4}{%
          {\mathbbm{T}_{#1}^{#2}}{\brack{#4}}\paren{#5}
        }{%
          {\mathbbm{T}_{#1}^{#2}}{\paren{#5}}
        }%
      }%
    }{%
      \IfValueTF{#3}{%
        {\mathbbm{T}_{#1}^{#2}}{{#3}}%
      }{%
        \IfValueTF{#4}{%
          {\mathbbm{T}_{#1}^{#2}}{\brack{#4}}%
        }{%
          \mathbbm{T}_{#1}^{#2}%
        }%
      }%
    }%
  }{%
    \IfValueTF{#5}{%
      \IfValueTF{#3}{%
        {\mathbbm{T}_{#1}}{{#3}}\paren{#5}
      }{%
        \IfValueTF{#4}{%
          {\mathbbm{T}_{#1}}{\brack{#4}}\paren{#5}
        }{%
          {\mathbbm{T}_{#1}}{\paren{#5}}
        }%
      }%
    }{%
      \IfValueTF{#3}{%
        {\mathbbm{T}_{#1}}{{#3}}%
      }{%
        \IfValueTF{#4}{%
          {\mathbbm{T}_{#1}}{\brack{#4}}%
        }{%
          \mathbbm{T}_{#1}%
        }%
      }%
    }%
  }%
}
\NewDocumentCommand\lang{ D||{\Traces} d<> o d()}
{%
  \IfValueTF{#4}{%
    \IfValueTF{#2}{%
      {\mathbbm{L}_{#1}}{{#2}}{\paren{#4}}%
    }{%
      \IfValueTF{#3}{%
        {\mathbbm{L}_{#1}}{\brack{#3}}{\paren{#4}}%
      }{%
        \mathbbm{L}_{#1}{\paren{#4}}%
      }%
    }%
  }{%
    \IfValueTF{#2}{%
      {\mathbbm{L}_{#1}}{{#2}}%
    }{%
      \IfValueTF{#3}{%
        {\mathbbm{L}_{#1}}{\brack{#3}}%
      }{%
        \mathbbm{L}_{#1}%
      }%
    }%
  }%
}

\newcommand\clT{{\downarrow^{\Traces}}}

\NewDocumentCommand\I{ D||{\mathcal{I}} d()}{
  \IfValueTF{#2}{%
    \sem{#2}_{#1}
  }{%
    {#1}
  }%
}

\NewDocumentCommand\Ic{ D||{\mathcal{I}} d()}{
  \IfValueTF{#2}{%
    \sem{#2}_{#1}^{\Traces}
  }{%
    {\clT\circ {#1}}
  }%
}

\NewDocumentCommand\AExpr{ s D<>{\Alg} D(){A} o}
{\mathcal{E}_{#2}{\IfBooleanF{#1}{\IfValueTF{#4}{\brack{#4}}{\paren{#3}}}}}
\NewDocumentCommand\Terms{ s D<>{\Alg} D::{} D(){A} o}
{\mathcal{T}^{#3}_{#2}{\IfBooleanF{#1}{\IfValueTF{#5}{\brack{#5}}{\paren{#4}}}}}
\NewDocumentCommand\ALang{ s D<>{\Alg} D(){A}}
{\mathcal{L}_{#2}{\IfBooleanF{#1}{\paren{#3}}}}
\NewDocumentCommand\AI{s D<>{\Alg} o d()}{
  \IfValueTF{#4}{%
    \IfValueTF{#3}{%
      \IfBooleanTF{#1}{%
        \sem{#4}^{#3}
      }{%
        \sem{#4}^{#3}_{#2}
      }%
    }{%
       \IfBooleanTF{#1}{%
        \sem{#4}
      }{%
        \sem{#4}_{#2}
      }%
    }%
  }{%
    \IfValueTF{#3}{%
      \IfBooleanTF{#1}{%
        \mathcal{I}^{#3}
      }{%
        \mathcal{I}^{#3}_{#2}        
      }%
    }{%
       \IfBooleanTF{#1}{%
         \mathcal{I}
      }{%
        \mathcal{I} _{#2}       
      }%
    }%
  }%
}

\newcommand\X{\mathfrak{X}}
\NewDocumentCommand\latsem{ O{\X} d() }{
  \IfValueTF{#2}{%
    \sem{#2}_{#1}%
  }{%
    \mathfrak{i}_{#1}%
  }%
}




%
\NewDocumentCommand\Coh{ d() }{\mathscr C\IfValueT{#1}{{\paren{#1}}}}
\NewDocumentCommand\Cohf{ d() }{\mathscr C_f\IfValueT{#1}{{\paren{#1}}}}
\let\oldcoh\coh
\RenewDocumentCommand\coh{ D(){} }{\mathrel{\oldcoh_{#1}}}
\let\oldincoh\sincoh
\RenewDocumentCommand\incoh{ D(){} }{\mathrel{\oldincoh_{#1}}}
\NewDocumentCommand\atobs{ d() }{O\IfValueT{#1}{{\paren{#1}}}}

\NewDocumentCommand\contains{ D(){} }{\mathrel{\preceq_{#1}}}
\NewDocumentCommand\Obs{ d() }{\mathbb{O}\IfValueT{#1}{{\paren{#1}}}}
\NewDocumentCommand\Obsf{ d() }{\mathbb{O}_f\IfValueT{#1}{{\paren{#1}}}}
\NewDocumentCommand\closure{ D(){} }{\mathop{\downarrow^{#1}}}
\NewDocumentCommand\closuref{ D(){} }{\mathop{\downarrow^{#1}_f}}

\newcommand\Tlat{\mathcal T_{\text{lat}}}
\newcommand\Tobs{\mathcal T_{\text{obs}}}

\begin{document}
\title{Observation algebras:\\ Heyting algebra over coherence spaces}
\titlerunning{Observation algebra}
%
\author{Paul Brunet}
\authorrunning{P. Brunet}

%
\institute{%
  Universit\'e Paris-Est Cr\'eteil\\
  \ttfamily\href{https://paul.brunet-zamansky.fr}{paul.brunet-zamansky.fr}\\
  \email{paul@brunet-zamansky.fr}
}

%
\maketitle              
\begin{abstract}
  In this report, we introduce observation algebras, constructed by considering the downclosed subsets of a coherence space ordered by reverse inclusion.
  These may be interpreted as specifications of sets of events via some predicates with some extra structure.
  We provide syntax for these algebras, as well as axiomatisations.
  We establish completeness of these axiomatisations in two cases: when the syntax is that of bounded distributive lattices (conjunction, disjunction, top, and bottom), and when the syntax also includes an implication operator (in the sense of Heyting algebra), but the underlying coherence space satisfies some tractability condition.
  We also provide a product construction to combine graphs and their axiomatisations, yielding a sound and complete composite system. This development has been fully formalised in Rocq.
\end{abstract}

%
%

\section{Introduction}
\label{sec:intro}
When designing formal semantics for program verification, one needs a language (logic) to describe states: such a language forms the static side of an otherwise dynamic framework.
In many cases, boolean algebras (BA), i.e. classical propositional logic, are used for this purpose.
Examples include the KAT family of models~\cite{kozen_kleene_1997}, as well as  (concurrent) Kleene algebras with boolean observations ~\cite{kappe_kleene_2019,kbswz20}.
However, as discussed in~\cite{wbdkrs20}, this is not the most useful model in the presence of concurrent behaviour.
Indeed, in \textit{op. cit.} it was argued that a more general class of logics, namely \emph{pseudo-complemented distributive lattices} or PCDL, were more appropriate.
A tailor-made instance of this logic was presented then, and used to model weak memory properties.
%
%
In this paper, we study a class of distributive lattices that may be used as alternative models of partial observations.
This model generalises the PCDL model of~\cite{wbdkrs20} in two ways: we add the implication operator, and abstract over the atomic predicates of the language.
We intend this work to be used to define purpose-built models to tackle various verification or modelling tasks in concurrent settings.

To get intuitions about our model, consider a situation where a system is being monitored by an observer.
We are given a (possibly infinite) set of atomic propositions,
called observations in the following, that describe static properties of the system.
We also know which pairs of observations are coherent, i.e. may hold simultaneously, and which pairs are incompatible, meaning they are mutually exclusive.
The observer may report a set of observations that they witnessed.
This set need not be exhaustive: the fact that an atomic proposition does not appear in a report means that the observer failed to evaluate its veracity, not that they saw it to be false.
To establish the falsity of an observation $o$, the observer has to witness some other observation, incoherent with $o$.
This gives the model an intuitionistic flavour (existence of witness), as well as an epistemic one: knowledge may be partial, and is obtained by positive observations.
The experiment may be repeated, for instance observing each possible state of the system during an execution, or tying the observer to a program point and checking the system at this point for various runs with different inputs.
This way we obtain a set of reports, one for each experiment, each describing partially some possible state of the system.

Formally, this situation is modelled by an undirected graph, whose nodes are the atomic propositions, and whose adjacency relation is interpreted as coherence.
A single report, containing a set of pairwise coherent nodes, is simply a clique of the graph.
Non-exhaustivity is represented by an ordering between cliques.
We say that $c_2$ is more general than $c_1$ if $c_1$ contains $c_2$: every observation made in the report $c_2$ is also present in $c_1$, but the latter also deals with properties left unspecified by $c_2$.
A sentence in our model is then interpreted as a set of cliques such that if $c$ satisfies a formula, then any clique less general than $c$ also satisfies the same formula.
This way, we capture all partial states that are compatible with the report we received.

All the proofs in this paper have been formalised in Rocq, and are available online\footnote{\ttfamily\href{https://github.com/monstrencage/obs-alg-proofs/}{github.com/monstrencage/obs-alg-proofs/}}.

The paper is organised as follows. In Section~\ref{sec:def} we introduce observation algebras, and construct an implication such that the resulting structure is a Heyting algebra. In Section~\ref{sec:obs-lat} we present the free bounded distributive lattice in terms of observation algebras. The implication is added to the syntax in Section~\ref{sec:heyting-obs}, and two classes of graphs are investigated. For each we provide sound and complete axiomatisations of semantic contaiment. In Section~\ref{sec:prods} we further study a construction allowing one to combine graphs and their axiomatisations to produce a composite sound and complete system. This is leveraged to get interesting examples, representing memory states with infinitely many variables each with an arbitrary set of possible values. Finally, in Section~\ref{sec:remarks} we conclude with a few remarks on the Rocq formalisation, alternative semantics in terms of \emph{finite} cliques, and the matter of the empty clique.

\clearpage
\section{Definitions and first properties}
\label{sec:def}


\subsection{Coherence graphs and observation algebras}
\label{sec:def-obs-alg}

\begin{definition}[Coherence graphs]
  A coherence graph is a pair $G=\tuple{\atobs(G),\coh(G)}$ of a set $\atobs(G)$ equipped with a symmetric and reflexive relation $\coh(G)$.
\end{definition}
The intuition here is that $\atobs(G)$ is a set of atomic observations.
The relation $\coh(G)$ indicates which pairs of observations may occur simultaneously.
If observations are seen as prediates over a set of events, then two predicates are coherent if they may both be true of the same event.
We will write $\incoh(G)$ for the complement of $\coh(G)$.
This relation is symmetric and irreflexive.

\begin{definition}[Coherence space]
  The coherence space generated by $G$ is its set of cliques, i.e.
  \begin{equation*}
    \Coh(G)\eqdef\setcompr{x\subseteq \atobs(G)}{\forall a,b\in x,\,a\coh(G) b}
    \qedhere
  \end{equation*}
\end{definition}

\begin{remark}
Note that coherence spaces, as introduced by Girard~\cite{girard_linear_1987}, can be equivalently described as a set $X$ of subsets of a base set $O$ satisfying the following closure properties:
\begin{align*}
  &\forall a,b\subseteq X,\,a\subseteq b\in X\Rightarrow a\in X
  &&\forall x\subseteq X,\,\paren{\forall a,b\in x,\,a\cup b\in X}\Rightarrow \bigcup x\in X.\qedhere
\end{align*}
\end{remark}

Following the same intuition as before, the coherence space records which sets of observations may describe a single event.
As such, if a clique $\beta$ is contained in a larger clique $\alpha$, then any event satisfying each of the predicates in $\alpha$ also satisfy those in $\beta$.
In this sense, $\beta$ is more general than $\alpha$, as its requirements are met by more events.
We formalise this generality/specificity relation:
\begin{definition}
  A clique $\alpha$ is less general, or more specific, than a clique $\beta$, written $\alpha\contains(G)\beta$, if $\beta\subseteq\alpha$.
\end{definition}
This could be summarised by stating $\contains(G)~=~\supseteq\cap\paren{\Coh(G)\times\Coh(G)}$.

To capture sets of events, it is natural to consider sets of cliques.
To push on the intuition that an event satisfies a clique if it satisfies each predicate in it, we restrict our attention to those sets $x$ of cliques that are down-closed with respect to the generality relation, i.e. $\alpha\contains(G)\beta\in x\Rightarrow\alpha\in x$.
This defines the observation algebra over $G$:
\begin{definition}[Observation algebra]
  The observation algebra over a coherence graph $G$ is the set of subsets of $\Coh(G)$ that are downclosed with respect to $\contains(G)$, i.e.:
  \begin{equation*}
    \Obs(G)\eqdef\setcompr{x\subseteq\Coh(G)}{\alpha\contains(G)\beta\in x\Rightarrow\alpha\in x}.
  \end{equation*}
  The elements of the observation algebra are called observations.
\end{definition}

It will be convenient to define the down-closure of an arbitrary set of cliques, i.e.
$$x\closure(G)\eqdef\setcompr{\alpha\in\Coh(G)}{\exists \beta\in x:\;\alpha\contains(G) \beta}.$$

\begin{notation}
  When the graph $G$ is obvious from the context, we write omit it from the notations, writing $\atobs$ instead of $\atobs(G)$, $\coh$ instead of $\coh(G)$, etc...
  We will let $a,b,c\dots$ range over $\atobs$, $\alpha,\beta,\gamma,\dots$ range over $\Coh$, and $x,y,z,\dots$ range over $\Obs$.
  Finally, we denote the subset of $\Coh$ containing only the finite cliques of $G$ by the notation $\Cohf$.
\end{notation}

With this operator, we may reformulate the definition of $\Obs$ as follows:
\begin{equation}
  \label{eq:alt-def-Obs}
  \Obs=\setcompr{x\closure}{x\subseteq\Coh}.
  \tag{Alternative definition of $\Obs$}
\end{equation}

This operator makes the following result straightforward to establish:
\begin{proposition}\label{prop:obs-bdl}
  $\tuple{\Obs,\cup,\cap,\emptyset,\Coh}$ is a bounded distributive lattice.
\end{proposition}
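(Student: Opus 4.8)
The plan is to exhibit $\tuple{\Obs,\cup,\cap,\emptyset,\Coh}$ as a subalgebra, in the signature of bounded lattices, of the powerset algebra $\tuple{\pset{\Coh},\cup,\cap,\emptyset,\Coh}$, which is a complete Boolean algebra and in particular a bounded distributive lattice. Since the bounded-lattice axioms — commutativity, associativity, absorption, the unit laws $x\cup\emptyset=x$ and $x\cap\Coh=x$, and distributivity — are universally quantified equations, any subalgebra of a bounded distributive lattice is again a bounded distributive lattice; so there is nothing to prove about the algebraic identities themselves. What must be checked is exactly that $\Obs$ is closed under the operations and contains the constants, i.e.\ four conditions: $\emptyset\in\Obs$, $\Coh\in\Obs$, and closure of $\Obs$ under binary $\cup$ and $\cap$. (One should also note that $\emptyset$ and $\Coh$ are genuinely the least and greatest elements of $\Obs$ under inclusion, but this is automatic since $\Obs\subseteq\pset{\Coh}$ and every $x\in\pset{\Coh}$ satisfies $\emptyset\subseteq x\subseteq\Coh$.)

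These four points are precisely where the down-closure hypothesis enters, and each is immediate. We have $\emptyset\in\Obs$ vacuously, and $\Coh\in\Obs$ because $\contains(G)={\supseteq}\cap\paren{\Coh\times\Coh}$ only ever relates cliques, so $\alpha\contains(G)\beta\in\Coh$ already forces $\alpha\in\Coh$. For $x,y\in\Obs$ and $\alpha\contains(G)\beta\in x\cup y$, the clique $\beta$ lies in $x$ or in $y$; applying down-closure of whichever of the two sets contains it yields $\alpha\in x\cup y$, so $x\cup y\in\Obs$. For $\alpha\contains(G)\beta\in x\cap y$, down-closure of $x$ and of $y$ each give $\alpha$, hence $\alpha\in x\cap y$ and $x\cap y\in\Obs$. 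The very same argument in fact shows $\Obs$ closed under arbitrary unions and intersections, so it is even a complete lattice, but this is not needed here.

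An alternative route, which is the one suggested by the reformulation $\Obs=\setcompr{x\closure}{x\subseteq\Coh}$ stated just before the proposition, is to observe that the down-closure operation $x\mapsto x\closure$ is a closure operator on $\pset{\Coh}$ whose fixpoints are exactly the elements of $\Obs$, and that it moreover commutes with unions, $\paren{x\cup y}\closure=x\closure\cup y\closure$; since the fixpoints of any closure operator on a powerset are closed under intersection, this again delivers the sublattice property, and one concludes as above. Either way there is essentially no obstacle: the only conceptual point worth flagging is that distributivity is not re-proved but inherited from $\pset{\Coh}$, so the entire argument reduces to the bookkeeping that binary union and intersection preserve down-closure.
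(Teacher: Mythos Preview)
Your proposal is correct and follows essentially the same approach as the paper: both reduce the claim to showing that $\Obs$ is a subalgebra of the powerset lattice $\pset{\Coh}$, so that the equational axioms are inherited, and both then check that $\emptyset$, $\Coh$ belong to $\Obs$ and that $\Obs$ is closed under $\cup$ and $\cap$. The only cosmetic difference is that the paper takes as its primary route what you call the ``alternative'': it packages the verifications through the Kuratowski closure properties of $\closure$ (in particular using $(x\cup y)\closure=x\closure\cup y\closure$ and $(x\cap y)\closure\subseteq x\closure\cap y\closure$), whereas your main argument verifies down-closure of $x\cup y$ and $x\cap y$ directly from the definition; both are equally short and amount to the same content.
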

\begin{proof}
  Since we know that $\tuple{\pset{\Coh},\cup,\cap,\emptyset,\Coh}$ is a bounded distributive lattice, we only need to prove that both constants are observations, and that $\cup$ and $\cap$ are internal to $\Obs$. This only relies on the fact that $\contains$ is a preorder and on the alternative definition of $\Obs$.

  Indeed, $\contains$ being a preorder entails that $\closure$ is a Kuratowski closure operator~\cite{kuratowski1922}, i.e.:
  \begin{align}
    \paren{x\cup y}\closure&={x\closure}\cup {y\closure}\label{eq:14}\\
    x&\subseteq x\closure\label{eq:22}\\
    \emptyset&=\emptyset\closure\label{eq:23}\\
    x\closure\closure&=x\closure\label{eq:24}
  \end{align}
  From these we get:
  \begin{align*}
    &\emptyset=\emptyset\closure\in\Obs\\
    &\paren{{x\closure}\cup {y\closure}}\closure=
  \paren{x\closure\closure}\cup \paren{y\closure\closure}=
  {x\closure}\cup {y\closure}
  \end{align*}
  Using the fact that any subset of $\Coh$ is contained in $\Coh$, we also get:
  \begin{align*}
    &\Coh\subseteq\Coh\closure
    \text{ and }\Coh\closure\subseteq\Coh
    &&\Rightarrow
    &&\Coh=\Coh\closure\in\Obs
  \end{align*}
  Finally, Kuratowski closures have the property that $\paren{x\cap y}\closure\subseteq {x\closure}\cap {y\closure}$~\cite[Theorem 2]{kuratowski1922}, hence:
  \begin{align*}
    &\paren{{x\closure}\cap{ y\closure}}\closure\subseteq
  \paren{x\closure\closure}\cap \paren{y\closure\closure}=
      {x\closure}\cap {y\closure}\\
    &{x\closure}\cap {y\closure}\subseteq \paren{{x\closure}\cap {y\closure}}\closure\tag*{by~\eqref{eq:22}}
  \end{align*}
  Therefore $ {x\closure}\cap {y\closure}=\paren{{x\closure}\cap {y\closure}}\closure\in\Obs$.
\end{proof}

We will use sometimes the fact that while $\emptyset\closure=\emptyset$, the empty set of atomic observations is a clique, and furthermore satisfies $\set\emptyset\closure=\Coh$.

\subsection{Observation algebras are Heyting algebras}
\newcommand\impl{\mathbin{\rightarrow}}
Observation algebras may be equipped with an implication, in the sense of Heyting algebras.
\begin{equation*}
  \label{eq:impl}
  x\impl y \eqdef\setcompr{\alpha\in \Coh}{\forall \beta\in x,\,\alpha\cup\beta\in\Coh\Rightarrow \alpha\cup\beta\in y}
\end{equation*}
Notice that this operator is antitone in its first argument (because of the universal quantification on $\beta\in x$), and monotone in its second argument.
This operation is internal to the set of observations, as witnessed by the following identity:
\begin{equation}
  \label{eq:impl-closure}
   {x \closure}\impl{y\closure} = \paren{{x\closure}\impl {y\closure}}\closure = x \impl {y\closure}
\end{equation}
\begin{proof}
  We know by properties of the closure that ${x \closure}\impl{y\closure}\subseteq\paren{{x\closure}\impl {y\closure}}\closure $.
  To prove the two identities, it is therefore enough to check that $\paren{{x\closure}\impl {y\closure}}\closure \subseteq x \impl {y\closure}\subseteq {x \closure}\impl{y\closure}$.
  We now establish these two containments.
  
  Assume $\alpha \in  \paren{{x\closure}\impl {y\closure}}\closure$. By definition there is $\alpha'\in\Coh$ such that $\alpha\contains \alpha'$ and for any $\beta\in {x\closure}$ such that $\alpha'\cup\beta\in\Coh$ we have $\alpha'\cup\beta\in {y\closure}$. Let $\beta\in x$ be such that $\alpha\cup\beta\in\Coh$. Since $\alpha\contains \alpha'$, we have $\alpha'\cup\beta\subseteq\alpha\cup\beta\in\Coh$, which entails $\alpha'\cup\beta\in\Coh$. Since $\beta\in x\subseteq {x\closure}$, by hypothesis we get $\alpha'\cup\beta\in {y\closure}$. Since $\alpha\cup\beta\contains\alpha'\cup\beta\in {y\closure}$ we obtain $\alpha\cup\beta\in{y\closure\closure}=y\closure$. This proves $\alpha\in x\impl{y\closure}$, hence that $\paren{{x\closure}\impl {y\closure}}\closure\subseteq x\impl{y\closure}$.

  For the converse inclusion, let $\alpha$ be in $x\impl{y\closure}$, and let $\beta\in x\closure$ such that $\alpha\cup\beta\in \Coh$. By definition there is some $\beta'\subseteq\beta$ such that $\beta'\in x$. Since $\alpha\cup\beta'\subseteq\alpha\cup\beta\in\Coh$, we get $\alpha\cup\beta'\in \Coh$, hence by hypothesis $\alpha\cup\beta'\in{y\closure}$. As such we obtain $\alpha\cup\beta\contains\alpha\cup\beta'\in y\closure $ we have $\alpha\cup\beta\in {y\closure\closure}=y\closure$, meaning $\alpha\in{x\closure}\impl{y\closure}$.
  This proves $x\impl{y\closure}\subseteq{x\closure}\impl{y\closure}$.
\end{proof}
\begin{proposition}\label{prop:heyting}
  $\tuple{\Obs,\cup,\cap,\emptyset,\Coh,\impl}$ is a Heyting algebra.
\end{proposition}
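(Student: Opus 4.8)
The plan is to build on Proposition~\ref{prop:obs-bdl} and Equation~\eqref{eq:impl-closure}. Since $\tuple{\Obs,\cup,\cap,\emptyset,\Coh}$ is already known to be a bounded distributive lattice, and since~\eqref{eq:impl-closure} instantiated at $x,y\in\Obs$ (so that $x=x\closure$ and $y=y\closure$) shows $x\impl y=\paren{x\impl y}\closure$, hence that $x\impl y$ is again an observation, the only thing left to establish is the Heyting residuation law: for all $x,y,z\in\Obs$,
\[
  x\cap y\subseteq z\quad\iff\quad x\subseteq y\impl z .
\]
Everything else — in particular distributivity — is already available, so the statement reduces to this single equivalence.

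For the direction from right to left, I would pick $\alpha\in x\cap y$. Then $\alpha\in x\subseteq y\impl z$; applying the definition of $\impl$ with the witness $\beta\eqdef\alpha\in y$, and noting that $\alpha\cup\alpha=\alpha\in\Coh$, yields $\alpha\in z$, so $x\cap y\subseteq z$. For the converse, assume $x\cap y\subseteq z$ and take $\alpha\in x$; to see $\alpha\in y\impl z$, let $\beta\in y$ with $\gamma\eqdef\alpha\cup\beta\in\Coh$. From $\alpha\subseteq\gamma$ and the downclosure of $x$ with respect to $\contains$ we get $\gamma\in x$; from $\beta\subseteq\gamma$ and the downclosure of $y$ we get $\gamma\in y$; hence $\gamma\in x\cap y\subseteq z$, i.e.\ $\alpha\cup\beta\in z$, which is exactly what the definition of $y\impl z$ requires. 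Thus $x\subseteq y\impl z$.

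I do not anticipate a real obstacle: the closure of observations under $\contains$ is precisely the ingredient that makes the forward implication go through, and the diagonal choice of witness settles the backward one. The single point to keep straight is the orientation of $\contains$ (recall $\alpha\contains\beta$ means $\beta\subseteq\alpha$), so that ``larger clique'' and ``less general'' line up correctly when invoking downclosure; this is bookkeeping that is easy to get backwards but causes no genuine difficulty.
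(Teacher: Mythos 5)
Your proposal is correct and follows essentially the same route as the paper: both reduce the statement to the residuation law $x\cap y\subseteq z\Leftrightarrow x\subseteq y\impl z$ on top of Proposition~\ref{prop:obs-bdl}, prove the right-to-left direction with the diagonal witness $\alpha=\alpha\cup\alpha$, and prove the forward direction by using downclosure of $x$ and $y$ at the clique $\alpha\cup\beta$. Your explicit appeal to~\eqref{eq:impl-closure} for internality of $\impl$ is exactly the role that identity plays in the paper's setup preceding the proposition.
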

\begin{proof}
  Since we already know that $\Obs$ is a bounded lattice, we only need to show that $x\cap y\subseteq z\Leftrightarrow x\subseteq y\impl z$.
  
  For the right-to-left implication, if $\alpha\in x\cap y$, then $\alpha\in x\subseteq y\impl z$ and $\alpha\in y$, so by definition of $\impl$ we get $\alpha=\alpha\cup\alpha\in z$.

  For the direct implication, assume $\alpha\in x$. To prove that $\alpha\in y\impl z$, consider an arbitrary $\beta\in y$ such that $\alpha\cup\beta\in \Coh$.
  Since $x$ and $y$ are observations, and since $\alpha\cup\beta$ is more specific than both $\alpha$ and $\beta$, we know that $\alpha\cup\beta\in x\cap y\subseteq z$.
  By definition of $\impl$, this entails that $\alpha\in y\impl z$.
\end{proof}

As any Heyting algebra, $\Obs$ has a pseudocomplement, defined as $\overline x\eqdef x\impl\bot$.
In other words, we have:
$$ \alpha\in\overline x\Leftrightarrow\forall \beta\in x,\,\alpha\cup\beta\notin\Coh.$$

\clearpage
\section{Observation lattices}
\label{sec:obs-lat}

In this section, we provide a syntax to describe observations, and use it to give an axiomatic presentation of observation lattices. We fix a graph $G$ for the remainder of the section, and let $a,b,\dots$ range over its vertices.
The syntax in question consists of terms built out of the lattice operators and constants, as well as atomic observations:
\begin{align*}
  s,t\in \Tlat&\Coloneqq a \Mid s\wedge t\Mid s\vee t\Mid\top\Mid\bot.
\end{align*}
These terms may be interpreted as observations in a straightforward fashion :
\begin{align*}
  \sem a&\eqdef\set {\set a}\closure
  &\sem{\top}&\eqdef\Coh
  &\sem \bot&\eqdef\emptyset\\
  \sem {s\vee t}&\eqdef\sem s\cup\sem t
  &\sem {s\wedge t}&\eqdef\sem s\cap\sem t
\end{align*}

\subsection{Bounded distributive lattices}
\label{sec:bnd-distr-lat}

As a first step, we define the axiomatic relation $\equiv_0$ over $\Tlat$ as the smallest congruence satisfying the axioms~\eqref{ax:1} to~\eqref{ax:8}, i.e. the axioms of bounded distributive lattices.
As usual for such lattices, we may also define a preorder $\leqq_0$ by setting $s\leqq_0t\Leftrightarrow s\vee t\equiv_0 t$.
This ordering is a partial order up-to $\equiv_0$ (that is $s\leqq_0 t\leqq_0s$ entails $s\equiv_0 t$), and $s\leqq_0t$ is equivalent to $s\wedge t\equiv_0s$.

\begin{table}[h]
  \noindent%
  \begin{tableequations}
    \begin{minipage}[t]{.33\linewidth}
      \begin{align}
        &&s\vee\paren{t\vee u}
        &\equiv \paren{s\vee t}\vee u\axlabel{ax:1}\\
        &&s\vee t&\equiv t\vee s\axlabel{ax:2}\\
        &&s\vee\bot&\equiv s\axlabel{ax:3}
      \end{align}
    \end{minipage}\hfill%
    \begin{minipage}[t]{.33\linewidth}
      \begin{align}
        &&s\wedge\paren{t\wedge u}&\equiv \paren{s\wedge t}\wedge u\axlabel{ax:4}\\
        &&s\wedge t&\equiv t\wedge s\axlabel{ax:5}\\
        &&s\wedge\top&\equiv s\axlabel{ax:6}
      \end{align}
    \end{minipage}\hfill%
    \begin{minipage}[t]{.33\linewidth}
      \begin{align*}
        &&s\wedge\paren{s\vee t}&\equiv s\axlabel{ax:7}\\
        &&s\vee\paren{s\wedge t}&\equiv s\axlabel{ax:8}
      \end{align*}
    \end{minipage}
  \end{tableequations}
  \caption{Axioms of bounded distributive lattices}
  \label{tab:axioms}
\end{table}

Since $\tuple{\Tlat,\equiv_0,\vee,\wedge,\bot,\top}$ is a bounded distributive lattice, we may define operators $\bigvee,\bigwedge:\fpset\Tlat\to\Tlat$ such that:
\begin{align}
  &&\bigvee\emptyset&=\bot
  &\bigwedge\emptyset&=\top\\
  \forall S,T\in\fpset{\fpset \atobs},
  &&\bigvee \paren{S\cup T}&\equiv_0\bigvee S\vee\bigvee T
  &\bigwedge \paren{S\cup T}&\equiv_0\bigwedge S\wedge\bigwedge T
\end{align}

A handy tool in our completeness proofs will be the following interpretation of lattice terms:
\begin{align*}
  \brack a&\eqdef\set {\set a}
  &\brack{\top}&\eqdef\set\emptyset
  &\brack \bot&\eqdef\emptyset\\
  \brack {s\vee t}&\eqdef\brack s\cup\brack t
  &\brack {s\wedge t}&\eqdef\setcompr{x\cup y}{x\in\brack s,\;y\in\brack t}
\end{align*}
Notice that the image of a lattice term is always a finte set of finite sets of atomic observations.

By using the classic algorithm to normalise such terms into disjunctive normal forms, we may easily prove the following universal identity:
\begin{align}
  s\equiv_0\bigvee\setcompr{\bigwedge x}{x\in\brack s}\label{eq:dnf}
\end{align}
(In this statement and elsewhere, we implicitely use the fact that $\atobs\subseteq\Tlat$).

These finite sets should be considered up-to pairwise containment, i.e. the partial order $\lhd$ defined as :
\[\forall X,Y\in\fpset{\fpset\atobs},\;X\lhd Y\Leftrightarrow \forall x\in X,\,\exists y\in Y:\;x\supseteq y.\]
Indeed, we may prove by a simple induction on derivations that
\begin{align}
  s\leqq_0 t\Rightarrow \brack s\lhd\brack t.\label{eq:lat-sound}
\end{align}
We may also prove (by induction on the size of $X$) that for any two finite sets of finite sets of atomic observations $X,Y$ we have:
\begin{align}
  X\lhd Y\Leftrightarrow \bigvee\setcompr{\bigwedge x}{x\in X}\leqq_0\bigvee\setcompr{\bigwedge y}{y\in Y}.\label{eq:lhd-ax}
\end{align}

Taken together, equations \eqref{eq:dnf} and~\eqref{eq:lhd-ax} may be used to establish the following correspondance:
\begin{align}
  s\leqq_0t\Leftrightarrow \brack s\lhd\brack t.
\end{align}

\subsection{Observation lattices}
\label{sec:obs-lat-compl}

To move from this simple interpretation to the interpretation in terms of observations, we observe the following correspondance:
\begin{lemma}\label{lem:brack-to-sem}
  For any lattice term $s\in\Tlat$, we have $\sem s=\paren{\brack s\cap\Cohf}\closure$.
\end{lemma}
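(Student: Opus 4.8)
The plan is to proceed by structural induction on $s$. For the three base cases the computation is immediate once one observes that $\set a$ and $\emptyset$ are finite cliques (the former because $\coh$ is reflexive): we get $\brack a\cap\Cohf=\set{\set a}$, hence $\paren{\brack a\cap\Cohf}\closure=\set{\set a}\closure=\sem a$; likewise $\brack\top\cap\Cohf=\set\emptyset$, whose closure is $\Coh=\sem\top$ as recorded just after Proposition~\ref{prop:obs-bdl}; and $\brack\bot\cap\Cohf=\emptyset=\emptyset\closure=\sem\bot$.

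For the disjunction case $s=t\vee u$, I would use the induction hypothesis together with the fact that intersecting with $\Cohf$ commutes with union to write $\brack{t\vee u}\cap\Cohf=\paren{\brack t\cap\Cohf}\cup\paren{\brack u\cap\Cohf}$, and then invoke distributivity of $\closure$ over $\cup$ (identity~\eqref{eq:14}) to conclude $\paren{\brack{t\vee u}\cap\Cohf}\closure=\paren{\brack t\cap\Cohf}\closure\cup\paren{\brack u\cap\Cohf}\closure=\sem t\cup\sem u=\sem{t\vee u}$.

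The conjunction case $s=t\wedge u$ is where the real work lies. Writing $X=\brack t\cap\Cohf$ and $Y=\brack u\cap\Cohf$ — both sets of finite cliques — the induction hypothesis gives $\sem{t\wedge u}=X\closure\cap Y\closure$, so it suffices to prove $X\closure\cap Y\closure=\paren{\brack{t\wedge u}\cap\Cohf}\closure$. I would first identify $\brack{t\wedge u}\cap\Cohf$ with the set $Z\eqdef\setcompr{x\cup y}{x\in X,\,y\in Y,\,x\cup y\in\Coh}$: since the image of a lattice term consists of finite sets of finite sets, any $z=x\cup y\in\brack{t\wedge u}$ that is a clique has $x,y$ among its (finite) subsets, so $x\in X$ and $y\in Y$; conversely every element of $Z$ is plainly a finite clique lying in $\brack{t\wedge u}$. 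Then I would establish $X\closure\cap Y\closure=Z\closure$ by double inclusion. For $\subseteq$: if $\alpha\in\Coh$ satisfies $x\subseteq\alpha$ for some $x\in X$ and $y\subseteq\alpha$ for some $y\in Y$, then $x\cup y\subseteq\alpha$, so $x\cup y\in\Coh$ (a subset of a clique), whence $x\cup y\in Z$ and $\alpha\in Z\closure$. For $\supseteq$: if $\alpha\contains x\cup y$ with $x\in X$, $y\in Y$, then $x\subseteq\alpha$ and $y\subseteq\alpha$ separately, so $\alpha\in X\closure\cap Y\closure$.

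The main obstacle is precisely this conjunction step: the two interpretations treat meet differently — $\brack\cdot$ by taking pairwise unions of members, $\sem\cdot$ by intersecting down-closed sets — and the bridge between them rests on the down-closure with respect to $\contains$ (reverse inclusion) combined with the hereditary nature of the clique condition. Everything else is routine set-theoretic bookkeeping.
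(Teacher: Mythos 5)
Your proposal is correct and follows essentially the same route as the paper's proof: structural induction, with the base and disjunction cases handled by Kuratowski closure properties, and the conjunction case settled by the same double-inclusion argument (taking unions of the two witnessing finite cliques one way, and projecting a witness $\alpha_1\cup\alpha_2$ back to $\alpha_1$ and $\alpha_2$ the other way). Your intermediate set $Z$ is just an explicit packaging of the mutual-containment step the paper carries out directly.
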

\begin{proof}
  By induction on $s$:
  \begin{itemize}
  \item Since $\set a\in\Cohf$, it holds that
    $\sem a=\set{\set a}\closure=\paren{\set{\set a}\cap\Cohf}\closure=\paren{\brack a\cap\Cohf}\closure$.
  \item Since $\emptyset\in\Cohf$, and since $\emptyset$ is maximal in $\tuple{\Coh,\contains}$, it holds that
    $$\sem \top=\Coh=\set{\emptyset}\closure=\paren{\set{\emptyset}\cap\Cohf}\closure=\paren{\brack \top\cap\Cohf}\closure.$$
  \item Trivially, $\sem \bot=\emptyset=\emptyset\closure=\paren{\emptyset\cap\Cohf}\closure=\paren{\brack\bot\cap\Cohf}\closure$.
  \item For the case of disjunction, by induction hypothesis we assume $\sem s=\paren{\brack s\cap\Cohf}\closure$ and $\sem t=\paren{\brack t\cap\Cohf}\closure$.
    This case can be dispatched easily thanks to the properties of Kuratowski closures:
    \begin{align*}
      \sem {s\vee t}
      =\sem s\cup\sem t
      &=\paren{\brack s\cap\Cohf}\closure\cup\paren{\brack t\cap\Cohf}\closure\\
      &=\paren{\paren{\brack s\cap\Cohf}\cup\paren{\brack t\cap\Cohf}}\closure
      =\paren{\paren{\brack s\cup\brack t}\cap\Cohf}\closure
      =\paren{\brack {s\vee t}\cap\Cohf}\closure
    \end{align*}
  \item For the conjunction, by induction hypothesis we assume the following: $$\sem s=\paren{\brack s\cap\Cohf}\closure\text{ and }\sem t=\paren{\brack t\cap\Cohf}\closure.$$
    Therefore we have:
    \[\sem {s\wedge t}
      =\sem s\cap\sem t
      =\paren{\brack s\cap\Cohf}\closure\cap\paren{\brack t\cap\Cohf}\closure\]
    For this case we proceed by mutual contaiment.
    \begin{itemize}
    \item Assume $\alpha\in\sem {s\wedge t}=\paren{\brack s\cap\Cohf}\closure\cap\paren{\brack t\cap\Cohf}\closure$.
      There are $\alpha_1\in\brack s\cap\Cohf$ and $\alpha_2\in\brack t\cap\Cohf$ such that $\alpha\contains\alpha_1$ and $\alpha\contains\alpha_2$.
      Since $\contains=\supseteq$, this means that $\alpha\contains\alpha_1\cup\alpha_2$.
      Furthermore, $\alpha_1\in\brack s$ and $\alpha_2\in\brack t$ entail that
      $\alpha_1\cup\alpha_2\in\brack{s\wedge t}$.
      Both $\alpha_1$ and $\alpha_2$ being finite, so is their union.
      Finally, since $\alpha\in\Coh$ and $\alpha\contains\alpha_1\cup\alpha_2$, the latter is a clique.
      To sum up, we have $\alpha\contains\alpha_1\cup\alpha_2\in\brack{s\wedge t}\cap\Cohf$, hence $\alpha\in\paren{\brack {s\wedge t}\cap\Cohf}\closure$.
    \item Assume $\alpha\in\paren{\brack {s\wedge t}\cap\Cohf}\closure$,
     i.e. there are  $\alpha_1\in\brack s$ and $\alpha_2\in\brack t$ such that $\alpha_1\cup\alpha_2\in\Cohf$ and $\alpha\contains \alpha_1\cup\alpha_2$.
      The fact that $\alpha_1\cup\alpha_2\in\Cohf$ ensures that both $\alpha_1$ and $\alpha_2$ are finite cliques.
      Therefore, we have $\alpha\contains\alpha_1\in\brack s\cap\Cohf$ and $\alpha\contains\alpha_2\in\brack t\cap\Cohf$, thus proving
      $\alpha\in\paren{\brack s\cap\Cohf}\closure\cap\paren{\brack t\cap\Cohf}\closure=\sem {s\wedge t}$.\qedhere
    \end{itemize}
  \end{itemize}
\end{proof}

We will use this lemma in the form of the following corollary:
\begin{corollary}\label{cor:brack-to-sem}
  For any pair of terms $s,t\in\Tlat$, we have $\sem s\subseteq\sem t$ if and only if $\brack s\cap\Cohf\lhd\brack t$.
\end{corollary}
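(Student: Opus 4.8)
The plan is to read off the corollary from Lemma~\ref{lem:brack-to-sem} together with an elementary characterisation of when one down-closure is contained in another. By the lemma, $\sem s=\paren{\brack s\cap\Cohf}\closure$ and $\sem t=\paren{\brack t\cap\Cohf}\closure$, so $\sem s\subseteq\sem t$ is equivalent to $\paren{\brack s\cap\Cohf}\closure\subseteq\paren{\brack t\cap\Cohf}\closure$, and the whole question reduces to comparing these two down-closures of finite sets of finite cliques.

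First I would prove the general fact that, for any $X,Y\subseteq\Coh$, the inclusion $X\closure\subseteq Y\closure$ holds precisely when every $\alpha\in X$ lies $\contains$-above some $\beta\in Y$. The ``if'' direction uses that this hypothesis gives $X\subseteq Y\closure$, whence $X\closure\subseteq Y\closure\closure=Y\closure$ by the idempotence~\eqref{eq:24} of the closure; the ``only if'' direction uses $X\subseteq X\closure\subseteq Y\closure$ via~\eqref{eq:22} and then unfolds the definition of $\closure$ to extract the witness $\beta$.

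Applying this with $X=\brack s\cap\Cohf$ and $Y=\brack t\cap\Cohf$ turns the inclusion of down-closures into the statement: for every $\alpha\in\brack s\cap\Cohf$ there is $\beta\in\brack t\cap\Cohf$ with $\beta\subseteq\alpha$ (recall that $\alpha\contains\beta$ unfolds to $\beta\subseteq\alpha$). It then remains to match this against $\brack s\cap\Cohf\lhd\brack t$, which only asks for a witness $\beta\in\brack t$ --- not necessarily in $\brack t\cap\Cohf$. One implication is immediate because $\brack t\cap\Cohf\subseteq\brack t$; for the other, I would observe that if $\alpha$ is a finite clique and $\beta\subseteq\alpha$, then $\beta$ is automatically a finite clique, so $\beta\in\Cohf$ and hence $\beta\in\brack t\cap\Cohf$.

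The only point requiring a moment's attention --- and the closest thing to an obstacle here --- is exactly this asymmetry in the statement: the left-hand side is intersected with $\Cohf$ while $\brack t$ on the right-hand side is not, and closing that gap is what the downward closure of $\Cohf$ under subsets is for. Beyond that, everything is a direct unfolding of definitions, using only that $\brack s$ and $\brack t$ are finite sets of finite sets (so that $\lhd$ genuinely applies to them).
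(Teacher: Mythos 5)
Your proposal is correct and follows essentially the same route as the paper's proof: apply Lemma~\ref{lem:brack-to-sem}, reduce the inclusion of closures to the elementwise condition $\brack s\cap\Cohf\subseteq\paren{\brack t\cap\Cohf}\closure$ via the Kuratowski closure properties, and bridge the gap between witnesses in $\brack t\cap\Cohf$ and in $\brack t$ by noting that any subset of a finite clique is again a finite clique. The only cosmetic difference is that you isolate the closure-containment characterisation as a standalone lemma, which the paper invokes inline.
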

\begin{proof}
  Thanks to \Cref{lem:brack-to-sem},  we have $\sem s\subseteq\sem t$ if and only if $\paren{\brack s\cap\Cohf}\closure\subseteq\paren{\brack t\cap\Cohf}\closure$.
  The operator $\closure$ being a Kuratowski closure, this is futhermore equivalent to $\brack s\cap\Cohf\subseteq\paren{\brack t\cap\Cohf}\closure$.
  So, unfolding the definitions, the containment of interest is equivalent to the following property:
  \begin{align*}
    \forall \alpha\in\brack s\cap\Cohf,\;\exists \beta\in\brack t\cap\Cohf:\;
    \alpha\contains\beta
  \end{align*}
  Observe that if $\alpha\in\Cohf$ and $\alpha\contains \beta$, i.e. $\beta\subseteq\alpha$, $\beta\in\Cohf$.
  This means the property is equivalent to the following:
  \begin{align*}
  \forall \alpha\in\brack s\cap\Cohf,\;\exists \beta\in\brack t:\;
    \alpha\supseteq\beta
  \end{align*}
  This happens to be the definition of the statement $\brack s\cap\Cohf\lhd\brack t$.
\end{proof}

We now define the axiomatic relation $\equiv_1$ on $\Tlat$, as the smallest congruence satisfying axioms~\eqref{ax:1} to~\eqref{ax:8}, as well as the following axiom.
\begin{tableequations}
  \begin{align}
    \forall a\incoh b,
    && a\wedge b&\equiv\bot\axlabel{ax:13}
  \end{align}
\end{tableequations}
This relation contains $\equiv_0$, and since $\Obs$ is a bounded distributive lattice and $$\sem{a\wedge b}=\setcompr{\alpha\in\Coh}{a,b\in\alpha}=\emptyset$$ for any incoherrent pair $a\incoh b$, it is clearly sound with respect to the interpretation $\sem -$.

To establish completeness, we only need to following lemma:
\begin{lemma}\label{lem:not-clique-bot}
  For any finite set of atomic observations $x\in\fpset\atobs\setminus \Cohf$, we have $\bigwedge x\equiv_1\bot$.
\end{lemma}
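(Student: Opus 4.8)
The plan is to read an incoherent pair off the failure of the clique condition and then let axiom~\eqref{ax:13} collapse the whole conjunction to $\bot$. Unpacking the hypothesis: since $x\in\fpset\atobs$ is already finite, the only way $x$ can fail to belong to $\Cohf$ is for $x$ not to be a clique of $G$. By the definition of $\Coh$ this means there are $a,b\in x$ with $a\incoh b$; necessarily $a\neq b$, because $\coh$ is reflexive.

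Next I would isolate this pair inside the big conjunction. From $\set{a,b}\subseteq x$ we have $x=\set{a,b}\cup x$, so the defining homomorphism property of $\bigwedge$ — together with $\set{a,b}=\set a\cup\set b$ and the singleton identities $\bigwedge\set a\equiv_0 a$ and $\bigwedge\set b\equiv_0 b$ (the same facts that underlie~\eqref{eq:dnf}) — yields $\bigwedge x\equiv_0(a\wedge b)\wedge\bigwedge x$. Axiom~\eqref{ax:13} gives $a\wedge b\equiv_1\bot$, and since $\equiv_1$ is a congruence containing $\equiv_0$ we may rewrite under the conjunction to obtain $\bigwedge x\equiv_1\bot\wedge\bigwedge x$. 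Finally $\bot\wedge t\equiv_0\bot$ holds in any bounded lattice (it is just the characterisation of $\bot\leqq_0 t$), so $\bigwedge x\equiv_1\bot$, as required.

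There is no genuine mathematical difficulty here: once the incoherent pair $a\incoh b\in x$ has been exhibited, \eqref{ax:13} does all the work. The only point needing a little care is the bookkeeping around the finite-meet operator $\bigwedge$ — decomposing $\bigwedge x$ along $x=\set{a,b}\cup x$, evaluating $\bigwedge$ on singletons up to $\equiv_0$, and then propagating the rewrite $a\wedge b\equiv_1\bot$ through the congruence. This is exactly the step that will let the forthcoming $\equiv_1$-completeness argument discard, from a disjunctive normal form, every conjunct that is not a genuine clique, thereby reducing the remaining comparison to \Cref{cor:brack-to-sem}.
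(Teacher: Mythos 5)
Your proposal is correct and follows essentially the same route as the paper: extract an incoherent pair $a\incoh b$ from the failure of the clique condition and let axiom~\eqref{ax:13} collapse the meet, the only cosmetic difference being that you isolate $a\wedge b$ via the decomposition $x=\set{a,b}\cup x$ and absorption $\bot\wedge t\equiv_0\bot$, whereas the paper phrases the same bookkeeping as $\bigwedge x\leqq_0\bigwedge\set{a,b}\equiv_0 a\wedge b\equiv_1\bot$.
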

\begin{proof}
  $x\in\fpset\atobs\setminus \Cohf$ means that there are $a,b\in x$ such that $a\incoh b$.
  As such, we have $x\supseteq\set{a,b}$, meaning
  \begin{align*}
    &\bigwedge x\leqq_0\bigwedge\set{a,b}\equiv_0a\wedge b\equiv_1\bot.\qedhere
  \end{align*}
\end{proof}
Since lemma enables us to adapt \eqref{eq:dnf} to a clique-based semantics:
\begin{corollary}\label{cor:dnf-cliques}
  For any $s\in\Tlat$, we have $s\equiv_1\bigvee\setcompr{\bigwedge\alpha}{\alpha\in\brack s\cap\Cohf}$.
\end{corollary}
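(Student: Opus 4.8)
The plan is to derive the statement from the disjunctive normal form identity~\eqref{eq:dnf} by discarding the non-clique disjuncts using \Cref{lem:not-clique-bot}. First, since $\equiv_1$ contains $\equiv_0$, \eqref{eq:dnf} gives $s\equiv_1\bigvee\setcompr{\bigwedge x}{x\in\brack s}$. Recall that $\brack s$ is a \emph{finite} set of \emph{finite} sets of atomic observations, so $\brack s\subseteq\fpset\atobs$, and we may partition it as $\brack s=\paren{\brack s\cap\Cohf}\cup\paren{\brack s\setminus\Cohf}$. Using the additivity of $\bigvee$ with respect to unions (the displayed identity $\bigvee\paren{S\cup T}\equiv_0\bigvee S\vee\bigvee T$) together with the fact that $\equiv_1$ is a congruence, this yields
\begin{equation*}
  s\equiv_1\paren{\bigvee\setcompr{\bigwedge\alpha}{\alpha\in\brack s\cap\Cohf}}\vee\paren{\bigvee\setcompr{\bigwedge x}{x\in\brack s\setminus\Cohf}}.
\end{equation*}

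Next I would show that the second disjunct is $\equiv_1\bot$. Every $x\in\brack s\setminus\Cohf$ lies in $\fpset\atobs\setminus\Cohf$, so by \Cref{lem:not-clique-bot} we have $\bigwedge x\equiv_1\bot$. Since $\brack s\setminus\Cohf$ is finite and $\equiv_1$ is a congruence for $\vee$, a short induction on its cardinality gives $\bigvee\setcompr{\bigwedge x}{x\in\brack s\setminus\Cohf}\equiv_1\bot$: the base case is $\bigvee\emptyset=\bot$, and the inductive step replaces each newly added disjunct by $\bot$ and then collapses $\bot\vee\bot\equiv_0\bot$, an instance of~\eqref{ax:3}. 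Plugging this back in and using~\eqref{ax:3} once more, $s\equiv_1\paren{\bigvee\setcompr{\bigwedge\alpha}{\alpha\in\brack s\cap\Cohf}}\vee\bot\equiv_1\bigvee\setcompr{\bigwedge\alpha}{\alpha\in\brack s\cap\Cohf}$, as required.

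I do not expect a genuine obstacle here: the argument is entirely a matter of chaining facts already established in this section (the DNF identity, the additivity of $\bigvee$, and \Cref{lem:not-clique-bot}) with routine congruence reasoning. The only point requiring mild care is that the operators $\bigvee,\bigwedge$ are only specified up to $\equiv_0$, so every rewriting step must be read modulo $\equiv_1$; this is harmless since $\equiv_0\subseteq\equiv_1$ and $\equiv_1$ is a congruence, which is exactly what makes all of the above manipulations legitimate.
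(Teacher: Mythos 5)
Your proposal is correct and follows essentially the same route as the paper's proof: apply~\eqref{eq:dnf}, split $\brack s$ into $\brack s\cap\Cohf$ and $\brack s\setminus\Cohf$, use the additivity of $\bigvee$, collapse the non-clique disjuncts to $\bot$ via \Cref{lem:not-clique-bot}, and finish with~\eqref{ax:3}. The only difference is cosmetic: you spell out the small induction collapsing the disjunction of $\bot$'s, which the paper compresses into a single rewriting step.
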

\begin{proof}
  \begin{align*}
    s&\equiv_0\bigvee\setcompr{\bigwedge x}{x\in\brack s}
    \tag*{by~\eqref{eq:dnf}}\\
    &=\bigvee\setcompr{\bigwedge x}{x\in\paren{\brack s\cap\Cohf}\cup\paren{\brack s\setminus\Cohf}}\\
    &=\bigvee\paren{\setcompr{\bigwedge x}{x\in\brack s\cap\Cohf}\cup\setcompr{\bigwedge x}{x\in\brack s\setminus\Cohf}}\\
    &\equiv_0\paren{\bigvee\setcompr{\bigwedge x}{x\in\brack s\cap\Cohf}}\vee\paren{\bigvee\setcompr{\bigwedge x}{x\in\brack s\setminus\Cohf}}\\
    &\equiv_1\paren{\bigvee\setcompr{\bigwedge x}{x\in\brack s\cap\Cohf}}\vee\paren{\bigvee\setcompr{\bot}{x\in\brack s\setminus\Cohf}}
    \tag*{by \Cref{lem:not-clique-bot}}\\
    &\equiv_0 \paren{\bigvee\setcompr{\bigwedge x}{x\in\brack s\cap\Cohf}}\vee\bot\equiv_0\bigvee\setcompr{\bigwedge x}{x\in\brack s\cap\Cohf}.\qedhere
  \end{align*}
\end{proof}
\begin{theorem}\label{thm:obs-lat-compl}
  For any pair $s,t\in\Tlat$, we have $\sem s\subseteq \sem t$ if and only if $s\leqq_1 t$. 
\end{theorem}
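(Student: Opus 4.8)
The plan is to prove the two implications of the equivalence separately: the forward one by soundness, and the backward one by the disjunctive-normal-form machinery already assembled above.

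For the soundness direction, $s\leqq_1 t\Rightarrow\sem s\subseteq\sem t$, I would simply invoke the soundness of $\equiv_1$ for the interpretation $\sem-$, which was noted just after the introduction of axiom~\eqref{ax:13}: $\Obs$ is a bounded distributive lattice by \Cref{prop:obs-bdl} and hence validates~\eqref{ax:1}--\eqref{ax:8}, while $\sem{a\wedge b}=\emptyset=\sem\bot$ whenever $a\incoh b$ validates~\eqref{ax:13}. Thus $s\equiv_1 t$ entails $\sem s=\sem t$, and since $s\leqq_1 t$ abbreviates $s\vee t\equiv_1 t$ we obtain $\sem s\cup\sem t=\sem t$, i.e. $\sem s\subseteq\sem t$.

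For the completeness direction, $\sem s\subseteq\sem t\Rightarrow s\leqq_1 t$, I would chain the corollaries together. First, \Cref{cor:brack-to-sem} turns $\sem s\subseteq\sem t$ into $\brack s\cap\Cohf\lhd\brack t$; since any subset of a finite clique is itself a finite clique (as already exploited inside the proof of that corollary), this is the same as $\brack s\cap\Cohf\lhd\brack t\cap\Cohf$. Both of these are finite sets of finite sets of atomic observations, so~\eqref{eq:lhd-ax} applies with $X=\brack s\cap\Cohf$ and $Y=\brack t\cap\Cohf$ and delivers $\bigvee\setcompr{\bigwedge x}{x\in\brack s\cap\Cohf}\leqq_0\bigvee\setcompr{\bigwedge x}{x\in\brack t\cap\Cohf}$; since $\equiv_0$ is contained in $\equiv_1$, the same inequality holds for $\leqq_1$. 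Finally, \Cref{cor:dnf-cliques} identifies, up to $\equiv_1$, the left- and right-hand sides of this inequality with $s$ and $t$ respectively, so $s\leqq_1 t$ follows.

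I do not expect any real obstacle here: every nontrivial step is already packaged as a lemma or corollary, and the argument is essentially bookkeeping. The one point deserving a moment's care is the passage from $\brack s\cap\Cohf\lhd\brack t$ to $\brack s\cap\Cohf\lhd\brack t\cap\Cohf$, needed so that the index set in~\eqref{eq:lhd-ax} matches the one produced by \Cref{cor:dnf-cliques} for $t$; but this replacement is exactly the observation already made inside the proof of \Cref{cor:brack-to-sem}, so it costs nothing.
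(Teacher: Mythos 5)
Your proposal is correct and follows essentially the same route as the paper: soundness for one direction, and for the other \Cref{cor:brack-to-sem}, \eqref{eq:lhd-ax}, and the DNF corollaries chained together. The only cosmetic difference is that you restrict $\brack t$ to $\Cohf$ and apply \Cref{cor:dnf-cliques} on both sides, whereas the paper keeps $\brack t$ unrestricted and closes the chain with \eqref{eq:dnf} on the right; both variants are sound, and your justification for the extra restriction step is exactly the observation already used in \Cref{cor:brack-to-sem}.
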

\begin{proof}
  The right-to-left implication holds by soundness, i.e. the fact that the axioms defining $\equiv_1$ all hold in $\Obs$.
  
  For the other implication, assume $\sem s\subseteq\sem t$.
  We first apply \Cref{cor:brack-to-sem} to obtain that $\brack s\cap\Cohf\lhd\brack t$.
  By~\eqref{eq:lhd-ax} this is equivalent to
  \begin{equation}
    \bigvee\setcompr{\bigwedge x}{x\in\brack s\cap\Cohf}
    \leqq_0\bigvee\setcompr{\bigwedge y}{y\in\brack t}\eqproof{2}
  \end{equation}
  Hence we get:
  \begin{align*}
    s&\equiv_1\bigvee\setcompr{\bigwedge\alpha}{\alpha\in\brack s\cap\Cohf}
       \tag*{by \Cref{cor:dnf-cliques}}\\
     &\leqq_0\bigvee\setcompr{\bigwedge y}{y\in\brack t}
    \tag*{by \eqrefproof{2}}\\
     &\equiv_0t
       \tag*{by \eqref{eq:dnf}}
  \end{align*}
  Summing up, since $\equiv_0~\subseteq~\equiv_1$, we obtain $s\leqq_1t$. 
\end{proof}

\clearpage
\section{(Heyting) observation algebras}
\label{sec:heyting-obs}

\subsection{For arbitrary graphs}
We now add the implication to the signature, considering $\Tobs$ as our set of terms.
\begin{align*}
  s,t\in\Tobs&\Coloneqq a \Mid s\wedge t\Mid s\vee t\Mid  s\impl t\Mid \top\Mid \bot
\end{align*}
The interpretation $\sem -$ is extended to $\Tobs$ by setting
\[\sem{s\impl t}\eqdef\sem s\impl \sem t=\setcompr{\alpha\in\Coh}{\forall \beta\in\sem s,\,\alpha\cup\beta\in\Coh\Rightarrow \alpha\cup\beta\in\sem t}.\]
As we have done in previous section, we want to define an axiomatic equivalence relation that precisely captures semantic equivalence of terms.
Let us consider which axioms should be chosen to achieve this.

As we showed in~\Cref{prop:heyting}, $\Obs$ is a Heyting algebra.
It then makes sense to include axioms~\eqref{ax:1}-\eqref{ax:8} and~\eqref{ax:9}-\eqref{ax:12}, which define Heyting algebras.

\begin{tableequations}
  \noindent%
  \begin{minipage}{.49\linewidth}
    \begin{align}
      &&s\impl s&\equiv \top\axlabel{ax:9}\\
      &&s\wedge\paren{s\impl t}&\equiv s\wedge t\axlabel{ax:10}
    \end{align}
  \end{minipage}\hfill
  \begin{minipage}{.49\linewidth}
    \begin{align}
      &&t\wedge\paren{s\impl t}&\equiv t\axlabel{ax:11}\\
      &&s\impl\paren{t\wedge u}
      &\equiv\paren{s\impl t}\wedge\paren{s\impl u}\axlabel{ax:12}
    \end{align}
  \end{minipage}
\end{tableequations}

However, this cannot be enough, as our observation algebra is not the free Heyting algebra: there are additionnal identities resulting from the coherence graph structure.
In particular, we observe that formulas of the shape $\bigwedge \alpha\impl s$, where $\alpha\in\Cohf$, satisfy some distributivity laws that do not hold for general formulas.
Indeed, for any observations $x,y\in\Obs$, we have that $\set\alpha\impl\paren{x\cup y}=\paren{\set\alpha\impl x}\cup\paren{\set \alpha\impl y}$.
Furthermore, for any atomic observation $a\in\atobs$ such that $a\notin\alpha$, we can see that $\beta\in\set\alpha\impl\set{\set a}\closure$ if and only if either $\beta$ and $\alpha$ are incompatible (i.e. $\beta\cup\alpha\notin\Coh$), or if $a\in\beta\cup\alpha$, which amounts to $a\in\beta$ since we assumed $a\notin\alpha$.
These observations is summarised in the following axioms:
\begin{tableequations}
  \begin{align}
    \forall \alpha\in\Cohf,&&\bigwedge \alpha\impl\paren{s\vee t}
    &\equiv\paren{\bigwedge \alpha\impl s}\vee\paren{\bigwedge \alpha\impl t}
      \axlabel{ax:14}\\
    \forall \alpha\in\Cohf,\,
    \forall a\notin\alpha,
                           &&\bigwedge\alpha\impl a
    &\equiv \paren{\bigwedge\alpha\impl\bot}\vee a.\axlabel{ax:15}
  \end{align}
\end{tableequations}
\begin{remark}
  This pair of axioms holds more widely than stated here.
  In \eqref{ax:14} $\alpha$ does not need to be coherent: the axiom holds for any finite set of atomic observations (but not for arbitrary formulas).
  Similary, the same extension works for \eqref{ax:15} (i.e. $\alpha\in\fpset\atobs$ and $a\notin\alpha$).
\end{remark}

We denote by $\equiv_2$ the least congruence containing the axioms we have stated so far, i.e. \eqref{ax:1}-\eqref{ax:8},~\eqref{ax:13},~\eqref{ax:9}-\eqref{ax:12}, and~\eqref{ax:14}-\eqref{ax:15}.
This relation is sound, since each of these axioms is valid in observation algebras.
However, it fails to be complete, as we shall now demonstrate.

\begin{proposition}
  There are true semantic identities not derivable from the axioms given so far.
\end{proposition}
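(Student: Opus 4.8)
The plan is to exhibit one coherence graph $G$ together with two terms of $\Tobs$ that receive the same interpretation in $\Obs(G)$ yet are distinguished by $\equiv_2$. Take $G$ with vertex set $\atobs(G)=\set{a,b}\cup\setcompr{c_i}{i\in\Nat}$, declaring $a$ and $b$ incoherent with every $c_i$, and every other pair of distinct vertices coherent (so $a\coh b$ and $c_i\coh c_j$ for all $i,j$). I will show that $a\impl\bot$ and $b\impl\bot$ — the pseudocomplements $\overline a$ and $\overline b$ — have equal interpretation, even though only the single vertex $a$ (resp.\ $b$) is ``responsible'' for the value of either.

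\textbf{Semantic validity.} A clique of $G$ that contains some $c_i$ can contain neither $a$ nor $b$, so $\Coh(G)$ consists of the subsets of $\set{a,b}$ together with the nonempty subsets of $\setcompr{c_i}{i}$. Since $\sem a=\set{\set a}\closure=\setcompr{\beta\in\Coh}{a\in\beta}$ has $\set a$ as its $\contains$-least element, for every $\gamma\in\Coh$ and $\beta\in\sem a$ we have $\gamma\cup\beta\supseteq\gamma\cup\set a$; a superset of a non-clique being a non-clique, unfolding $\impl$ with $\sem\bot=\emptyset$ yields $\sem{a\impl\bot}=\setcompr{\gamma\in\Coh}{\gamma\cup\set a\notin\Coh}=\setcompr{\gamma\in\Coh}{\exists i,\,c_i\in\gamma}$. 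The identical computation gives the very same set for $\sem{b\impl\bot}$, hence $\sem{a\impl\bot}=\sem{b\impl\bot}$; in particular $\sem{a\impl\bot}\subseteq\sem{b\impl\bot}$ is a valid semantic containment.

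\textbf{Non-derivability.} This is the heart of the argument: I refute $a\impl\bot\equiv_2 b\impl\bot$ by building a model of $\equiv_2$ that distinguishes the two terms. Let $B=\pset{\set{p,q,r}}$ be the eight-element Boolean algebra viewed as a Heyting algebra, and let $v\colon\Tobs\to B$ be the unique map sending $a\mapsto\set p$, $b\mapsto\set q$, $c_i\mapsto\set r$ for all $i$, sending $\top,\bot$ to the top and bottom of $B$, and commuting with $\wedge,\vee,\impl$. Every defining axiom of $\equiv_2$ holds under $v$: \eqref{ax:1}--\eqref{ax:12} because $B$ is a Heyting algebra; \eqref{ax:13} because the only incoherent pairs of $G$ are the $(a,c_i)$ and $(b,c_i)$ and $\set p\cap\set r=\set q\cap\set r=\emptyset$; and \eqref{ax:14} and \eqref{ax:15} because in any Boolean algebra $x\impl y=\overline x\vee y$, under which \eqref{ax:14} becomes the triviality $\overline x\vee s\vee t=(\overline x\vee s)\vee(\overline x\vee t)$ and \eqref{ax:15} becomes $\overline x\vee a'=\overline x\vee a'$. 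Since $\ker v$ is then a congruence containing all these instances, $\equiv_2$ is contained in $\ker v$, so from $v(a\impl\bot)=\overline{\set p}=\set{q,r}\neq\set{p,r}=\overline{\set q}=v(b\impl\bot)$ we conclude $a\impl\bot\not\equiv_2 b\impl\bot$ (and likewise $a\impl\bot\not\leqq_2 b\impl\bot$). Hence the true semantic identity $\sem{a\impl\bot}=\sem{b\impl\bot}$ — indeed already the containment $\sem{a\impl\bot}\subseteq\sem{b\impl\bot}$ — is not derivable from the axioms stated so far.

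The only point requiring care is the soundness check for the Boolean model, in particular noticing that the graph-specific distributivity axioms \eqref{ax:14} and \eqref{ax:15} degenerate to trivialities in $B$; the rest is routine unfolding of definitions. (The configuration exploited here, a vertex incoherent with infinitely many mutually coherent vertices, is precisely the situation that the tractability hypothesis of the next subsection is designed to rule out.)
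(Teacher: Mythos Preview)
Your proof is correct and takes a genuinely different route from the paper's. The paper argues via an induced-subgraph transfer: every axiom instance of $\equiv_2$ over a graph $G$ is also an instance over any induced supergraph $H$, so derivability transfers from $G$ to $H$. Taking $G$ to be the single-vertex graph $\set\circ$ and $H$ the two-vertex anticlique $\set{\circ,\bullet}$, the identity $\circ\impl\bot\equiv\bot$ holds semantically over $G$ but fails semantically over $H$, hence cannot be $\equiv_2$-derivable over $G$. Your argument instead builds an external Boolean-algebra model, exploiting the pleasant fact that \eqref{ax:14} and \eqref{ax:15} degenerate to \emph{universal} identities in any Boolean algebra (since there $x\impl y=\overline x\vee y$), making the soundness check immediate. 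This is self-contained and avoids the transfer principle. The paper's example is sharper in one respect, though: its graph $G$ is finite, hence FAN, so it shows $\equiv_2$ is incomplete already for FAN graphs --- directly motivating the additional axiom \eqref{ax:16} of the next subsection. Your graph is non-FAN (it is essentially the example the paper uses later to motivate the FAN hypothesis), so this consequence is not visible from your argument.
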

\begin{proof}
  Given a coherence graphs $H$, an induced subgraph is a graph $G$ such that $\atobs(G)\subseteq\atobs(H)$, and such that for any $a,b\in\atobs(G)$ we have $a\coh(F)b\Leftrightarrow a\coh(H)b$.
  It is straightforward to check that in such a case, for any terms $s,t\in\Tobs(G)$ we have:
  \[s\equiv^G_2t\Rightarrow s\equiv^H_2 t.\]
  Let us define two graphs : $G=\tuple{\set \circ,id_{\set\circ}}$ and $H=\tuple{\set{\circ,\bullet},id_{\set{\circ,\bullet}}}$.
  We observe the following:
  \begin{enumerate}
  \item $G$ is an induced sub-graph of $H$.
  \item $\sem{\bullet\impl\bot}_G=\emptyset=\sem{\bot}_G$.
  \item $\sem{\bullet\impl\bot}_H=\set{\set\bullet}=\sem{\bullet}_H$.
  \end{enumerate}
  So, if $\equiv^G_2$ were complete, then $\bullet\impl\bot\equiv^G_2\bot$ would hold.
  $G$ being an induced subgraph of $H$, this would entail $\bullet\impl\bot\equiv^H_2\bot$.
  By soundness of $\sem-_H$, this leads to the a contradiction since $\sem{\bullet\impl\bot}_H=\set{\set\bullet}\neq\emptyset=\sem{\bot}_H$.
\end{proof}

In order to obtain complete axiomatisations, we will restrict our attention to certain classes of graphs.
Two of these will be described in the remainder of this section.
We will also show in the next section a construction to build new tractable graphs (and the corresponding axiomatisations) out of smaller tractable graphs.

Since we will perform several completeness proofs, we start by sketching a general strategy, which we will then instanciate each time.
This strategy may be seen as a generalisation of the notion of normalisation procedure, but where the normalisation is explicitely factored into 1) a translation $\tau$ of terms into an intermediary data structure; 2) a translation $\phi$ of such data structures into terms.
This is more general than normalisations, because no unicity or injectivity requirement is necessary.
It could be recast as a normalisation procedure if one was to move from sets and functions to setoids and equivalence preserving functions.
\begin{definition}
Given a graph $G$ and an equivalence relation $\equiv$ over $\Tobs(G)$, a representation is a triple $\rho=\tuple{R,\tau,\phi}$ such that:
\begin{itemize}
\item $R$ is a set of representatives, with translations $\tau:\Tobs(G)\to R$ and $\phi:R\to\Tobs(G)$;
\item for any term $s\in\Tobs(G)$, we have $\phi(\tau(s))\equiv s$;
\item for any pair of representatives $r,r'\in R$, if $\sem{\phi(r)}=\sem{\phi(r')}$, then $\phi(r)\equiv\phi(r')$.\qedhere
\end{itemize}
\end{definition}
\begin{lemma}\label{lem:repr-impl-compl}
  If there exists a representation of $G$ with respect to some sound relation $\equiv$, then $\equiv$ is complete for $G$.
\end{lemma}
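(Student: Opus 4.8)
The plan is to show that a representation $\rho = \tuple{R,\tau,\phi}$ with respect to a sound relation $\equiv$ gives us both directions needed for completeness, so that $\equiv$ coincides with semantic equivalence on $\Tobs(G)$. Soundness (namely $s \equiv t \Rightarrow \sem s = \sem t$) is assumed; the work is in the converse, so suppose $s,t \in \Tobs(G)$ satisfy $\sem s = \sem t$.

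First I would use the round-trip property $\phi(\tau(s)) \equiv s$ together with soundness to transport the semantic equation along $\tau$. Concretely, set $r = \tau(s)$ and $r' = \tau(t)$. By the first bullet of the definition of a representation, $\phi(r) \equiv s$ and $\phi(r') \equiv t$; by soundness this gives $\sem{\phi(r)} = \sem s$ and $\sem{\phi(r')} = \sem t$. Combining with the hypothesis $\sem s = \sem t$ yields $\sem{\phi(r)} = \sem{\phi(r')}$. Now the third bullet of the definition applies directly to $r, r'$, giving $\phi(r) \equiv \phi(r')$.

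Finally I would close the loop by transitivity and symmetry of $\equiv$ (which is an equivalence relation by hypothesis): from $s \equiv \phi(r)$, $\phi(r) \equiv \phi(r')$, and $\phi(r') \equiv t$ we conclude $s \equiv t$. Together with the soundness direction this shows $\sem s = \sem t \iff s \equiv t$, i.e.\ $\equiv$ is complete for $G$.

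There is no real obstacle here: the lemma is essentially a bookkeeping argument that the two translation maps of a representation let us reduce an arbitrary semantic equality to one between representatives, where the defining property of a representation already guarantees derivability. The only thing to be careful about is to invoke soundness in the right place — it is needed to push $\sem{-}$ across the $\equiv$-steps introduced by $\phi \circ \tau$ — and to note that $\equiv$ being an equivalence relation is what lets us chain the three $\equiv$-facts. All subsequent completeness proofs in the paper will then reduce to exhibiting a concrete representation for the class of graphs at hand.
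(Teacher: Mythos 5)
Your proof is correct and follows essentially the same argument as the paper: transport $\sem s=\sem t$ through the round-trip property $\phi(\tau(\cdot))\equiv\cdot$ using soundness, apply the representation's third condition to get $\phi(\tau(s))\equiv\phi(\tau(t))$, and chain the equivalences. No gaps.
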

\begin{proof}
  Let $\sem s=\sem t$. Since $\phi(\tau(s))\equiv s$ and $\phi(\tau(t))\equiv t$, and we assumed $\equiv$ to be sound, $\sem{\phi(\tau(s))}=\sem{\phi(\tau(t)}$.
  Therefore, $\phi(\tau(s))\equiv \phi(\tau(t)$ from which we can derive $s\equiv\phi(\tau(s))\equiv \phi(\tau(t)\equiv t$.
\end{proof}
\begin{remark}[On decidability]
  Representation triples will also be useful to obtain decidability results.
  Indeed, if we have a representation triple, deciding whether $s\equiv t$ may be achieved by comparing $\sem{\phi\circ\tau(s)}$ and $\sem{\phi\circ\tau(t)}$.
  The later tends to be easier in practice, since we only need to compare ``normalised'' terms. 
\end{remark}
\subsection{Graphs with finite anti-neighbourhoods}
\label{sec:fan-graphs}

Let us introduce the \emph{finite anti-neighbourhood} (a.k.a. FAN) property.
\begin{definition}[FAN]
  A graph is said to have the \emph{finite anti-neighbourhood} property if for any node $a\in\atobs$, the set $\setcompr{b\in\atobs}{a\incoh b}$ is finite.
\end{definition}
Note that any finite graph has the FAN property, as do infinite cliques.

In the remainer of this section, we assume the coherence graph under consideration to have the FAN property.
This allows us to state axiom~\eqref{ax:16}, i.e. the fact that for any finite clique $\alpha\in\Cohf$, the negation of $\alpha$ is the (downclosure of the) set of atomic observations incoherent with some atom in $\alpha$.
\begin{tableequations}
  \begin{align}
    \forall \alpha\in\Cohf,&&\paren{\bigwedge\alpha}\impl\bot
    &\equiv\bigvee\setcompr{a\in\atobs}{\exists b\in\alpha:\,a\incoh b}
      \axlabel{ax:16}
  \end{align}
\end{tableequations}

\begin{figure}[h!]
  \centering
  \begin{tikzpicture}
    \node(0) at (2,1) {0};
    \node(1) at (3,1) {1};
    \node(2) at (4,1) {2};
    \node(n) at (6,1) {$n$};
    \node at ($(2)!.5!(n)$){$\cdots$};
    \node at ($(8,1)!.5!(n)$){$\cdots$};
    \node(b) at (1,0.5) {$\bullet$};
    \node(w) at (1,1.5) {$\circ$};
    \draw[thick] (b) -- (w)
    (0)--(1)
    (1)--(2)
    (0) to[bend right](2)
    (0) to[bend left](n)
    (1) to[bend left](n)
    (2) to[bend left](n);
  \end{tikzpicture}
  \caption{Example of graph without the FAN property}
  \label{fig:ex-no-fan}
\end{figure}
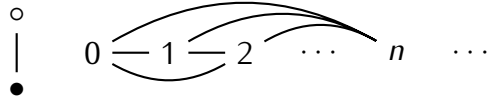
\begin{remark}
  Without the FAN property, the set $\setcompr{a\in O}{\exists b\in \alpha:a\incoh b}$ might be infinite, thus preventing the expression of this law in our syntax.
  This could lead to some unprovable identities.
  Consider for instance the case of a graph represented in~\Cref{fig:ex-no-fan}, with vertices $\Nat\cup\set{\bullet,\circ}$, and where $\bullet\incoh n$ and $\circ\incoh n$ for every $n\in\Nat$, but $\bullet\coh\circ$, and for any two $n,m\in\Nat$ we have $n\coh m$.
  Clearly we have $\set\bullet\impl\emptyset=\set\circ\impl\emptyset=\pset\Nat\setminus\set\emptyset$.
  However, it is highly doubtful this identity would be provable in a finitely presentable system, since its validity relies on an infinite set of witnesses (i.e. $\Nat$).
\end{remark}

We now define the axiomatic equivalence relation $\equiv_{\text{fan}}$ as the smallest congruence on $\Tobs$ satisfying axioms \eqref{ax:1}-\eqref{ax:8}, \eqref{ax:9}-\eqref{ax:12}, \eqref{ax:14}-\eqref{ax:15}, as well as axiom~\eqref{ax:16}.
As we argued already, all of these axioms hold in $\Obs$, meaning $\equiv_{\text{fan}}$ is sound with respect to $\sem -$.
It is straightfoward to show that axiom \eqref{ax:13} is derivable from the others, so we have for $s,t\in\Tlat\paren{\subseteq\Tobs}$ that $s\equiv_2t\Rightarrow s\equiv_{\text{fan}}t$.

To prove completeness, we will build a representation triple, with $R=\Tlat$.
The translation $\phi$ from $R$ to $\Tobs$-terms is simply the identity, since $\Tlat\subseteq\Tobs$.
As such, whenever $\sem{\phi(r)}=\sem{\phi(r')}$, by \Cref{thm:obs-lat-compl} we get $\phi(r)\equiv_1\phi(r')$.
The axioms of observation lattices being either included (in the case of \eqref{ax:1}-\eqref{ax:8}) or derivable (for \eqref{ax:13}) from those of $\equiv_{\text{fan}}$, this entails $\phi(r)\equiv_{\text{fan}}\phi(r')$.

What remains to be done is the construction of $\tau$ from $\Tobs$ to $\Tlat$ such that $s\equiv_{\text{fan}}\phi(\tau(s))=\tau(s)$.
We define it by induction, the only non-trivial case being that of the implication, which we detail in the following lemma.

\begin{lemma}\label{lem:impl-to-lat}
  For any pair of terms $s,t\in\Tlat$, there is a term $I_s^t\in\Tlat$ such that $s\impl t\equiv_{\text{fan}}I_s^t$.
\end{lemma}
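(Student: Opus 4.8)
The plan is to eliminate the implication by normalising both $s$ and $t$ into disjunctive normal form over finite cliques, pushing $\impl$ inward until it is applied only to atoms, and finally discharging the atomic implications via axioms~\eqref{ax:15} and~\eqref{ax:16}. The FAN hypothesis enters exactly once, in the last step, where it is what guarantees that the term produced stays inside $\Tlat$.

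First I would collect the ingredients. Since axioms~\eqref{ax:1}--\eqref{ax:8} and~\eqref{ax:13} are all derivable in $\equiv_{\text{fan}}$, we have $\equiv_1\,\subseteq\,\equiv_{\text{fan}}$, so \Cref{cor:dnf-cliques} gives $s\equiv_{\text{fan}}\bigvee\setcompr{\bigwedge\alpha}{\alpha\in\brack s\cap\Cohf}$ and $t\equiv_{\text{fan}}\bigvee\setcompr{\bigwedge\beta}{\beta\in\brack t\cap\Cohf}$, both joins being finite because $\brack s$ and $\brack t$ are finite sets of finite sets. Since~\eqref{ax:9}--\eqref{ax:12} axiomatise Heyting algebras, I will also freely use the derived adjunction $u\wedge v\leqq_{\text{fan}}w\Leftrightarrow u\leqq_{\text{fan}}v\impl w$ (where $\leqq_{\text{fan}}$ is the preorder induced by $\equiv_{\text{fan}}$) together with its standard consequences $\bot\impl u\equiv_{\text{fan}}\top$, $u\impl\top\equiv_{\text{fan}}\top$, $u\leqq_{\text{fan}}v\Rightarrow u\impl v\equiv_{\text{fan}}\top$, and $(u\vee v)\impl w\equiv_{\text{fan}}(u\impl w)\wedge(v\impl w)$.

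The construction then proceeds in four steps. (i) Using the last identity above, iterated, together with $\bot\impl t\equiv_{\text{fan}}\top$ for the empty-join case, $s\impl t\equiv_{\text{fan}}\bigwedge\setcompr{\bigwedge\alpha\impl t}{\alpha\in\brack s\cap\Cohf}$; as $\Tlat$ is closed under finite $\bigwedge$, it suffices to treat each $\bigwedge\alpha\impl t$ with $\alpha\in\Cohf$. (ii) If $\brack t\cap\Cohf=\emptyset$ then $t\equiv_{\text{fan}}\bot$ and we are reduced to $\bigwedge\alpha\impl\bot$; otherwise axiom~\eqref{ax:14}, iterated (legitimate since $\alpha\in\Cohf$), gives $\bigwedge\alpha\impl t\equiv_{\text{fan}}\bigvee\setcompr{\bigwedge\alpha\impl\bigwedge\beta}{\beta\in\brack t\cap\Cohf}$, again a finite join, so it suffices to treat $\bigwedge\alpha\impl\bigwedge\beta$ with $\alpha,\beta\in\Cohf$ and $\bigwedge\alpha\impl\bot$ with $\alpha\in\Cohf$. (iii) For $\bigwedge\alpha\impl\bigwedge\beta$: if $\beta=\emptyset$ it is $\bigwedge\alpha\impl\top\equiv_{\text{fan}}\top$; otherwise axiom~\eqref{ax:12}, iterated, gives $\bigwedge\alpha\impl\bigwedge\beta\equiv_{\text{fan}}\bigwedge_{b\in\beta}\paren{\bigwedge\alpha\impl b}$, and for each $b\in\beta$ either $b\in\alpha$, whence $\bigwedge\alpha\leqq_0 b$ and so $\bigwedge\alpha\impl b\equiv_{\text{fan}}\top$, or $b\notin\alpha$, whence axiom~\eqref{ax:15} gives $\bigwedge\alpha\impl b\equiv_{\text{fan}}\paren{\bigwedge\alpha\impl\bot}\vee b$. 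Every case has now been reduced to $\Tlat$-terms together with $\bigwedge\alpha\impl\bot$. (iv) Finally, axiom~\eqref{ax:16} gives $\bigwedge\alpha\impl\bot\equiv_{\text{fan}}\bigvee\setcompr{a\in\atobs}{\exists b\in\alpha:\,a\incoh b}$, and the FAN property makes this set finite — it is a union of finitely many finite anti-neighbourhoods — so it is a genuine $\Tlat$-term. Threading all these equivalences back together exhibits $s\impl t$ as $\equiv_{\text{fan}}$-equal to a term of $\Tlat$, which we name $I_s^t$.

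The main obstacle is not conceptual but a matter of careful bookkeeping: one must handle the degenerate cases ($s$ or $t$ provably $\bot$, the empty clique, $\beta=\emptyset$) uniformly, and one must first verify that the Heyting axioms~\eqref{ax:9}--\eqref{ax:12} really do deliver the adjunction and hence the distributivity of $\impl$ over finite joins in its first argument. The single genuinely load-bearing use of the standing hypothesis is step (iv): without FAN the set of atoms incoherent with some member of $\alpha$ need not be finite, and $\bigwedge\alpha\impl\bot$ would not be expressible as a term at all.
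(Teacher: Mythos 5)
Your proposal is correct and follows essentially the same route as the paper's proof: normalise $s$ and $t$ into clique-based disjunctive normal form via \Cref{cor:dnf-cliques}, distribute $\impl$ over the joins using the derived Heyting law and axioms~\eqref{ax:14} and~\eqref{ax:12}, split on $a\in\alpha$ versus $a\notin\alpha$, and discharge the residual atomic cases with~\eqref{ax:15} and~\eqref{ax:16}, with FAN invoked exactly where the paper invokes it, to make $\bigwedge\alpha\impl\bot$ expressible as a finite disjunction. The only differences are cosmetic bookkeeping of the degenerate cases, which you handle correctly.
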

\begin{proof}
  We start by putting both terms in disjunctive normal form, with \Cref{cor:dnf-cliques}.
  We get:
  \[s\impl t\equiv_{\text{fan}}\paren{\bigvee\setcompr{\bigwedge \alpha}{\alpha\in\brack s\cap\Cohf}}\impl\paren{\bigvee\setcompr{\bigwedge \beta}{\beta\in\brack t\cap\Cohf}}\]  
  
  It is well known (and easily derivable) that in any Heyting algebra we have $\paren{s\vee t}\impl u\equiv\paren{s\impl u}\wedge\paren{t\impl u}$.
  By rewriting this law to the right hand side we get:
  \[s\impl t\equiv_{\text{fan}}\bigwedge\setcompr{\bigwedge \alpha\impl\paren{\bigvee\setcompr{\bigwedge \beta}{\beta\in\brack t\cap\Cohf}}}{\alpha\in\brack s\cap\Cohf}\]

  If $\brack t\cap\Cohf=\emptyset$, this expression can be turned into a $\Tlat$ term by~\eqref{ax:16}:
  \begin{align*}
    s\impl t
    &\equiv_{\text{fan}}\bigwedge\setcompr{\bigwedge \alpha\impl\paren{\bigvee\emptyset}}{\alpha\in\brack s\cap\Cohf}\\
    &=\bigwedge\setcompr{\bigwedge \alpha\impl\bot}{\alpha\in\brack s\cap\Cohf}\\
    &\equiv_{\text{fan}}\bigwedge\setcompr{\bigvee\setcompr{b}{\exists a\in\alpha:\,a\incoh b}}{\alpha\in\brack s\cap\Cohf}
  \end{align*}

  If on the other hand $\brack t\cap\Cohf\neq\emptyset$, we may rewrite \eqref{ax:14} to obtain:
  \[s\impl t\equiv_{\text{fan}}\bigwedge\setcompr{\bigvee\setcompr{\bigwedge \alpha\impl\bigwedge \beta}{\beta\in\brack t\cap\Cohf}}{\alpha\in\brack s\cap\Cohf}\]
  With \eqref{ax:12} we get:
  \[s\impl t\equiv_{\text{fan}}
    \bigwedge\setcompr{
      \bigvee\setcompr{
        \bigwedge\setcompr{
          \bigwedge \alpha\impl a
        }{a\in\beta}
      }{\beta\in\brack t\cap\Cohf}
    }{\alpha\in\brack s\cap\Cohf}\]
  Now, we notice that for any $\alpha\in\Cohf$ and $a\in\alpha$, we have $\top\wedge\bigwedge\alpha\equiv_{\text{fan}}\bigwedge\alpha\leqq_{\text{fan}}a$, hence $\top\leqq_{\text{fan}}\bigwedge\alpha\impl a$.
  This means that we may derive: 
  \begin{align*}
   \bigwedge\setcompr{
    \bigwedge\alpha\impl a
    }{a\in\beta}
    &=\bigwedge\paren{\setcompr{
    \bigwedge\alpha\impl a
    }{a\in\beta\setminus\alpha}
      \cup
      \setcompr{
    \bigwedge\alpha\impl a
    }{a\in\beta\cap\alpha}}\\
    &\equiv_{\text{fan}}\bigwedge\paren{\setcompr{
    \bigwedge\alpha\impl a
    }{a\in\beta\setminus\alpha}
      \cup
      \setcompr{
    \top
      }{a\in\beta\cap\alpha}}\\
    &\equiv_{\text{fan}}\paren{\bigwedge\setcompr{
      \bigwedge\alpha\impl a
      }{a\in\beta\setminus\alpha}}
      \wedge
      \paren{\bigwedge
      \setcompr{
      \top
      }{a\in\beta\cap\alpha}}\\
    &\equiv_{\text{fan}}\paren{\bigwedge\setcompr{
      \bigwedge\alpha\impl a
      }{a\in\beta\setminus\alpha}}
      \wedge\top\\
    &\equiv_{\text{fan}}\bigwedge\setcompr{
      \bigwedge\alpha\impl a
      }{a\in\beta\setminus\alpha}.
  \end{align*}
  If we replace this in the expression for $s\impl t$, and further rewrite \eqref{ax:15} the identity becomes: 
  \[s\impl t\equiv_{\text{fan}}
    \bigwedge\setcompr{
      \bigvee\setcompr{
        \bigwedge\setcompr{
          \paren{\bigwedge \alpha\impl \bot}\vee a
        }{a\in\beta\setminus \alpha}
      }{\beta\in\brack t\cap\Cohf}
    }{\alpha\in\brack s\cap\Cohf}\]
  Finally, we use \eqref{ax:16} to get a $\Tlat$-term $s\impl t\equiv_{\text{fan}}$
  \[
    \bigwedge\setcompr{
      \bigvee\setcompr{
        \bigwedge\setcompr{
          \paren{\bigvee\setcompr{b}{\exists c\in\alpha:\,b\incoh c}}\vee a
        }{a\in\beta\setminus \alpha}
      }{\beta\in\brack t\cap\Cohf}
    }{\alpha\in\brack s\cap\Cohf}.\]
    \qedhere
  
\end{proof}
\NewDocumentCommand\nf{d()}{
  \IfValueTF{#1}{%
    \mathrm{nf}{\paren{#1}}
  }{%
    \mathrm{nf}
  }
}
  
\begin{corollary}\label{cor:fan-repr}
  There is a representation $\tuple{\Tlat,\tau,id_{\Tlat}}$ of $G$ with respect to $\equiv_{\text{fan}}$. 
\end{corollary}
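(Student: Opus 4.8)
The plan is to assemble the representation triple from pieces already in place, the genuine work having been discharged in \Cref{lem:impl-to-lat}. Set $R=\Tlat$ and $\phi=id_{\Tlat}$, which is a legitimate map into $\Tobs(G)$ since $\Tlat\subseteq\Tobs(G)$, and observe that the interpretation $\sem-$ on $\Tobs$ restricts to the one on $\Tlat$. Two things then remain: to construct $\tau\colon\Tobs(G)\to\Tlat$ with $\tau(s)\equiv_{\text{fan}}s$ for every term $s$, and to verify that $\sem r=\sem{r'}$ implies $r\equiv_{\text{fan}}r'$ for all $r,r'\in\Tlat$.

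For the latter, if $\sem r=\sem{r'}$ with $r,r'\in\Tlat$, then \Cref{thm:obs-lat-compl} gives $r\leqq_1 r'$ and $r'\leqq_1 r$, whence $r\equiv_1 r'$. Now $\equiv_1$ is generated by~\eqref{ax:1}--\eqref{ax:8} together with~\eqref{ax:13}, all of which are available for $\equiv_{\text{fan}}$ (the former are among its defining axioms, the latter is a derived law, as noted above); so every $\equiv_1$-derivation replays as an $\equiv_{\text{fan}}$-derivation and $r\equiv_{\text{fan}}r'$.

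For the construction of $\tau$, define it by structural recursion on $\Tobs$-terms: $\tau(a)=a$, $\tau(\top)=\top$, $\tau(\bot)=\bot$, $\tau(s\vee t)=\tau(s)\vee\tau(t)$, $\tau(s\wedge t)=\tau(s)\wedge\tau(t)$, and $\tau(s\impl t)=I_{\tau(s)}^{\tau(t)}$, where $I$ is the operator supplied by \Cref{lem:impl-to-lat}. A simultaneous induction shows both that $\tau$ lands in $\Tlat$ --- in the implication case this is exactly what makes \Cref{lem:impl-to-lat} applicable, since $\tau(s)$ and $\tau(t)$ are already $\Tlat$-terms --- and that $\tau(s)\equiv_{\text{fan}}s$: the base cases are immediate, the lattice connectives follow from the induction hypotheses and the fact that $\equiv_{\text{fan}}$ is a congruence, and for the implication congruence plus the induction hypotheses give $s\impl t\equiv_{\text{fan}}\tau(s)\impl\tau(t)$, which by \Cref{lem:impl-to-lat} is $\equiv_{\text{fan}}$-equivalent to $I_{\tau(s)}^{\tau(t)}=\tau(s\impl t)$. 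The only point requiring care --- already built into the recursion --- is that \Cref{lem:impl-to-lat} be invoked solely on $\Tlat$-terms; beyond that no step presents a genuine obstacle. Together with \Cref{lem:repr-impl-compl}, this also immediately yields completeness of $\equiv_{\text{fan}}$.
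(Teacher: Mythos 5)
Your proposal is correct and follows essentially the same route as the paper: take $R=\Tlat$ with $\phi=id$, obtain the second representation condition from \Cref{thm:obs-lat-compl} together with the fact that the axioms of $\equiv_1$ are included in or derivable from those of $\equiv_{\text{fan}}$, and define $\tau$ by structural recursion with the implication case handled by \Cref{lem:impl-to-lat}. Your write-up merely spells out details (the congruence steps and the inductive verification that $\tau(s)\equiv_{\text{fan}}s$) that the paper leaves implicit in the surrounding text.
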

\begin{proof}
  $\tau$ is defined by induction on $s$. Every case is trivial, with the implication being handled by \Cref{lem:impl-to-lat} as $\tau(s\impl t)\eqdef I^{\tau(t)}_{\tau(s)}$. 
\end{proof}

\begin{theorem}\label{thm:obs-alg-compl}
  For any pair $s,t\in\Tobs$, we have $s\leqq_{\text{fan}}t$ if and only if $\sem s\subseteq\sem t$.
\end{theorem}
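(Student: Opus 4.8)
The plan is to assemble the theorem from the infrastructure already built, in particular \Cref{cor:fan-repr} and \Cref{lem:repr-impl-compl}, splitting into the two directions.

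For the right-to-left implication, assume $s\leqq_{\text{fan}}t$, i.e. $s\vee t\equiv_{\text{fan}}t$. Each axiom generating $\equiv_{\text{fan}}$ --- the bounded-distributive-lattice axioms \eqref{ax:1}--\eqref{ax:8}, the Heyting axioms \eqref{ax:9}--\eqref{ax:12}, the distributivity axioms \eqref{ax:14}--\eqref{ax:15}, and the FAN axiom \eqref{ax:16} --- is valid in $\Obs$ under $\sem -$, as noted when each was introduced; hence $\equiv_{\text{fan}}$ is sound and $\sem{s\vee t}=\sem t$. Since $\sem{s\vee t}=\sem s\cup\sem t$, this says exactly $\sem s\subseteq\sem t$.

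For the left-to-right implication, I first invoke \Cref{cor:fan-repr}, which provides a representation $\tuple{\Tlat,\tau,id_{\Tlat}}$ of $G$ with respect to $\equiv_{\text{fan}}$. As $\equiv_{\text{fan}}$ is sound (just argued), \Cref{lem:repr-impl-compl} applies and yields that $\equiv_{\text{fan}}$ is complete for $G$: for all $u,v\in\Tobs$, $\sem u=\sem v$ implies $u\equiv_{\text{fan}}v$. Now if $\sem s\subseteq\sem t$, then $\sem{s\vee t}=\sem s\cup\sem t=\sem t$, so $s\vee t\equiv_{\text{fan}}t$ by completeness, which is precisely $s\leqq_{\text{fan}}t$.

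No genuine difficulty remains at this stage: the substantive content was the construction of $\tau:\Tobs\to\Tlat$, whose sole delicate case --- the implication --- was discharged in \Cref{lem:impl-to-lat} using the FAN hypothesis via \eqref{ax:16} together with \eqref{ax:14}--\eqref{ax:15}, while the half $\phi=id_{\Tlat}$ of the representation rests on the lattice completeness result \Cref{thm:obs-lat-compl}. The only point requiring a touch of care is the passage from the ``completeness for $=$'' packaged by \Cref{lem:repr-impl-compl} to the ``completeness for $\subseteq$'' demanded by the theorem, which is the routine equivalence $\sem s\subseteq\sem t\iff\sem{s\vee t}=\sem t$ combined with $s\leqq_{\text{fan}}t\iff s\vee t\equiv_{\text{fan}}t$.
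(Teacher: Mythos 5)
Your proposal is correct and follows the paper's own route: the paper likewise proves the theorem by combining \Cref{lem:repr-impl-compl} with the representation triple from \Cref{cor:fan-repr}, with soundness handling the easy direction. You merely spell out the routine bridge between completeness for $=$ and completeness for $\subseteq$ (via $s\leqq_{\text{fan}}t\iff s\vee t\equiv_{\text{fan}}t$ and $\sem{s\vee t}=\sem s\cup\sem t$), which the paper leaves implicit.
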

\begin{proof}
 By \Cref{lem:repr-impl-compl} and \Cref{cor:fan-repr}.
\end{proof}

\begin{lemma}
  The problem of testing whether $s\leqq_{\text{fan}}t$ is decidable.
\end{lemma}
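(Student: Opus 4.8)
The plan is to turn the problem into a comparison of two \emph{lattice} terms, for which \Cref{cor:brack-to-sem} already provides an effective criterion. Throughout I assume, as is implicit in the statement, that $G$ is effectively presented: the relation $\coh$ (equivalently $\incoh$) is decidable, and each finite anti-neighbourhood $\setcompr{b\in\atobs}{a\incoh b}$ is computable from $a$. Granting this, the procedure is: given $s,t\in\Tobs$, compute the lattice terms $\tau(s),\tau(t)\in\Tlat$ from the representation of \Cref{cor:fan-repr}; then, since $\equiv_{\text{fan}}$ is sound and $\tau(s)\equiv_{\text{fan}}s$, $\tau(t)\equiv_{\text{fan}}t$, we have $\sem s=\sem{\tau(s)}$ and $\sem t=\sem{\tau(t)}$, so by \Cref{thm:obs-alg-compl}
\[
  s\leqq_{\text{fan}} t
  \iff \sem s\subseteq\sem t
  \iff \sem{\tau(s)}\subseteq\sem{\tau(t)};
\]
and finally, as $\tau(s),\tau(t)$ are lattice terms, \Cref{cor:brack-to-sem} rewrites the right-hand side as $\brack{\tau(s)}\cap\Cohf\lhd\brack{\tau(t)}$.

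First I would check that $\tau$ is computable. It is defined by structural recursion with every case trivial except the implication, handled by $\tau(s\impl t)\eqdef I^{\tau(t)}_{\tau(s)}$, and the proof of \Cref{lem:impl-to-lat} builds $I^{\tau(t)}_{\tau(s)}$ out of finitely many effective steps: forming the finite sets of finite sets $\brack{\tau(s)},\brack{\tau(t)}$; discarding non-cliques (decidable, since $\coh$ is); and, for each atom $c$ appearing in $\brack{\tau(s)}$, listing $\setcompr{b}{b\incoh c}$, which is finite by the FAN property and computable by assumption. Since the recursion treats strict subterms first, each implication node operates on already-constructed lattice terms over a finite, known set of atoms, so the procedure terminates and returns an explicit $\tau(s)\in\Tlat$.

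Then I would observe that deciding $\brack{\tau(s)}\cap\Cohf\lhd\brack{\tau(t)}$ is straightforward: $\brack{-}$ is computed by an obvious recursion producing a finite set of finite sets of atoms; intersecting with $\Cohf$ removes exactly the members that fail the decidable clique test; and $X\lhd Y$ is the bounded statement $\forall x\in X\,\exists y\in Y:\,x\supseteq y$ over finite data, hence decidable. This completes the algorithm. The only real subtlety — and the point worth stressing — is the effectiveness hypothesis on $G$: the FAN property alone only gives \emph{finiteness} of anti-neighbourhoods, whereas the algorithm needs them uniformly \emph{computable} (and $\coh$ decidable); under these standing assumptions every step above is plainly effective, and nothing else in the argument carries any mathematical difficulty.
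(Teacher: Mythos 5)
Your proof is correct and follows essentially the same route as the paper's: reduce $s\leqq_{\text{fan}}t$ to $\sem{\tau(s)}\subseteq\sem{\tau(t)}$ via the representation of \Cref{cor:fan-repr}, note that $\tau$ is computable from the construction in \Cref{lem:impl-to-lat}, and decide the resulting finite containment test on $\brack{\tau(s)}\cap\Cohf$ and $\brack{\tau(t)}$. The only difference is that you make explicit the effectiveness hypotheses on $G$ (decidable coherence, computable anti-neighbourhoods) that the paper leaves implicit, which is a reasonable clarification rather than a divergence.
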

\begin{proof}
  As remarked earlier, it is enough to compare $\sem{\tau(s)}$ and $\sem{\tau(t)}$.
  Using the notations of \Cref{sec:obs-lat}, we can see that $\sem{\tau(s)}=\paren{\brack{\tau(s)}\cap\Cohf}\closure$, with $\brack{\tau(s)}\cap\Cohf$ being a finite set of finite cliques.
  That set is easily computable, since $\tau(s)$ itself is computable (see the proof of \Cref{lem:impl-to-lat}).
  So, $\sem{\tau(s)}\subseteq\sem{\tau(t)}$ is equivalent to the following statement:
  \[\forall \alpha \in \brack{\tau(s)}\cap\Cohf,\,\exists \beta\in\brack{\tau(t)}\cap\Cohf:\;\beta\subseteq\alpha.\]
  Since we only quantify over finite sets, and since we only test containment of finite sets, this is decidable.
\end{proof}
\begin{remark}
  The decidability result we have just established does not mention complexity.
  This would not be to our advantage, since the computation of $\tau(s)$ is hopelessly costly: not only does is involve $\brack{s}$, which is already a large object, but implication steps generate exponential blow-ups (so nested implications yield towers of exponentials).
  It is possible this is sub-optimal, although we doubt a general algorithm of tractable complexity exists.
  Any practical application should thus focus on more restricted classes of graphs, that should be expressive enough for the use cases, and restrictive enough so efficent algorithms exist.
\end{remark}
\subsection{Infinite anticliques}
\label{sec:infinite-anticliques}

We now consider an infinite\footnote{Recall that finite graphs have the FAN property, and are thus covered by the previous sub-section.} set $\omega$, and define a coherence graph $\Omega\eqdef\tuple{\omega,id_\omega}$.
\begin{tableequations}
  We notice right away that cliques in this graph are either empty or singleton sets.
  Furthermore, we observe that for any element $a\in\omega$, the negation of $a$ is the set of all singleton cliques whose element is not $a$, i.e. we have:
  \[\sem{a\impl\bot}=\setcompr{\set b}{b\neq a}.\]
  As such, the term $a\vee\paren{a\impl\bot}$ represents the set of all non-empty cliques.
  Since this set does not depend on $a$, this leads to the following axiom:
  \begin{align}
    \forall a,b\in\omega,&&a\vee\paren{a\impl\bot}\equiv b\vee\paren{b\impl\bot}\axlabel{ax:17}
  \end{align}

  In Heyting algebras, the double negation is not in general the identity.
  Observation algebras are no exception, but in this graph the correspondence does hold for sets of singleton cliques as we will now describe.

  This holds trivially for the empty set (the negation exchanges $\top$ and $\bot$ as one would expect).
  If on the other hand $A\in\pset \omega$ is a non-empty set of elements, the set $x\eqdef\setcompr{\set a}{a\in A}$ is an observation (it is trivially down-closed, since $\alpha\preceq\set a$ entails $\alpha=\set a$).
  We may thus perform the following computation:
  \begin{align*}
    \alpha\in x\impl\emptyset
    &\Leftrightarrow \forall \beta\in x,\,\alpha\cup\beta\notin\Coh(\Omega)\\
    &\Leftrightarrow \forall \beta\in x,\,\exists a\neq b:\alpha=\set a\wedge \beta=\set b\\
    &\Leftrightarrow \forall b\in A,\,\exists a\neq b:\alpha=\set a\\
    &\Leftrightarrow \exists b,\,\alpha=\set b\wedge b\notin A
  \end{align*}
  Note the last equivalence relies on $A$ being non-empty.
  This means that $x\impl\emptyset=\setcompr{\set a}{a\notin A}$, i.e. this observation is isomorphic to the complement of $A$ with respect to $\omega$.
  If $A$ is chosen to be non-trivial (non-empty nor equal to $\omega$), then its negation is itself non-empty, and we may use the same computation a second time to get $\paren{x\impl\emptyset}\impl\emptyset=x$.
  In particular, since $\omega$ was assumed infinite, for any finite set $A$ we get the following axiom:
  \begin{align}
    \forall A\in\fpset \omega,&&\paren{\paren{\bigvee A}\impl\bot}\impl \bot\equiv\bigvee A\axlabel{ax:18}
  \end{align}
\end{tableequations}
This defines the relation $\equiv_{\omega}$, as the least congruence satisfying  \eqref{ax:1}-\eqref{ax:8}, \eqref{ax:13}, \eqref{ax:9}-\eqref{ax:12}, \eqref{ax:14}-\eqref{ax:15}, as well as axioms~\eqref{ax:17} and \eqref{ax:18}.
As before, soundness of this relation is trivial from our results so far.

\newcommand\Top{\mathrm{Top}}
\NewDocumentCommand\Fin{ d() }{\IfValueTF{#1}{\mathrm{Fin}{\paren{#1}}}{\mathrm{Fin}}}
\NewDocumentCommand\CoFin{ d() }{\IfValueTF{#1}{\mathrm{CoFin}{\paren{#1}}}{\mathrm{CoFin}}}
We now set out to define a representation triple for $\Omega$ with respect to $\equiv_\omega$.
Our set of representatives $R$ can be described as $1+2\times\fpset \omega$, or equivalently as a datatype with three constructors: a constant $\Top:R$, as well as $\Fin,\CoFin:\fpset \omega\to R$.
To give intuitive meaning to these, we define their interpretations as follows:
\begin{align*}
  \sem{\Top}_R&\eqdef\Coh(\Omega)
  &\sem{\Fin(A)}_R&\eqdef\setcompr{\set a}{a\in A}
  &\sem{\CoFin(A)}_R&\eqdef\setcompr{\set a}{a\notin A}
\end{align*}
$\Top$ is interpreted as the full observation, $\Fin(A)$ as the finite set of singleton cliques whose elements are in $A$, and $\CoFin(A)$ as the co-finite set of singleton cliques whose elements are not in $A$.

This intuition leads to the following translation from $R$ to $\Tobs$:
\begin{align*}
  \phi(\Top)&\eqdef \top
  &\phi(\Fin(A))&\eqdef\bigvee A
  &\phi(\CoFin(A))&\eqdef\begin{cases}(\bigvee A)\to\bot&if~A\neq\emptyset\\a\vee(a\to\bot)&if~A=\emptyset
  \end{cases}
\end{align*}
where $a\in\omega$ is some arbitrary atomic observation. Thanks to \eqref{ax:17}, the choice of $a$ does not matter (and $\omega$ being infinite, it is not prepostrous to imagine one could find such an element).
It is straighforward to check that our ``intuitive'' interpretation is consistent with this translation, in the sense that $\sem{\phi(r)}=\sem r_R$.
Proving that $\sem{\phi(r)}=\sem{\phi(r')}$ entails $\phi(r)\equiv_\omega\phi(r')$ is also very simple thanks to the following remarks:
\begin{enumerate}
\item\label{item:1} $\emptyset \in\sem{r}_R$ if and only if $r=\Top$;
\item\label{item:2} for any pair of finite sets $A,B$, since $\sem{\Fin(A)}_R$ is finite and $\sem{\CoFin(B)}_R$ is not, we have
$$\sem{\Fin(A)}_R\neq\sem{\CoFin(B)}_R;$$
\item\label{item:3} for any finite set $A$, we have $A=\setcompr{a}{\set a\in\sem{\Fin(A)}_R}=\setcompr{a}{\set a\notin\sem{\CoFin(A)}_R}$.
\end{enumerate}
Together, these remark entail that $\sem-_R$ is injective, thus $\sem{\phi(r)}=\sem{\phi(r')}$ entails $\sem r_R=\sem{\phi(r)}=\sem{\phi(r')}=\sem{r'}_R$, thus $r=r'$.
By reflexivity, this means $\phi(r)\equiv_\omega\phi(r')$.
\begin{remark}
  Since $\sem{}_R$ is injective and  $\sem{}\circ\phi=\sem{}_R$, the question of testing given $r$ and $r'$ whether $\sem{\phi(r)}$ is equal to $\sem{\phi(r')}$ is decidable by simply comparing $r$ and $r'$.
  Provided an appropriate encoding of sets, this can be done efficiently (in linear time if elements are represented as integers~\cite{KATRIEL2004175}).
\end{remark}

So, what remains to be considered is the translation from $\Tobs$ terms to $R$, which we shall build by induction.
The constants and atomic cases are straightforward:
\begin{align*}
  \tau(\top)&\eqdef \Top
  &\tau(\bot)&\eqdef \Fin(\emptyset)
  &\tau(a)&\eqdef \Fin(\set a).
\end{align*}
Disjuntions and conjunctions are also relatively easy to handle:
\begin{align*}
  \tau(s\vee t)&\eqdef \tau(s)\oplus \tau(t)
  &\tau(s\wedge t)&\eqdef \tau(s)\otimes \tau(t)\\
  \Top\oplus r=r\oplus\Top&\eqdef \Top
  &\Top\otimes r=r\otimes\Top&\eqdef r\\
  \Fin(A)\oplus \Fin(B)&\eqdef \Fin(A\cup B)
  &\Fin(A)\otimes \Fin(B)&\eqdef \Fin(A\cap B)\\
  \CoFin(A)\oplus \CoFin(B)&\eqdef \CoFin(A\cap B)
  &\CoFin(A)\otimes \CoFin(B)&\eqdef \CoFin(A\cup B)\\
  \Fin(A)\oplus \CoFin(B)&\eqdef \CoFin(B\setminus A)
  &\Fin(A)\otimes \CoFin(B)&\eqdef \Fin(A\setminus B)\\
  \CoFin(A)\oplus \Fin(B)&\eqdef \CoFin(A\setminus B)
  &\CoFin(A)\otimes \Fin(B)&\eqdef \Fin(B\setminus A)
\end{align*}

We dispatch the only remaing case, that of the implication,  by setting 
$\tau(s\impl t)\eqdef \tau(s)\ominus \tau(t)$ and defining:
\begin{align*}
  r\ominus \Top&\eqdef \Top
  &
  \Top\ominus r&\eqdef r\\
  \Fin(A)\ominus \Fin(B)&\eqdef
                          \begin{cases}
                            \Top&\text{if}~A\subseteq B\\
                            \Fin(A\setminus B)&\text{otherwise}
                          \end{cases}
  \\
  \Fin(A)\ominus \CoFin(B)&\eqdef 
                              \begin{cases}
                                \Top&\text{if}~A\cap B=\emptyset\\
                                \CoFin(B\cap A)&\text{otherwise}
                              \end{cases}
   \\
  \CoFin(A)\ominus \CoFin(B)&\eqdef
                              \begin{cases}
                                \Top&\text{if}~B\subseteq A\\
                                \CoFin(B\setminus A)&\text{otherwise}
                              \end{cases}\\
  \CoFin(A)\ominus \Fin(B)&\eqdef \Fin(A\cup B)
\end{align*}
Note that the case analyses in this definition are necessary to ensure that whenever $\sem r_R\subseteq\sem{r'}_R$ we have $r\ominus r'=\Top$.

Now we need to check that $\phi\circ\tau(s)\equiv_\omega s$.
This is done by induction on $s$, the constant and atomic cases being trivial.
The key steps consist in checking that for arbitrary representations $r,r'\in R$ we have: 
\begin{align*}
\phi\paren{r\oplus r'}&\equiv_\omega\phi\paren{r}\vee\phi\paren{r'}&
\phi\paren{r\otimes r'}&\equiv_\omega\phi\paren{r}\wedge\phi\paren{r'}&
\phi\paren{r\ominus r'}&\equiv_\omega\phi\paren{r}\impl\phi\paren{r'}.
\end{align*}
This proof is a (lenghty) case analysis.
It has been verified in Rocq, and as such we omit it.

\begin{theorem}\label{thm:compl-ac}
  For any pair of terms $s,t\in\Tobs(\Omega)$, we have $s\leqq_\omega t$ if and only if $\sem s\subseteq\sem t$.
\end{theorem}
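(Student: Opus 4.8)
The plan is to follow the same route as the FAN case (\Cref{thm:obs-alg-compl}, \Cref{cor:fan-repr}): assemble the data produced in this subsection into a representation of $\Omega$ with respect to $\equiv_\omega$, invoke \Cref{lem:repr-impl-compl} to get completeness of $\equiv_\omega$, and combine this with the already-noted soundness to obtain the biconditional.

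For the left-to-right direction no new argument is needed: $s\leqq_\omega t$ unfolds to $s\vee t\equiv_\omega t$, and since every axiom defining $\equiv_\omega$ holds in $\Obs$ we get $\sem{s\vee t}=\sem t$, i.e. $\sem s\cup\sem t=\sem t$, i.e. $\sem s\subseteq\sem t$. For the converse, the remaining obligation is to check that $\tuple{R,\tau,\phi}$ is a representation of $\Omega$, which means verifying the three clauses of the definition. That $\tau:\Tobs(\Omega)\to R$ and $\phi:R\to\Tobs(\Omega)$ have the right types is immediate. That $\phi(\tau(s))\equiv_\omega s$ is an induction on $s$: the constant and atomic cases hold by definition of $\tau$, while the three structural cases reduce, respectively, to the identities $\phi(r\oplus r')\equiv_\omega\phi(r)\vee\phi(r')$, $\phi(r\otimes r')\equiv_\omega\phi(r)\wedge\phi(r')$ and $\phi(r\ominus r')\equiv_\omega\phi(r)\impl\phi(r')$ for all $r,r'\in R$.

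The third clause, that $\sem{\phi(r)}=\sem{\phi(r')}$ implies $\phi(r)\equiv_\omega\phi(r')$, is handled with the remarks already recorded: since $\sem{\phi(r)}=\sem{r}_R$ it suffices to see that $\sem{-}_R$ is injective, and this is precisely observations \ref{item:1}, \ref{item:2} and \ref{item:3} --- $\Top$ is the unique representative whose interpretation contains the empty clique; $\sem{\Fin(A)}_R$ (finite) can never equal $\sem{\CoFin(B)}_R$ (cofinite); and within each remaining family the parameter is recovered by $A=\setcompr{a}{\set a\in\sem{r}_R}$ (for $\Fin$) or $A=\setcompr{a}{\set a\notin\sem{r}_R}$ (for $\CoFin$). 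Hence $\sem{\phi(r)}=\sem{\phi(r')}$ forces $r=r'$, and reflexivity gives $\phi(r)\equiv_\omega\phi(r')$. With all three clauses in place, \Cref{lem:repr-impl-compl} yields $\sem s=\sem t\Rightarrow s\equiv_\omega t$; noting that $\sem s\subseteq\sem t$ is equivalent to $\sem{s\vee t}=\sem t$, completeness then gives $s\vee t\equiv_\omega t$, that is $s\leqq_\omega t$.

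I expect the genuinely laborious part to be the three homomorphism-up-to-$\equiv_\omega$ identities needed for clause two, and in particular the cases for $\ominus$: there the internal case splits ($A\subseteq B$, $A\cap B=\emptyset$, $B\subseteq A$, and so on) must be reconciled with the behaviour of $\impl$ on finite and cofinite sets of singleton cliques, invoking \eqref{ax:17} to discard the arbitrary witness hidden inside $\phi(\CoFin(\emptyset))$ and \eqref{ax:18} for the double-negation pattern behind $\CoFin(A)\ominus\Fin(B)$. This is a long but entirely mechanical case analysis (and has been discharged in Rocq), so it is an obstacle of bulk rather than of insight; every other ingredient is already assembled.
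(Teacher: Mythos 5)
Your proposal is correct and follows essentially the same route as the paper: the paper's proof of \Cref{thm:compl-ac} is exactly an appeal to \Cref{lem:repr-impl-compl}, with the representation triple $\tuple{R,\tau,\phi}$, the injectivity of $\sem{-}_R$ via observations \ref{item:1}--\ref{item:3}, and the (Rocq-verified) homomorphism identities for $\oplus$, $\otimes$, $\ominus$ all assembled in the surrounding text just as you describe. Your explicit reduction of the containment statement to an equality via $\sem s\subseteq\sem t\Leftrightarrow\sem{s\vee t}=\sem t$ is a small step the paper leaves implicit, and it is handled correctly.
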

\begin{proof}
  By \Cref{lem:repr-impl-compl}.
\end{proof}

\begin{lemma}
  The problem of testing whether $s\equiv_\omega t$ is decidable.
\end{lemma}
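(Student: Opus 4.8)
The plan is to reduce the question to testing equality of two elements of the representative set $R$. Given $s,t\in\Tobs(\Omega)$, the algorithm computes the representatives $\tau(s),\tau(t)\in R$ and accepts precisely when $\tau(s)=\tau(t)$. I would first argue that this procedure terminates: $\tau$ is defined by structural recursion on the term, and each of the operations $\oplus$, $\otimes$, $\ominus$ used in the inductive cases is computable, since it only performs Boolean operations ($\cup$, $\cap$, $\setminus$) on finite subsets of $\omega$ together with a handful of decidable tests such as $A\subseteq B$ or $A\cap B=\emptyset$ --- all effective once $\omega$ is equipped with a decidable equality (as it is in the formalisation). For the same reason equality on $R$ is decidable: two representatives are equal exactly when they carry the same constructor and, in the $\Fin$/$\CoFin$ cases, equal finite sets.

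Next I would establish correctness. By soundness of $\equiv_\omega$ together with \Cref{thm:compl-ac}, we have $s\equiv_\omega t$ if and only if $\sem s=\sem t$. Now $\phi(\tau(s))\equiv_\omega s$, so soundness gives $\sem s=\sem{\phi(\tau(s))}$, and since $\sem{\phi(r)}=\sem r_R$ for every $r\in R$ this equals $\sem{\tau(s)}_R$; likewise $\sem t=\sem{\tau(t)}_R$. Hence $s\equiv_\omega t$ iff $\sem{\tau(s)}_R=\sem{\tau(t)}_R$. Finally, as observed just before the statement, $\sem-_R$ is injective (using remarks~\ref{item:1}--\ref{item:3}), so $\sem{\tau(s)}_R=\sem{\tau(t)}_R$ is equivalent to $\tau(s)=\tau(t)$. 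Chaining these equivalences shows the procedure is correct.

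There is no real obstacle here: the heavy lifting was already done in setting up the representation triple, in particular in proving that $\sem-_R$ is injective and that $\phi\circ\tau$ is $\equiv_\omega$-equivalent to the identity. The only point that needs a little care is making explicit that the case splits in the definition of $\ominus$ (and the set operations throughout) are effective, i.e. that $\omega$ must be taken with decidable equality; granting this, decidability --- indeed, with an appropriate encoding of finite sets, efficiency comparable to that of the remark following the construction --- is immediate.
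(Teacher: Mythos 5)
Your proof is correct and follows essentially the same route as the paper: reduce $s\equiv_\omega t$ to $\sem s=\sem t$ via soundness and completeness, then use $\phi\circ\tau(s)\equiv_\omega s$, $\sem{\phi(r)}=\sem r_R$ and the injectivity of $\sem{-}_R$ to conclude that the test amounts to deciding $\tau(s)=\tau(t)$. Your added remarks on the effectiveness of $\tau$ and of equality on $R$ (decidable equality on $\omega$, finite-set operations) simply make explicit what the paper leaves implicit.
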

\begin{proof}
  This is equivalent to checking if $\sem{s}=\sem t$, i.e. $\sem{\phi\circ\tau(s)}=\sem{\phi\circ\tau(t)}$.
  As noticed earlier, this amount to the equality of $\tau(s)$ and $\tau(t)$, which is decidable.
\end{proof}
\begin{remark}
  The complexity of this problem is somewhat better that in the previous case.
  Indeed, checking whether $\tau(s)=\tau(t)$ can be done efficiently with respect to the sizes of $\tau(s)$ and $\tau(t)$.
  These sizes are themselves linear in the size of $s$ and $t$.
  The bottleneck for complexity is thus the computation of $\tau$ itself.
  From the definition, we gather that this computation involves a linear number of set operations (inclusion testing, union, intersection, differences), each of the sets involved being of linear size (only atomic observations appearing explicitely in the term may be used).
  So, this remains polynomial, with a cubic rough upper-bound.
\end{remark}

\clearpage
\section{Product observation algebras}
\label{sec:prods}
\def\G{ G}
\def\at{\mathbin{@}}
\def\si{s_{{\paren i}}}
\def\ti{t_{{\paren i}}}
\def\I{\mathcal I}
\newcommand\support[1]{\left|#1\right|}

In this section, we present a product construction, building a coherence graph, and an axiomatisation of the corresponding observation algebra, out of smaller graphs.

Let $\I$ be an arbitrary set of dimensions, and for each $i\in\I$ let $G_i=\tuple{\atobs_i,\coh_i}$ be some coherence graph.
We also assume for each $i\in\I$ a sound equivalence relation $\equiv_i$ over $\Tobs(\G_i)$, meaning in particular that
\[\forall i\in \I,\,\forall \si,\ti\in\Tobs(\G_i),\; \si\leqq_i\ti\Rightarrow\sem \si\subseteq\sem\ti.\]
\begin{definition}
The product of the family $\tuple{G_i}_{i\in \I}$ is the graph $\bigotimes_i\G_i\eqdef\tuple{\atobs,\coh}$ where:
\begin{mathpar}
  \atobs\eqdef\setcompr{a\at i}{i\in \I,\,a\in \atobs_i}\and
  \inferrule{i\neq j}{a\at i\coh b\at j}\and
  \inferrule{a\coh_i b}{a\at i\coh b\at i}
\end{mathpar}
\end{definition}

This construction preserves the FAN property, in the following sense:
\begin{lemma}\label{lem:prod-fan}
  If for every $i\in\I$ the graph $\G_i$ has the FAN property, so does $\bigotimes_{i\in\I}\G_i$.
\end{lemma}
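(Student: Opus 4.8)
The plan is to unfold the definition of the product graph and observe that the anti-neighbourhood of any vertex of $\bigotimes_{i\in\I}\G_i$ is, up to a relabelling, the anti-neighbourhood of a single vertex in one of the component graphs, hence finite.

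First I would fix a vertex $a\at i\in\atobs$, with $i\in\I$ and $a\in\atobs_i$, and determine its anti-neighbourhood. Complementing the two clauses defining $\coh$ in the product, one reads off: for $j\in\I$ and $b\in\atobs_j$, the pair $a\at i,\,b\at j$ is incoherent if and only if $j=i$ and $a\incoh_i b$. Indeed, if $j\neq i$ the clause $i\neq j\;\Rightarrow\;a\at i\coh b\at j$ already makes the pair coherent, so no vertex living in another dimension can lie in the anti-neighbourhood of $a\at i$; and within dimension $i$ the other clause makes $\coh$ restrict exactly to $\coh_i$. Therefore the anti-neighbourhood of $a\at i$ equals $\setcompr{b\at i}{b\in\atobs_i,\ a\incoh_i b}$.

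This set is precisely the image of $\setcompr{b\in\atobs_i}{a\incoh_i b}$ under the injection $b\mapsto b\at i$. Since $\G_i$ has the FAN property, the latter set is finite; an injective image of a finite set is finite, so the anti-neighbourhood of $a\at i$ in $\bigotimes_{i\in\I}\G_i$ is finite. As $a\at i$ was an arbitrary vertex, $\bigotimes_{i\in\I}\G_i$ has the FAN property.

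I do not expect a genuine obstacle here: the argument is a routine unfolding of the product's adjacency relation. The only point that needs a moment's care is to confirm that edges between distinct dimensions are always present, so that they cannot enlarge any anti-neighbourhood — which is immediate from the first defining clause.
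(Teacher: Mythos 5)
Your argument is correct and matches the paper's proof: both identify the anti-neighbourhood of $a\at i$ in the product with (the image under $b\mapsto b\at i$ of) the anti-neighbourhood of $a$ in $\G_i$, which is finite by hypothesis. The extra care you take in checking that cross-dimension pairs are always coherent is exactly the point the paper leaves implicit.
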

\begin{proof}
  Let $a\in O$. By definition there is $i\in \I$ and $a'\in O_i$ such that $a=a'\at i$. Therefore: 
  \[\setcompr{b\in O}{b\incoh a}=\setcompr{b\at i}{b\incoh_ia'}\]
  The anti-neighbourhood of $a$ in $\bigotimes_i\G_i$ is thus isomorphic to that of $a'$ in $\G_i$, which is finite. 
\end{proof}

After reviewing a few examples in \Cref{sec:examples}, we will propose an axiomatisation of the observation algebra corresponding to this product graph in \Cref{sec:axioms-term-vectors}.
This axiomatisation will rely on a notion of term vectors, which will then be used in \Cref{sec:repr-compl} to define a pair of representation triples, thus proving the completeness of our axiomatisation.

To simplify notations in \Cref{sec:axioms-term-vectors,sec:repr-compl}, we will write $\P\eqdef\bigotimes_i\G_i$.

\subsection{Examples}
\label{sec:examples}
\def\eqeq{\mathrel{{=}{=}}}

In the binary case, when $\I=\set{1,2}$, this corresponds to the graph over the disjoint union of sets of vertices, and where the coherence relation is $\coh_1\cup\coh_2\cup\paren{O_1\times O_2}\cup\paren{O_2\times O_1}$.

We denote by $\mathcal B$ the binary graph $\tuple{\set{0,1},\emptyset}$, with two incoherent atomic observations.
The graph $\bigotimes_{i\in\Nat}\mathcal B$ may then be understood as representing an infinite memory, containing $\Nat$-indexed boolean cells.
Its atomic observations may be denoted $v{\brack i}\eqeq b$, with $i\in\Nat$, $b\in\set{0,1}$, and $v{\brack i}$ denoting the $i^{\text{th}}$ cell in the memory.
We may enrich the specification language (a.k.a. term syntax) with statements $v{\brack i}\eqeq v{\brack j}$.
Indeed, such a statement may be encoded as $\paren{v{\brack i}\eqeq 0 \wedge v{\brack j}\eqeq 0}\vee\paren{v{\brack i}\eqeq 1\wedge v{\brack j}\eqeq 1}$.

\def\V{\mathcal V}
\def\M{\mathcal M}
More generally, given a collection $\paren{\V_i}_{i\in\Nat}$ of finite sets of values, we can define the graph $\M\eqdef\bigotimes_{i\in\Nat}\tuple{\V_i,\emptyset}$.
We can also define the following terms for $i,j\in\Nat$ and $a\in\V_i$:
\begin{align*}
  \paren{v{\brack i}\eqeq a}&\eqdef a\at i
  &\paren{v{\brack i}\eqeq v{\brack j}}&\eqdef\bigvee_{a\in\V_i\cap\V_j}\paren{v{\brack i}\eqeq a\wedge v{\brack j}\eqeq a}
\end{align*}
This graph represents a memory with infinity many cells, each of which containing a value chosen from a finite set.
Note that in this model different cells may have different sets of possible values, i.e. a simple form of variable typing is available.
Another important remark is that since each of the $\V_i$ is finite, the graphs $\tuple{\V_i,\emptyset}$ all have the FAN property.
By \Cref{lem:prod-fan} $\M$ inherits this property as well.

To move away from the FAN-case, we need to use non-FAN graphs as $\G_i$.
For instance, consider the anticlique described in~\Cref{sec:infinite-anticliques} with $\omega\eqdef\Nat$ as atomic observations.
Taking an infinite product of such structures leads to a graph describing an infinite set of natural number valued variables.
As in the previous case, the atomic observations should be understood as statements $v_i\eqeq n$.
However, it is no longer possible to encode $v_i\eqeq v_j$, as this would involve an infinite disjunction between all possible values.
In other words, addding $v_i\eqeq v_j$ to our syntax modifies the set of expressible observations.
A finer study of this construct would be an interesting path for future research.

\subsection{Axioms and term vectors}
\label{sec:axioms-term-vectors}
\begin{tableequations}

  The first thing we observe on this algebra is that it contains each of the $\Obs(\G_i)$ as sub-algebras.
  Indeed, for any index $i\in \I$, and any term in $\si\in\Tobs(\G_i)$, we may build a term $\underline\si\in\Tobs(\P)$ by replacing any occurence of an atomic observation $a\in\atobs_i$ by $a\at i\in\atobs$.
  The semantics of this term is as follows:
  \begin{equation}
    \alpha\in\sem{\underline\si}\Leftrightarrow\setcompr{a\in\atobs_i}{a\at i\in\alpha}\in\sem \si.\label{eq:def-inject}
  \end{equation}
  From this statement, whose proof is a straight forward induction on terms, we obtain that this transformation is injective with respect to the semantics.
  This motivates the following axiom (scheme):
  \begin{equation}
    \forall \si,\ti\in\Tobs(\G_i),\;\si\equiv_i\ti \Rightarrow \underline\si\equiv \underline\ti\axlabel{ax:19}
  \end{equation}

  Also observe that the cliques in $\P$ are unions of cliques from the $G_i$s.
  As such they may be represented as $\I$-indexed vectors, where the $i^{\text{th}}$ component belongs to $\Coh(G_i)$.
  Furthermore, finite cliques are such vectors $v$ where there is a finite subset $I\subseteq \I$ such that $i\notin I\Rightarrow v_i=\emptyset$.
  The least such $I$ is called the support of $v$.

  This remark will motivate us to see terms from $\Tobs(\P)$ as lattice terms over $\Tobs(\G_i)$.
  To define this representation, we rely on the notion of term vectors.
  \begin{definition}
    A term vector $v$ is a partial function whose domain (also called support) is a finite set $\support v\subseteq\I$.
    For each $i\in\support v$, the $i^{\text{th}}$ coordinate of $v$ is a term $v_i\in\Tobs(\G_i)$.
    We write $\vec{\Tobs}$ for the set of term vectors, $\vec 0$ for the term vector with empty support, and define the following pair of maps $\vec\Tobs\to\Tobs(\P)$:
    \begin{align*}
      \prod v&\eqdef\bigwedge_{i\in\support v}\underline{v_i}
      &\coprod v&\eqdef\bigvee_{i\in\support v}\underline{v_i}
    \end{align*}
    We also define the conjunction, disjunction, and implication of a pair of term vectors $u,v$, all of which have the support $\support u\cup\support v$:
    \begin{align*}
      (u\vee v)_i&\eqdef
                   \begin{cases}
                     u_i\vee v_i&i\in\support u\cap\support v\\
                     v_i&i\in\support v\setminus\support u\\
                     u_i&i\in\support u\setminus\support v
                   \end{cases}
                                &
                                  (u\wedge v)_i&\eqdef
                                                 \begin{cases}
                                                   u_i\wedge v_i&i\in\support u\cap\support v\\
                                                   v_i&i\in\support v\setminus\support u\\
                                                   u_i&i\in\support u\setminus\support v
                                                 \end{cases}
                                \\
                                  (u\to v)_i&\eqdef
                                              \begin{cases}
                                                u_i\to v_i&i\in\support u\cap\support v\\
                                                v_i&i\in\support v\setminus\support u\\
                                                u_i\to\bot&i\in\support u\setminus\support v
                                              \end{cases}
    \end{align*}
  \end{definition}
  \noindent%
  Note that using the axioms of distributive lattices, we have $$\prod(u\wedge v)\equiv \prod u\wedge\prod v\text{ and }\coprod(u\vee v)\equiv \coprod u\vee\coprod v.$$
  
  With these notations, we may state our last axiom:
  \begin{equation}
    \paren{\prod u}\to\paren{\coprod v}\equiv \coprod\paren{u\to v}
    \axlabel{ax:20}
  \end{equation}
  \begin{lemma}
    \eqref{ax:20} is sound.
  \end{lemma}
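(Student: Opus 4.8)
The plan is to prove soundness of~\eqref{ax:20} by establishing the semantic equality $\sem{\paren{\prod u}\to\paren{\coprod v}}=\sem{\coprod\paren{u\to v}}$ for all term vectors $u,v$, working entirely with the componentwise description of cliques in $\P=\bigotimes_i G_i$. Recall that a clique $\alpha\in\Coh(\P)$ is the same datum as the family $\paren{\alpha_i}_{i\in\I}$ with $\alpha_i\eqdef\setcompr{a\in\atobs_i}{a\at i\in\alpha}\in\Coh(G_i)$; that $\alpha\cup\beta$ corresponds componentwise to $\paren{\alpha_i\cup\beta_i}_{i\in\I}$; and that $\alpha\cup\beta\in\Coh(\P)$ exactly when $\alpha_i\cup\beta_i\in\Coh(G_i)$ for every $i$, cross-dimension pairs being coherent by construction. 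Feeding this into~\eqref{eq:def-inject}, which gives $\alpha\in\sem{\underline\si}\Leftrightarrow\alpha_i\in\sem{\si}$, I would first record the pointwise descriptions $\sem{\prod u}=\setcompr{\alpha}{\forall i\in\support u,\;\alpha_i\in\sem{u_i}}$ and $\sem{\coprod v}=\setcompr{\alpha}{\exists i\in\support v,\;\alpha_i\in\sem{v_i}}$, and, unfolding the three clauses defining $u\to v$, that $\alpha\in\sem{\coprod\paren{u\to v}}$ iff $\alpha_i\in\sem{u_i\to v_i}$ for some $i\in\support u\cap\support v$, or $\alpha_i\in\sem{v_i}$ for some $i\in\support v\setminus\support u$, or $\alpha_i\in\sem{u_i\to\bot}$ for some $i\in\support u\setminus\support v$. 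Unfolding the definition of $\to$ in $\Obs(\P)$, membership of $\alpha$ in $\sem{\paren{\prod u}\to\paren{\coprod v}}$ reads: for every clique $\beta$ with $\beta_i\in\sem{u_i}$ for all $i\in\support u$, if $\alpha_i\cup\beta_i\in\Coh(G_i)$ for all $i$, then $\alpha_i\cup\beta_i\in\sem{v_i}$ for some $i\in\support v$. It then remains to show these two sets coincide.

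For $\sem{\coprod\paren{u\to v}}\subseteq\sem{\paren{\prod u}\to\paren{\coprod v}}$, I would fix a witness index $i_0$ for the left set together with a $\beta$ as above satisfying $\alpha\cup\beta\in\Coh(\P)$, and distinguish cases. If $i_0\in\support u\cap\support v$, then $\beta_{i_0}\in\sem{u_{i_0}}$ and $\alpha_{i_0}\cup\beta_{i_0}\in\Coh(G_{i_0})$, so $\alpha_{i_0}\in\sem{u_{i_0}\to v_{i_0}}$ gives $\alpha_{i_0}\cup\beta_{i_0}\in\sem{v_{i_0}}$. If $i_0\in\support v\setminus\support u$, then $\alpha_{i_0}\in\sem{v_{i_0}}$, and since $\sem{v_{i_0}}$ is downclosed and the clique $\alpha_{i_0}\cup\beta_{i_0}$ is more specific than $\alpha_{i_0}$, again $\alpha_{i_0}\cup\beta_{i_0}\in\sem{v_{i_0}}$. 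If $i_0\in\support u\setminus\support v$, then $\alpha_{i_0}\in\sem{u_{i_0}\to\bot}$ together with $\beta_{i_0}\in\sem{u_{i_0}}$ forces $\alpha_{i_0}\cup\beta_{i_0}\notin\Coh(G_{i_0})$, contradicting $\alpha\cup\beta\in\Coh(\P)$, so this case is vacuous; in the other two cases $i_0\in\support v$ witnesses membership in the right-hand set.

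For the converse I would argue by contraposition. Assume $\alpha\notin\sem{\coprod\paren{u\to v}}$. Then for each $i\in\support u$ — whether $i\in\support u\cap\support v$, where $\alpha_i\notin\sem{u_i\to v_i}$, or $i\in\support u\setminus\support v$, where $\alpha_i\notin\sem{u_i\to\bot}$ — I can pick $\gamma^{(i)}\in\sem{u_i}$ with $\alpha_i\cup\gamma^{(i)}\in\Coh(G_i)$, additionally with $\alpha_i\cup\gamma^{(i)}\notin\sem{v_i}$ in the first of these; and for $i\in\support v\setminus\support u$ we already have $\alpha_i\notin\sem{v_i}$. Assemble the clique $\beta$ with $\beta_i\eqdef\gamma^{(i)}$ for $i\in\support u$ and $\beta_i\eqdef\emptyset$ otherwise: it is a legitimate clique of $\P$ componentwise, satisfies $\beta_i\in\sem{u_i}$ for all $i\in\support u$, and $\alpha\cup\beta\in\Coh(\P)$ since $\alpha_i\cup\beta_i\in\Coh(G_i)$ for every $i$ (for $i\in\support u$ by the choice of $\gamma^{(i)}$, for $i\notin\support u$ since $\beta_i=\emptyset$). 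If $\alpha$ were in $\sem{\paren{\prod u}\to\paren{\coprod v}}$, this $\beta$ would yield some $i\in\support v$ with $\alpha_i\cup\beta_i\in\sem{v_i}$; but for $i\in\support u$ this contradicts the choice of $\gamma^{(i)}$ (as $\alpha_i\cup\beta_i=\alpha_i\cup\gamma^{(i)}$), and for $i\notin\support u$ it contradicts $\alpha_i\cup\beta_i=\alpha_i\notin\sem{v_i}$. Hence $\alpha\notin\sem{\paren{\prod u}\to\paren{\coprod v}}$.

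I expect this last step to be the crux: passing from the ``$\forall\beta\,\exists i$'' shape of the left-hand side to the ``$\exists i\,\forall\gamma$'' shape of the right-hand side is precisely where the product structure of $\P$ enters, since the components $\beta_i$ can be chosen independently across dimensions; and the bookkeeping of which of the three support regions $\support u\cap\support v$, $\support v\setminus\support u$, $\support u\setminus\support v$ an index falls in — especially the vacuity surrounding $\support u\setminus\support v$, which corresponds to the pseudocomplement clause $u_i\to\bot$ — is where care is needed.
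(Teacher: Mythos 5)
Your proof is correct and follows essentially the same route as the paper's: the easy containment via componentwise witnesses, and the hard containment by assembling a product clique $\beta$ from independently chosen component counterexamples, which is exactly the paper's key step. The only differences are bookkeeping: the paper first pads $u$ with $\top$ and $v$ with $\bot$ so the supports coincide, whereas you carry the three support regions through the case analysis explicitly (and argue by contraposition rather than contradiction), which changes nothing essential.
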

  \begin{proof}
    We start by assuming $\support u = \support v$.
    (If this is not the case, we pad $u$ with $\top$ and $v$ with $\bot$ so that their support coincide.
    This transformation yields $u'$ and $v'$ such that $\prod u\equiv\prod u'$, $\coprod v\equiv\coprod v'$, and $\coprod u\to v\equiv \coprod u'\to v'$.)
    
    One direction of this identity holds trivially: since we have $\sem{\prod \underline{u_i}}\subseteq\sem{\underline {u_i}}$ and $\sem{\underline{v_i}}\subseteq\sem{\coprod v}$, we have $\sem{\underline{u_i}\to\underline{v_i}}\subseteq\sem{\prod u\to\coprod v}$, hence
    \begin{align*}
      \sem{\coprod\paren{u\to v}}=\sem{\bigvee_i\paren{\underline{u_i}\to\underline{v_i}}}=\bigcup_i\sem{\underline{u_i}\to\underline{v_i}}\subseteq\sem{\prod u\to\coprod v}
    \end{align*}

    For the converse containment, let $\alpha\in\sem{\prod u\to\coprod v}$.
    As noticed earlier, $\alpha$ may be seen as a vector indexed over $\I$, and whose $i^{\text{th}}$ component is a clique $\alpha_i=\setcompr{a\in\atobs_i}{a\at i\in\alpha}\in\Coh(\G_i)$.
    The fact that $\alpha\in\sem{\prod u\to\coprod v}$ means that for any $\beta\in\sem{\prod u}$, either $\alpha\cup\beta\notin\Coh(\P)$, or $\alpha\cup\beta\in\sem{\coprod v}$.
    To prove that $\alpha$ belong to the semantics of the right-hand side, we must find some index $i$ such that $\alpha_i\in\sem{u_i\to v_i}$.
    By contradiction, assume no such $i$ exists, that is:
    \[
      \forall i\in\support u,\,\exists \beta_i\in\sem{u_i}:\alpha_i\cup\beta\in\Coh(\G_i)\text{ and }\alpha_i\cup\beta\notin\sem{v_i}
    \]
    We define $\beta\eqdef\setcompr{a\at i}{i\in\support u,\,a\in\beta_i}$.
    We may check that for any $i\in\support u$, $\beta\in\sem{\underline{u_i}}$ (by~\eqref{eq:def-inject}), meaning $\beta\in\sem{\prod u}$.
    Also, $\beta\cup\alpha\in\Coh(\P)$, since for each $i\in\support u$ we have $\beta_i\cup\alpha_i\in\Coh(\G_i)$, and for each $j\notin\support u$, we have $\beta_j=\emptyset$.
    So, since $\alpha\in\sem{\prod u\to\coprod v}$, we should have $\alpha\cup\beta\in\sem{\coprod v}$, hence some index $i\in\support u$ such that:
    \[\alpha_i\cup\beta_i\in\sem{v_i}.\]
    This is however a contradiction with our assumption, thus concluding the proof.
  \end{proof}
\end{tableequations}
We may now define an axiomatisation $\equiv_\P$ of the observation algebra over $\P$, by taking the axioms \eqref{ax:1}-\eqref{ax:8}, \eqref{ax:13}, and \eqref{ax:9}-\eqref{ax:12}, as well as \eqref{ax:19} and \eqref{ax:20}.
This relation is clearly sound, as each axiom has been proved sound already.

\begin{remark}[On the finiteness of axiomatisations]
  The axiomatisation defined here is always infinite, even if the set of dimensions is finite and if each $\G_i$ admits a finite axiomatisation.
  Indeed, the axiom scheme~\eqref{ax:19} will introduce an axiom for each identity in $\G_i$.
  The naive solution, which would just collect the axioms defining each $\equiv_i$ would be incorrect in general: universal axioms may hold on one of the $\G_i$, but not in the product algebra.
  
  One might devise a system with typed terms that would allow one to overcome this difficulty, at least in some cases.
  It is not clear however how useful such a system would be.
  Indeed, if the graph is finite, thanks to the development in~\Cref{sec:fan-graphs}, we know of a complete axiomatisation.
  If it is infinite, since we need to take care of all of the incoherent pairs of atomic observations, it is doubtful any finite axiomatisation exists.
  For these reasons, we are satisfied with the relation $\equiv_\P$ as defined here, for it is fairly simple to describe and manipulate.
\end{remark}

\subsection{Representations and completeness}
\label{sec:repr-compl}

We will now prove that this relation is also complete for $\Obs(\P)$, provided each of the $\equiv_i$ relations is complete, meaning:
\[\forall i\in \I,\,\forall \si,\ti\in\Tobs(\G_i),\;\sem \si\subseteq\sem\ti\Rightarrow \si\leqq_i\ti.\]

As usual, we will use representation triples.
Here it is useful to use two representations triples, a disjunctive and a conjuntive one.
Both use the same set of representatives, namely finite sets of term vectors.
\[R\eqdef \fpset{\vec\Tobs}\]
The disjunctive representation has a translation $\phi_\vee(V)\eqdef\bigvee_{v\in V}\prod v$, meaning it is understood as a disjunction of term vectors.
Similarly, the conjunctive representation holds a translation $\phi_\wedge(V)\eqdef\bigwedge_{v\in V}\coprod v$.
Observe that we may switch from one representation to the other, using the following helper functions:
\begin{align*}
  C2D(\emptyset)&=\set{\vec 0}\\
  C2D(\set v\uplus V)&=\setcompr{\vec{v_i}\wedge u}{i\in\support v,\,u\in C2D(V)}\\
  D2C(\emptyset)&=\set{\vec 0}\\
  D2C(\set v\uplus V)&=\setcompr{\vec{v_i}\vee u}{i\in\support v,\,u\in D2C(V)}.
\end{align*}
(Where $\vec{v_i}$ represents the vector with domain $\set i$ whose $i^{\text{th}}$ component is $v_i$.)

These functions are best understood by the following statement:
\begin{lemma}\label{lem:c2d-d2c}
  For any representative $V\in R$, we have:
  \begin{align*}
    \phi_\vee\paren{C2D(V)}&\equiv_\P\phi_\wedge(V)
    &\phi_\wedge\paren{D2C(V)}&\equiv_\P\phi_\vee(V)
  \end{align*}
\end{lemma}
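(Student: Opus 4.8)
The plan is to prove both identities by induction on the finite set $V$, following the recursive clauses of $C2D$ and $D2C$. The two statements are perfectly dual — exchange $\wedge$ with $\vee$, $\prod$ with $\coprod$, $\top$ with $\bot$, and $C2D$ with $D2C$ — so I would carry out the first in detail and deduce the second by this symmetry.

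For the base case $V=\emptyset$, both sides of $\phi_\vee(C2D(\emptyset))\equiv_\P\phi_\wedge(\emptyset)$ reduce to $\top$: the right side is the empty conjunction $\bigwedge_{v\in\emptyset}\coprod v$, and since $C2D(\emptyset)=\set{\vec 0}$ the left side is $\prod\vec 0=\bigwedge_{i\in\support{\vec 0}}\underline{(\vec 0)_i}$, again an empty conjunction. Dually, the base case of the second identity reduces both sides to $\bot$.

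For the inductive step, write $V=\set v\uplus V'$, choosing some $v\in V$ and setting $V'=V\setminus\set v$. On the axiomatic side, $\phi_\wedge(\set v\uplus V')=\coprod v\wedge\phi_\wedge(V')$ with $\coprod v=\bigvee_{i\in\support v}\underline{v_i}$. On the representation side, $C2D(\set v\uplus V')=\setcompr{\vec{v_i}\wedge u}{i\in\support v,\,u\in C2D(V')}$, so that $\phi_\vee(C2D(\set v\uplus V'))=\bigvee_{i\in\support v,\,u\in C2D(V')}\prod(\vec{v_i}\wedge u)$. First I would invoke the remark $\prod(w\wedge w')\equiv\prod w\wedge\prod w'$ together with the fact that $\vec{v_i}$ has singleton support $\set i$, whence $\prod\vec{v_i}\equiv_\P\underline{v_i}$, to rewrite $\prod(\vec{v_i}\wedge u)\equiv_\P\underline{v_i}\wedge\prod u$. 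Then the distributive-lattice axioms~\eqref{ax:1}-\eqref{ax:8} let me pull the inner disjunction out, $\bigvee_{i,\,u}(\underline{v_i}\wedge\prod u)\equiv_\P\bigvee_{i\in\support v}\bigl(\underline{v_i}\wedge\phi_\vee(C2D(V'))\bigr)$, using $\bigvee_{u\in C2D(V')}\prod u=\phi_\vee(C2D(V'))$. Applying the induction hypothesis $\phi_\vee(C2D(V'))\equiv_\P\phi_\wedge(V')$ and distributing once more yields $\bigl(\bigvee_{i\in\support v}\underline{v_i}\bigr)\wedge\phi_\wedge(V')=\coprod v\wedge\phi_\wedge(V')=\phi_\wedge(\set v\uplus V')$, which closes the induction.

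I do not expect a genuine obstacle: the whole argument is distributivity together with the two structural facts $\prod\vec{v_i}\equiv_\P\underline{v_i}$ and $\prod(w\wedge w')\equiv_\P\prod w\wedge\prod w'$ (and their $\coprod$/$\vee$ duals), all already available from~\eqref{ax:1}-\eqref{ax:8} and the preceding remark. The only delicate points are the empty-support and empty-set bookkeeping in the base case, and — at the level of the Rocq formalisation — that $C2D$ and $D2C$ must be defined on a list presentation of finite sets, so that one has to check the output respects $\equiv_\P$ regardless of the enumeration order; this is exactly why the informal proof is organised as an induction mirroring the recursive clauses rather than a one-shot computation.
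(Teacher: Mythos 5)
Your proof is correct and matches the paper's argument essentially step for step: induction on the cardinality of $V$, the base case collapsing to the empty conjunction $\top$ (dually $\bot$), the inductive step using $\prod(\vec{v_i}\wedge u)\equiv_\P\underline{v_i}\wedge\prod u$ plus distributivity before invoking the induction hypothesis, and the second identity obtained by symmetry. Nothing further is needed.
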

\begin{proof}
  By induction on the cardinal of $V$:
  \begin{align*}
    \phi_\vee\paren{C2D(\emptyset)}
    &= \phi_\vee\paren{\set{\vec 0}}
      =\prod \vec 0
      =\top
      =\bigwedge\emptyset=\phi_\wedge(\emptyset)=\phi_\wedge(V)\\
    \phi_\vee\paren{C2D(\set v\uplus V)}
    &= \phi_\vee\paren{\setcompr{\vec{v_i}\wedge u}{i\in\support v,\,u\in C2D(V)}}
    = \bigvee_{i\in\support v,\,u\in C2D(V)}\prod (\vec{v_i}\wedge u)\\
    &\equiv_\P\bigvee_{i\in\support v,\,u\in C2D(V)}\paren{\prod \vec{v_i}}\wedge\paren{\prod u}
    =\bigvee_{i\in\support v,\,u\in C2D(V)}\underline{v_i}\wedge\paren{\prod u}\\
    &\equiv_\P\bigvee_{i\in\support v}\underline{v_i}\wedge\paren{\bigvee_{u\in C2D(V)}\prod u}
    =\coprod v\wedge\paren{\bigvee_{u\in C2D(V)}\prod u}\\
    &=\coprod v\wedge\phi_\vee\paren{C2D(V)}
      \equiv_\P\coprod v\wedge\phi_\wedge\paren{V}
    =\phi_\wedge(\set v\uplus V)
  \end{align*}
  The case of $\phi_\wedge\circ D2C$ is symmetric.\qedhere
\end{proof}

We now build the corresponding converse translations, $\tau_\vee$ and $\tau_\wedge$.
Thanks to the functions defined in \Cref{lem:c2d-d2c} we only need to define $\tau_\vee$ by induction on terms, as we may set $\tau_\wedge(s)=D2C(\tau_\vee(s))$.
\begin{itemize}
\item For atomic observations, $\tau_\vee(a\at i)$ is defined as the singleton containing the term vector with support $\set i$, whose $i^{\text{th}}$ component is $a$.
\item For the constants $\top,\bot$, we have $\tau_\vee(\top)\eqdef \set{\vec 0}$
 and $\tau_\vee(\bot)\eqdef \emptyset$.
\item Disjunctions are straightforward:
  $\tau_\vee(s\vee t)=\tau_\vee(s)\cup\tau_\vee(t)$.
\item For conjuctions, we rely on our helper functions:
  $$\tau_\vee(s\wedge t)=C2D\paren{D2C\paren{\tau_\vee(s)}\cup D2C\paren{\tau_\vee(t)}}.$$
\item The remaining case, implication, is where the axiom \eqref{ax:20} comes into play.
  We define
  \[\tau_\vee\paren{s\to t}\eqdef C2D\setcompr{u\to v}{u\in \tau_\vee (s),\,v\in D2C\circ\tau_\vee(t)}.\]
\end{itemize}

These translations being defined, we first prove their compatibility with our axiomatisation, i.e. the following statement for the disjunctive representation.
\begin{lemma}
  For any term $s\in\Tobs(\P)$, we have $\phi_\vee\circ\tau_\vee(s)\equiv_\P s$.
\end{lemma}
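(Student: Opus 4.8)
The plan is to proceed by structural induction on $s\in\Tobs(\P)$, using the helper translations $C2D,D2C$ together with \Cref{lem:c2d-d2c} to normalise the shape of conjunctions and implications, and the distributive-lattice axioms plus \eqref{ax:12} and \eqref{ax:20} to finish each case. The base cases are immediate from the definitions: $\phi_\vee\paren{\tau_\vee(a\at i)}=\prod\vec a=\underline a=a\at i$ (where $\vec a$ is the term vector with support $\set i$ and $i$-th coordinate $a$), $\phi_\vee\paren{\tau_\vee(\top)}=\prod\vec 0=\top$, and $\phi_\vee\paren{\tau_\vee(\bot)}=\bigvee\emptyset=\bot$, so all three hold by reflexivity of $\equiv_\P$.

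For disjunction, $\phi_\vee\paren{\tau_\vee(s\vee t)}=\phi_\vee\paren{\tau_\vee(s)\cup\tau_\vee(t)}=\phi_\vee\paren{\tau_\vee(s)}\vee\phi_\vee\paren{\tau_\vee(t)}$ directly from the definition of $\phi_\vee$ and $\bigvee\paren{S\cup T}\equiv_0\bigvee S\vee\bigvee T$; the induction hypothesis and congruence then close the case. For conjunction, $\tau_\vee(s\wedge t)=C2D\paren{D2C\paren{\tau_\vee(s)}\cup D2C\paren{\tau_\vee(t)}}$, so the first identity of \Cref{lem:c2d-d2c} gives $\phi_\vee\paren{\tau_\vee(s\wedge t)}\equiv_\P\phi_\wedge\paren{D2C\paren{\tau_\vee(s)}\cup D2C\paren{\tau_\vee(t)}}$, which splits (via $\bigwedge\paren{S\cup T}\equiv_0\bigwedge S\wedge\bigwedge T$) into $\phi_\wedge\paren{D2C\paren{\tau_\vee(s)}}\wedge\phi_\wedge\paren{D2C\paren{\tau_\vee(t)}}$; two further applications of \Cref{lem:c2d-d2c} rewrite this as $\phi_\vee\paren{\tau_\vee(s)}\wedge\phi_\vee\paren{\tau_\vee(t)}$, and the induction hypothesis finishes.

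The implication case is the crux, and is where \eqref{ax:20} is used. Writing $W\eqdef\setcompr{u\to v}{u\in\tau_\vee(s),\,v\in D2C\paren{\tau_\vee(t)}}$, we have $\tau_\vee(s\to t)=C2D(W)$, so \Cref{lem:c2d-d2c} gives $\phi_\vee\paren{\tau_\vee(s\to t)}\equiv_\P\phi_\wedge(W)$, which (modulo idempotence of $\wedge$) equals $\bigwedge_{u,v}\coprod\paren{u\to v}$ with $u$ ranging over $\tau_\vee(s)$ and $v$ over $D2C\paren{\tau_\vee(t)}$. Axiom \eqref{ax:20} rewrites each $\coprod\paren{u\to v}$ into $\paren{\prod u}\to\paren{\coprod v}$, giving $\bigwedge_u\bigwedge_v\paren{\paren{\prod u}\to\paren{\coprod v}}$. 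I would then iterate \eqref{ax:12} (with the Heyting-derivable $x\to\top\equiv\top$ for the empty case) to pull the inner meet into the consequent, obtaining $\bigwedge_u\paren{\paren{\prod u}\to\phi_\wedge\paren{D2C\paren{\tau_\vee(t)}}}$, and then iterate the derived law $\paren{x_1\vee x_2}\to y\equiv\paren{x_1\to y}\wedge\paren{x_2\to y}$ (already used in \Cref{lem:impl-to-lat}, with $\bot\to y\equiv\top$ for the empty case) to pull the outer meet out, obtaining $\paren{\bigvee_u\prod u}\to\phi_\wedge\paren{D2C\paren{\tau_\vee(t)}}=\phi_\vee\paren{\tau_\vee(s)}\to\phi_\wedge\paren{D2C\paren{\tau_\vee(t)}}$. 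A final application of \Cref{lem:c2d-d2c} turns $\phi_\wedge\paren{D2C\paren{\tau_\vee(t)}}$ into $\phi_\vee\paren{\tau_\vee(t)}$, and the induction hypothesis yields $\equiv_\P s\to t$.

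The main obstacle is precisely this implication step: tracking the nested finite joins and meets, justifying the finite-index generalisations of \eqref{ax:12} and of $\paren{x\vee y}\to z\equiv\paren{x\to z}\wedge\paren{y\to z}$, and handling the degenerate cases in which $\tau_\vee(s)$ or $D2C\paren{\tau_\vee(t)}$ is empty --- these correspond to $\bot\to t$ and $s\to\top$ and are absorbed by the Heyting-derivable identities $\bot\to x\equiv\top$ and $x\to\top\equiv\top$. Everything else is a routine induction, of the same ``lengthy case analysis'' flavour that the authors discharge in Rocq.
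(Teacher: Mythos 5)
Your proof is correct and follows essentially the same route as the paper: induction on terms, with the conjunction case handled via $C2D$/$D2C$ and \Cref{lem:c2d-d2c}, and the implication case handled by converting $\phi_\wedge$ of the set of vectors $u\to v$ into $\bigwedge\coprod(u\to v)$, rewriting with \eqref{ax:20}, and then redistributing via \eqref{ax:12} and the derived law $\paren{x\vee y}\to z\equiv\paren{x\to z}\wedge\paren{y\to z}$ (you merely apply these two laws in the opposite order to the paper). Your explicit attention to the empty-index degenerate cases, absorbed by the Heyting-derivable $\bot\to x\equiv\top$ and $x\to\top\equiv\top$, is a detail the paper leaves implicit and is fine.
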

\begin{proof}
  This proof is by induction on terms.
  The cases of constants, atomic observations, and disjunctions being trivial, we omit them here.
  \begin{align*}
    \phi_\vee\circ\tau_\vee(s\wedge t)
    &=\phi_\vee\circ C2D\paren{D2C\paren{\tau_\vee(s)}\cup D2C\paren{\tau_\vee(t)}}\\
    &\equiv\phi_\wedge\paren{D2C\paren{\tau_\vee(s)}\cup D2C\paren{\tau_\vee(t)}}\\
    &\equiv\paren{\phi_\wedge\circ D2C\paren{\tau_\vee(s)}}
    \wedge \paren{\phi_\wedge\circ D2C\paren{\tau_\vee(t)}}\\
    &\equiv\paren{\phi_\vee\paren{\tau_\vee(s)}}
      \wedge \paren{\phi_\vee\paren{\tau_\vee(t)}}\equiv s \wedge t\\  
  \end{align*}
  \begin{align*}
    \phi_\vee\circ\tau_\vee(s\to t)
    &=\phi_\vee\circ C2D\setcompr{u\to v}{u\in \tau_\vee (s),\,v\in D2C\circ\tau_\vee(t)}\\
    &\equiv_\P\phi_\wedge\setcompr{u\to v}{u\in \tau_\vee (s),\,v\in D2C\circ\tau_\vee(t)}\\
    &\equiv_\P\bigwedge_{u\in \tau_\vee (s),\,v\in D2C\circ\tau_\vee(t)}\coprod\paren{u\to v}\\
    &\equiv_\P\bigwedge_{u\in \tau_\vee (s),\,v\in D2C\circ\tau_\vee(t)}\paren{\prod u}\to \paren{\coprod v}
    \equiv_\P  \bigwedge_{v\in D2C\circ\tau_\vee t}\paren{\bigwedge_{u\in \tau_\vee s}\paren{\prod u\to \coprod v}}\\
    &\equiv_\P  \bigwedge_{v\in D2C\circ\tau_\vee t}\paren{(\bigvee_{u\in \tau_\vee s}\prod u)\to \coprod v}\\
    &\equiv_\P  (\bigvee_{u\in \tau_\vee s}\prod u)\to (\bigwedge_{v\in D2C\circ\tau_\vee t}\coprod v)\\
    &\equiv_\P  (\phi_\vee\circ\tau_\vee s)\to (\phi_\wedge\circ D2C\circ\tau_\vee t)\\
    &\equiv_\P  (\phi_\vee\circ\tau_\vee s)\to (\phi_\vee\circ\tau_\vee t)
      \equiv_\P s\to t\qedhere
  \end{align*}
\end{proof}
This implies $\phi_\wedge\circ\tau_\wedge(s)=\phi_\wedge\circ D2C\circ\tau_\vee(s)\equiv\phi_\vee\circ\tau_\vee(s)\equiv s$.
To obtain the remaining property of representation triples, we prove the following:
\begin{lemma}\label{lem:tchn-prod}
  For any $U,V\in R$, the following entailment hold:
  \begin{align*}
    \sem{\phi_\vee(U)}\subseteq\sem{\phi_\wedge(V)}&\Rightarrow\phi_\vee(U)\leqq_\P\phi_\wedge(V).
  \end{align*}
\end{lemma}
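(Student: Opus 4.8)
The plan is to first reduce, using the laws of bounded distributive lattices, to a statement about a single term vector on each side, and then to settle that case by a componentwise argument combining the injection identity \eqref{eq:def-inject} with the assumed completeness of each $\equiv_i$.

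Since $\phi_\vee(U)=\bigvee_{u\in U}\prod u$ and $\phi_\wedge(V)=\bigwedge_{v\in V}\coprod v$, the axioms \eqref{ax:1}-\eqref{ax:8} make $\phi_\vee(U)\leqq_\P\phi_\wedge(V)$ equivalent to ``$\prod u\leqq_\P\coprod v$ for every $u\in U$ and $v\in V$'', while unfolding $\sem{-}$ on $\vee$ and $\wedge$ makes $\sem{\phi_\vee(U)}\subseteq\sem{\phi_\wedge(V)}$ equivalent to ``$\sem{\prod u}\subseteq\sem{\coprod v}$ for all such $u,v$''. It therefore suffices to fix a pair of term vectors $u,v$ and prove that $\sem{\prod u}\subseteq\sem{\coprod v}$ implies $\prod u\leqq_\P\coprod v$.

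For this, recall that a clique $\alpha\in\Coh(\P)$ is determined by its components $\alpha_i\eqdef\setcompr{a\in\atobs_i}{a\at i\in\alpha}\in\Coh(\G_i)$, and that conversely any family $(\gamma_i)_{i\in\I}$ with $\gamma_i\in\Coh(\G_i)$ is realised by a clique of $\P$, atoms of distinct dimensions being always coherent. By \eqref{eq:def-inject}, $\alpha\in\sem{\prod u}$ iff $\alpha_i\in\sem{u_i}$ for every $i\in\support u$, and $\alpha\in\sem{\coprod v}$ iff $\alpha_j\in\sem{v_j}$ for some $j\in\support v$. Assuming $\sem{\prod u}\subseteq\sem{\coprod v}$, the key claim is that at least one of the following holds: (i) $\sem{u_i}=\emptyset$ for some $i\in\support u$; (ii) $\sem{u_j}\subseteq\sem{v_j}$ for some $j\in\support u\cap\support v$; (iii) $\sem{v_j}=\Coh(\G_j)$ for some $j\in\support v\setminus\support u$. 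Indeed, if all three fail, one builds a separating clique $\gamma$ by picking $\gamma_j\in\sem{u_j}\setminus\sem{v_j}$ for $j\in\support u\cap\support v$, $\gamma_j\in\Coh(\G_j)\setminus\sem{v_j}$ for $j\in\support v\setminus\support u$, $\gamma_i\in\sem{u_i}$ for $i\in\support u\setminus\support v$, and $\gamma_i=\emptyset$ otherwise: then $\gamma\in\sem{\prod u}$ but $\gamma\notin\sem{\coprod v}$, contradicting the hypothesis.

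It remains to derive $\prod u\leqq_\P\coprod v$ in each of the three cases, which is where completeness of the $\equiv_i$ and the axiom scheme \eqref{ax:19} enter. In case (i), $\sem{u_i}=\emptyset=\sem{\bot}$ yields $u_i\equiv_i\bot$, hence $\underline{u_i}\equiv_\P\bot$ by \eqref{ax:19} (note $\underline{\bot}=\bot$), so the factor $\underline{u_i}$ forces $\prod u\equiv_\P\bot\leqq_\P\coprod v$. In case (ii), $\sem{u_j}\subseteq\sem{v_j}$ yields $u_j\leqq_j v_j$, hence $\underline{u_j}\leqq_\P\underline{v_j}$ by \eqref{ax:19} (the underlining being a homomorphism), and then $\prod u\leqq_\P\underline{u_j}\leqq_\P\underline{v_j}\leqq_\P\coprod v$. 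In case (iii), $\sem{v_j}=\Coh(\G_j)=\sem{\top}$ yields $v_j\equiv_j\top$, hence $\underline{v_j}\equiv_\P\top$ and $\coprod v\equiv_\P\top$, so $\prod u\leqq_\P\coprod v$ trivially. The only genuinely nontrivial step is the trichotomy together with the construction of the separating clique; everything else is routine manipulation of the lattice axioms, the injection identity, and the completeness hypotheses.
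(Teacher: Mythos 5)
Your proof is correct, and it takes a genuinely different middle route from the paper's, while sharing the same outer structure: both reduce, by lattice reasoning, to showing for each pair $u\in U$, $v\in V$ that $\sem{\prod u}\subseteq\sem{\coprod v}$ entails $\prod u\leqq_\P\coprod v$, and both finish with the same glue (completeness of the $\equiv_i$, the scheme \eqref{ax:19}, and $\prod u\leqq_\P\underline{u_i}\leqq_\P\underline{v_i}\leqq_\P\coprod v$). The difference is the middle step. The paper stays with the object-level implication: it restates the hypothesis as $\Coh(\P)\subseteq\sem{\prod u\to\coprod v}$ by Heyting residuation, uses the already-proved soundness of \eqref{ax:20} to turn this into $\Coh(\P)\subseteq\sem{\coprod(u\to v)}$, extracts a coordinatewise containment $\Coh(\G_i)\subseteq\sem{u_i\to v_i}$, and applies completeness in the form $\top\leqq_i u_i\to v_i$. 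You instead establish a purely semantic trichotomy (some $u_i$ with empty semantics, or a shared coordinate with $\sem{u_j}\subseteq\sem{v_j}$, or some $v_j$ with full semantics) via a coordinatewise separating clique; that construction is essentially the one occurring inside the paper's soundness proof of \eqref{ax:20}, here inlined, so your argument never invokes \eqref{ax:20} or the implication connective in this lemma. Your explicit cases (i) and (iii) also handle the support mismatch that the paper relegates to a footnote, and they make transparent the step of extracting per-coordinate information: note that the paper's intermediate ``for every $i$'' is really only available for \emph{some} $i$ (witnessed through the empty clique and down-closure), which is all the final chain needs, and your case (ii) is exactly that existential statement. What the paper's route buys in exchange is uniformity — a single case reusing work already done for \eqref{ax:20} — whereas yours is more elementary and self-contained at the cost of a small case analysis.
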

\begin{proof}
  Assume $\sem{\phi_\vee(U)}\subseteq\sem{\phi_\wedge(V)}$.
  By unfolding the definitions, we get:
  \[\bigcup_{u\in U}\sem{\prod u}
    =\sem{\phi_\vee(U)}
    \subseteq\sem{\phi_\wedge(V)}
    =\bigcap_{v\in V}\sem{\coprod v}.\]
  Therefore for every $u\in U$ and every $v\in V$, we have $\sem {\prod u}\subseteq\sem{\coprod v}$.
  By properties of Heyting algebras, this is equivalent to saying that $\Coh\subseteq\sem {\prod u\to\coprod v}$.
  By \eqref{ax:20}, we may rewrite this as $\Coh\subseteq\sem {\coprod \paren{u\to v}}$.
  Unfolding the definitions further, we obtain that for every\footnote{We omit the case where $\support u\neq\support v$ for brevity. This case poses no problem, and is left as an exercise.} $i$ we have 
  $\Coh\subseteq\sem{\underline{u_i\to v_i}}$.
  Using \eqref{eq:def-inject}, we can see that this entails $\sem\top_{\G_i}=\Coh(\G_i)\subseteq\sem{u_i\to v_i}$.
  By completeness of $\equiv_i$, we further obtain that for every $u\in U$, every $v\in V$, and every index $i$, we have $\top\leqq_iu_i\to v_i$, hence $u_i\leqq_i v_i$.
  Using \eqref{ax:19} we infer that $\underline {u_i}\leqq_\P \underline{v_i}$.
  We may now conclude:
  \begin{align*}
    &\forall u \in U,\forall v \in V,\forall i\in I:
      \underline {u_i}\leqq_\P \underline{v_i}\\
    \Rightarrow~
    &\forall u \in U,\forall v \in V,\forall i\in I:
      \underline {u_i}\leqq_\P \coprod v\\
    \Rightarrow~
    &\forall u \in U,\forall v \in V:
      \prod u\leqq_\P \coprod v\\
    \Leftrightarrow~
    &\forall u \in U:
      \prod u\leqq_\P \bigwedge_{v\in V}\coprod v
    \\
    \Leftrightarrow~
    & \phi_\vee(U)=\bigvee_{u\in U}\prod u\leqq_\P \bigwedge_{v\in V}\coprod v   = \phi_\wedge(V)
      \qedhere
  \end{align*} 
\end{proof}
  
As a corollary, we get that both our disjunctive and conjunctive frameworks are correct representations.
\begin{corollary}\label{cor:prod-repr}
  Both $\tuple{R,\tau_\vee,\phi_\vee}$ and $\tuple{R,\tau_\wedge,\phi_\wedge}$ are representation triples.
\end{corollary}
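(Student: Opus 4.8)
The plan is to check, for each of the two triples, the three defining conditions of a representation. The typing condition is immediate: $R=\fpset{\vec\Tobs}$, and $\tau_\vee,\tau_\wedge$ together with $\phi_\vee,\phi_\wedge$ all have the required profiles by construction. The second condition, $\phi_\vee(\tau_\vee(s))\equiv_\P s$ for the disjunctive triple, is exactly the lemma proved just above; and the matching statement $\phi_\wedge(\tau_\wedge(s))\equiv_\P s$ for the conjunctive triple follows from it, since $\phi_\wedge\circ\tau_\wedge=\phi_\wedge\circ D2C\circ\tau_\vee\equiv_\P\phi_\vee\circ\tau_\vee\equiv_\P\mathrm{id}$ by \Cref{lem:c2d-d2c}. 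So the only work left is the third condition, and this is where \Cref{lem:tchn-prod} does the heavy lifting.

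For $\tuple{R,\tau_\vee,\phi_\vee}$ I would argue as follows. Suppose $U,U'\in R$ with $\sem{\phi_\vee(U)}=\sem{\phi_\vee(U')}$. By \Cref{lem:c2d-d2c}, $\phi_\vee(U')\equiv_\P\phi_\wedge(D2C(U'))$, and since $\equiv_\P$ is sound this gives $\sem{\phi_\vee(U)}=\sem{\phi_\wedge(D2C(U'))}$; in particular the inclusion $\sem{\phi_\vee(U)}\subseteq\sem{\phi_\wedge(D2C(U'))}$ holds, so \Cref{lem:tchn-prod} yields $\phi_\vee(U)\leqq_\P\phi_\wedge(D2C(U'))\equiv_\P\phi_\vee(U')$. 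The symmetric argument (exchanging the roles of $U$ and $U'$) gives $\phi_\vee(U')\leqq_\P\phi_\vee(U)$, and since $\leqq_\P$ is a partial order up to $\equiv_\P$ we conclude $\phi_\vee(U)\equiv_\P\phi_\vee(U')$, as required.

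For $\tuple{R,\tau_\wedge,\phi_\wedge}$ the argument is dual: given $V,V'\in R$ with $\sem{\phi_\wedge(V)}=\sem{\phi_\wedge(V')}$, use $\phi_\wedge(V)\equiv_\P\phi_\vee(C2D(V))$ from \Cref{lem:c2d-d2c} and soundness to get $\sem{\phi_\vee(C2D(V))}\subseteq\sem{\phi_\wedge(V')}$, then apply \Cref{lem:tchn-prod} to obtain $\phi_\wedge(V)\equiv_\P\phi_\vee(C2D(V))\leqq_\P\phi_\wedge(V')$; the symmetric inequality follows the same way, and antisymmetry of $\leqq_\P$ up to $\equiv_\P$ closes the case. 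There is no real obstacle remaining here — all the substance sits in \Cref{lem:tchn-prod} and \Cref{lem:c2d-d2c} — so the only thing to be careful about is systematically invoking soundness of $\equiv_\P$ to turn the syntactic equivalences supplied by \Cref{lem:c2d-d2c} into the semantic inclusions demanded by \Cref{lem:tchn-prod}, and to remember that the hypothesis is an equality of interpretations, which is precisely what lets one run the argument twice and appeal to antisymmetry.
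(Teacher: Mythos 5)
Your proof is correct and follows essentially the same route as the paper's: convert one side via \Cref{lem:c2d-d2c}, apply \Cref{lem:tchn-prod} in both directions, and conclude by antisymmetry of $\leqq_\P$ up to $\equiv_\P$. (You even use the right helper, $D2C$, where the paper's text writes $C2D$ for the disjunctive case, and your explicit check of the condition $\phi_\wedge\circ\tau_\wedge(s)\equiv_\P s$ matches the remark the paper makes just before the corollary.)
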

\begin{proof}
  Assume $\sem{\phi_\vee(U)}=\sem{\phi_\vee(V)}$.
  In particular this means $\sem{\phi_\vee(U)}\subseteq\sem{\phi_\vee(V)}$
  and $\sem{\phi_\vee(V)}\subseteq\sem{\phi_\vee(U)}$.
  Using our \Cref{lem:c2d-d2c} functions, this yields:
  $\sem{\phi_\vee(U)}\subseteq\sem{\phi_\wedge\circ C2D(V)}$
  and $\sem{\phi_\vee(V)}\subseteq\sem{\phi_\wedge\circ C2D(U)}$.
  By \Cref{lem:tchn-prod}, this means $\phi_\vee(U)\leqq_\P\phi_\wedge\circ C2D(V)\equiv_\P\phi_\vee(V)$
  and $\phi_\vee(V)\leqq_\P\phi_\wedge\circ C2D(U)\equiv_\P\phi_\vee(U)$.
  The proof for the other triple is similar.
\end{proof}
\begin{theorem}\label{thm:prod-alg-compl}
  For any pair $s,t\in\Tobs(\P)$, we have $s\leqq_\P t$ if and only if $\sem s\subseteq\sem t$.
\end{theorem}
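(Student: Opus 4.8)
The plan is to assemble pieces already in place, exactly as for \Cref{thm:obs-lat-compl}, \Cref{thm:obs-alg-compl} and \Cref{thm:compl-ac}. The right-to-left direction $s\leqq_\P t\Rightarrow\sem s\subseteq\sem t$ is soundness, which has been recorded already: every axiom generating $\equiv_\P$ — namely \eqref{ax:1}-\eqref{ax:8}, \eqref{ax:13}, \eqref{ax:9}-\eqref{ax:12}, \eqref{ax:19} and \eqref{ax:20} — is valid in $\Obs(\P)$, so $\leqq_\P$ is contained in semantic containment. It therefore remains only to prove completeness, $\sem s\subseteq\sem t\Rightarrow s\leqq_\P t$.

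For this I would first reduce the statement about the preorder $\leqq_\P$ to one about the congruence $\equiv_\P$, as is done implicitly in the earlier completeness theorems. Since $\leqq_\P$ is by definition generated from $\equiv_\P$ via $s\leqq_\P t\Leftrightarrow s\vee t\equiv_\P t$, and since $\sem{s\vee t}=\sem s\cup\sem t$, the hypothesis $\sem s\subseteq\sem t$ is equivalent to $\sem{s\vee t}=\sem t$. Hence it suffices to show that $\equiv_\P$ is complete, i.e. $\sem u=\sem v\Rightarrow u\equiv_\P v$, and then to instantiate this at $u=s\vee t$ and $v=t$.

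Completeness of $\equiv_\P$ is precisely what \Cref{lem:repr-impl-compl} delivers, given that $\equiv_\P$ is sound (done) and that $\P$ admits a representation triple with respect to $\equiv_\P$. The latter is exactly \Cref{cor:prod-repr}: the disjunctive triple $\tuple{R,\tau_\vee,\phi_\vee}$ does the job (the conjunctive one would serve equally well). So the argument is short: from $\sem s=\sem t$, \Cref{lem:repr-impl-compl} yields $s\equiv_\P t$; combined with the reduction of the previous paragraph, $\sem s\subseteq\sem t$ gives $s\leqq_\P t$, and soundness gives the converse.

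There is no genuine obstacle at this point, since all of the substance has already been discharged — the soundness of \eqref{ax:20} (whose nontrivial half is the containment argument in its soundness lemma), the translations $\tau_\vee$ and $\tau_\wedge$ together with the round-trip identities $\phi_\vee\circ\tau_\vee(s)\equiv_\P s$, and above all the technical \Cref{lem:tchn-prod}, which is where completeness of each component relation $\equiv_i$ is invoked (and hence where the standing assumption that every $\equiv_i$ is complete enters). The only care needed is the routine lattice-theoretic bookkeeping between $\leqq_\P$ and $\equiv_\P$ described above, which is the same translation tacitly used in the preceding completeness theorems.
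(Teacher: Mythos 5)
Your proposal is correct and follows exactly the paper's route: the paper proves the theorem by invoking \Cref{lem:repr-impl-compl} together with \Cref{cor:prod-repr}, with soundness and the standard translation between $\leqq_\P$ and $\equiv_\P$ (via $s\leqq_\P t\Leftrightarrow s\vee t\equiv_\P t$ and $\sem{s\vee t}=\sem s\cup\sem t$) left implicit, just as you spell them out.
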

\begin{proof}
 By \Cref{lem:repr-impl-compl} and \Cref{cor:prod-repr}.
\end{proof}
\begin{remark}
  In this case we may obtain decidability whenever each of the $\Obs(\G_i)$ is decidable.
  Indeed, in the proof of Lemma~\ref{lem:tchn-prod}, it can be seen that it is possible to reduce the containment of a pair of terms over $\P$ to a series of containment queries over the $\G_i$s.
  This reduction is however quite costly, with the computation of $\tau_\vee$ yielding a tower of exponentials.
  A finer study of the complexity of this problem could be the topic of further research.
\end{remark}

\clearpage
\section{Futher remarks}\label{sec:remarks}
\subsection{On the Rocq formalisation}
\label{sec:coq}

We have formalised the results in this paper using the Rocq proof system ({\ttfamily\href{https://rocq-prover.org}{rocq-prover.org}}).
The script is available at {\ttfamily\href{https://github.com/monstrencage/obs-alg-proofs}{github.com/monstrencage/obs-alg-proofs}}.
This was convenient to dispatch large and repetitive inductive proofs with some automation, and most importantly to guarantee that every case was properly considered.
This formalisation generally follows the same development as the paper.
Some variations were necessary, in particular because Rocq statements are made over types rather than sets.

An aspect worth discussing is our formalisation of graphs and cliques.
We represent a graph by a type $V$ for vertices, and a function $coh:V\to V\to \mathtt{Prop}$ for edges, plus the requirement that $coh$ is symmetric and reflexive.
As is usual in such developments, it was often convenient to also require the type $V$ to be decidable, i.e. there should be a boolean function $eq$ such that $u=v\Leftrightarrow eq\;u\;v=\top$.
A similar function to decide edges was also convenient.
More surprising is our encoding of cliques.
In our code, a clique may be seen as a pair $(m,i)$, where $m:V\to\set{\top,\bot}$ is a boolean membership predicate and $i:V\to V+\set{\bullet}$ is called the incompatibility function.
This pair is subject to the following requirements:
\begin{enumerate}
\item $m(a)= m(b)=\top\Rightarrow a\coh b$, i.e. any two vertices in the clique are connected;
\item $i(a)=b\in V\Rightarrow m(b)=\top\wedge b\incoh a$, i.e. when $i(a)$ is a vertex, it is a witness that $a$ cannot be added to the clique (it is incoherent with a member);
\item $i(a)=\bullet\Rightarrow \forall b,\,m(b)=\top\Rightarrow a\coh b$, i.e. when the incompatiblity function fails to yield a vertex, then $a$ may safely be added to the clique.
\end{enumerate}
This definition stems from a recurring need during the development to produce a witness that a given element is incompatible with a clique.
We implemented this for finite cliques, with a function that takes a list of vertices (with a proof that they are pairwise coherent), and produces a clique with the same members.

Other than this oddity, the rest of the development is a fairly straightforward adaptation of the ideas presented in this paper.

\subsection{On finite cliques}
\newcommand\implf{\mathbin{\rightarrow_f}}
\label{sec:fin-obs}

A natural variation on this algebra is its restriction to finite cliques.
This would naturally model a situation where the observer may only test a finite set of predicates on each event.

Formally, consider the set $\Cohf\eqdef\setcompr{\alpha\subseteq_f\atobs}{\forall a,b\in x,\,a\coh b}$ of finite cliques, i.e. $\Cohf=\Coh\cap\fpset{\atobs}$.
The ordering $\contains$ may be restricted to this set, thus yielding a closure operator $\closuref$ over $\pset{\Cohf}$.
Notice that it follows from the definitions that for any set $x$ of finite cliques, we have:
\begin{equation}
  x\closuref=x\closure\cap\Cohf.
  \label{eq:closf-clos}
\end{equation}

Like we did in the general case, we may use this to define the finitary observation algebra
$$\Obsf\eqdef\setcompr{x\closuref}{x\subseteq\Cohf},$$
and prove the following:
\begin{lemma}\label{lem:obsf-bdl}
  For any coherence graph $G$, $\tuple{\Obsf,\cup,\cap,\emptyset,\Cohf}$ is a bounded distributive lattice.
\end{lemma}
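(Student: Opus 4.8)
The plan is to transcribe the proof of \Cref{prop:obs-bdl} almost verbatim, replacing $\Coh$ by $\Cohf$ and $\closure$ by $\closuref$. Since $\tuple{\pset{\Cohf},\cup,\cap,\emptyset,\Cohf}$ is already a bounded distributive lattice, it suffices to check that $\emptyset$ and $\Cohf$ belong to $\Obsf$ and that $\cup$ and $\cap$ are internal to $\Obsf$; distributivity is then inherited, as every sublattice of a distributive lattice is distributive and the two bounds are present.

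The first step is to establish that $\closuref$ is a Kuratowski closure operator on $\pset{\Cohf}$, i.e. that it is extensive, additive, sends $\emptyset$ to $\emptyset$, and is idempotent. This can be obtained either directly --- the relation $\contains$ restricted to $\Cohf$ is still a preorder, which is exactly what drives the corresponding fact for $\closure$ in \Cref{prop:obs-bdl} --- or, more economically, by transporting the properties of $\closure$ along the identity \eqref{eq:closf-clos}, namely $x\closuref = x\closure\cap\Cohf$ for $x\subseteq\Cohf$. For instance, additivity reads $\paren{x\cup y}\closuref = \paren{x\cup y}\closure\cap\Cohf = \paren{x\closure\cup y\closure}\cap\Cohf = \paren{x\closure\cap\Cohf}\cup\paren{y\closure\cap\Cohf} = x\closuref\cup y\closuref$; extensivity follows from $x\subseteq x\closure$ together with $x\subseteq\Cohf$; and idempotence follows because $x\closuref\subseteq x\closure$ gives $x\closuref\closuref = \paren{x\closuref}\closure\cap\Cohf\subseteq x\closure\cap\Cohf = x\closuref$, the reverse inclusion being extensivity again (note that $x\closuref$ is itself a set of finite cliques, so \eqref{eq:closf-clos} applies to it).

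With $\closuref$ in hand, and using the alternative description $\Obsf = \setcompr{x\closuref}{x\subseteq\Cohf}$, the remaining verifications are immediate copies of those in \Cref{prop:obs-bdl}. We get $\emptyset = \emptyset\closuref\in\Obsf$ at once; for the top element, $\Cohf = \Cohf\closuref\in\Obsf$, where the equality $\Cohf\closuref=\Cohf$ holds because $\Cohf\closuref\subseteq\Cohf$ by construction and $\Cohf\subseteq\Cohf\closuref$ by extensivity (equivalently: a subclique of a finite clique is a finite clique, so $\Cohf$ is downclosed under $\contains$ within $\Cohf$). Closure under $\cup$ is additivity together with idempotence, $\paren{x\closuref\cup y\closuref}\closuref = x\closuref\cup y\closuref\in\Obsf$. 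Closure under $\cap$ uses the Kuratowski meet inequality $\paren{x\cap y}\closuref\subseteq x\closuref\cap y\closuref$ applied to $x\closuref,y\closuref$, which gives $\paren{x\closuref\cap y\closuref}\closuref\subseteq x\closuref\cap y\closuref$ by idempotence, while extensivity gives the reverse inclusion; hence $\paren{x\closuref\cap y\closuref}\closuref = x\closuref\cap y\closuref\in\Obsf$.

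I do not expect a genuine obstacle here: the argument is a direct transcription of \Cref{prop:obs-bdl}. The only point that deserves a moment's attention is the interplay between finiteness and down-closure --- concretely, that $\closuref$ is really idempotent and that $\Cohf$ is itself closed under $\closuref$ --- and this is exactly what the identity \eqref{eq:closf-clos}, together with the elementary fact that subsets of finite cliques are finite cliques, is there to settle.
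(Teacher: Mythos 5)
Your proposal is correct and matches the paper exactly: the paper's proof of \Cref{lem:obsf-bdl} is literally ``Same as \Cref{prop:obs-bdl}'', and you carry out precisely that transcription, with the only detail worth checking --- that $\closuref$ is again a Kuratowski closure, e.g.\ via \eqref{eq:closf-clos} or the restricted preorder --- handled correctly.
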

\begin{proof}
  Same as \Cref{prop:obs-bdl}.
\end{proof}

This lattice may be embedded as a sub-algebra of $\Obs$.
Before proving this, we establish the following identity:
\begin{lemma}\label{lem:closuref-closure}
   For any $x\subseteq\Cohf$, we have $x\closure= x\closuref\closure$.
\end{lemma}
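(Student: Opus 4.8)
The plan is to prove the two inclusions separately, relying only on the fact, established in \Cref{prop:obs-bdl}, that $\closure$ is a Kuratowski closure operator on $\pset\Coh$, together with the reformulation $x\closuref=x\closure\cap\Cohf$ from~\eqref{eq:closf-clos}. From the Kuratowski axioms I will use monotonicity of $\closure$ --- which follows at once from~\eqref{eq:14}, since $a\subseteq b$ gives $b\closure=\paren{a\cup b}\closure=a\closure\cup b\closure\supseteq a\closure$ --- as well as extensivity~\eqref{eq:22} and idempotence~\eqref{eq:24}.

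For the inclusion $x\closuref\closure\subseteq x\closure$: by~\eqref{eq:closf-clos} we have $x\closuref=x\closure\cap\Cohf\subseteq x\closure$, so monotonicity gives $x\closuref\closure\subseteq x\closure\closure$, which equals $x\closure$ by~\eqref{eq:24}. For the reverse inclusion $x\closure\subseteq x\closuref\closure$: by~\eqref{eq:22} we have $x\subseteq x\closure$, and by the hypothesis $x\subseteq\Cohf$; hence $x\subseteq x\closure\cap\Cohf=x\closuref$. Applying monotonicity of $\closure$ once more yields $x\closure\subseteq x\closuref\closure$, as required.

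The argument is pure bookkeeping with the closure axioms, so there is no real obstacle; the only point that needs care is to notice that~\eqref{eq:closf-clos} lets one replace $x\closuref$ by $x\closure\cap\Cohf$, after which everything reduces to monotonicity and idempotence of the ambient closure $\closure$ --- and to observe that the hypothesis $x\subseteq\Cohf$ is precisely what is needed to place $x$ inside $x\closure\cap\Cohf$ (indeed the identity fails without it, e.g.\ when $x$ consists of a single infinite clique, in which case $x\closuref=\emptyset$ while $x\closure\neq\emptyset$).
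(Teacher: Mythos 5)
Your proof is correct and follows essentially the same route as the paper: both directions are handled by mutual inclusion, using $x\closuref=x\closure\cap\Cohf$ together with extensivity, monotonicity, and idempotence of $\closure$, with the hypothesis $x\subseteq\Cohf$ used exactly where the paper uses it. No issues.
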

\begin{proof}
  We proceed by mutual inclusion.
  \begin{itemize}
  \item Since $x\subseteq x\closure$ and $x\subseteq\Cohf$, we have $x\subseteq x\closure\cap\Cohf=x\closuref$.
    By monotonicity of Kuratowski closure operators~\cite[Theorem 1]{kuratowski1922}, we obtain $x\closure\subseteq x\closuref\closure$.
  \item Since $x\closuref=x\closure\cap{\Cohf}\subseteq x\closure$, by monotonicity and idempotence we get $x\closuref\closure\subseteq x\closure\closure=x\closure$.
    \qedhere
  \end{itemize}
\end{proof}

\begin{lemma}
  $\closure$ is an injective morphism of bounded distributive lattices from $\Obsf$ to $\Obs$.
\end{lemma}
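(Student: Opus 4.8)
The plan is to verify directly that the map $x\mapsto x\closure$ meets the four conditions for a bounded-distributive-lattice morphism $\Obsf\to\Obs$, and then injectivity. First I would record that it is well defined: for $x\in\Obsf$ we have $x\subseteq\Cohf\subseteq\Coh$, so $x\closure\in\Obs$ by the alternative definition of $\Obs$ (equivalently, by idempotence~\eqref{eq:24}). Preservation of the bottom and of joins are then immediate, being literally~\eqref{eq:23} ($\emptyset\closure=\emptyset$) and the additivity~\eqref{eq:14} ($\paren{x\cup y}\closure=x\closure\cup y\closure$). For the top, I would observe that $\emptyset\in\Cohf$, so $\set\emptyset\subseteq\Cohf$, and hence $\Coh=\set\emptyset\closure\subseteq\Cohf\closure\subseteq\Coh$ using monotonicity of $\closure$ and the already-recorded identity $\set\emptyset\closure=\Coh$; thus $\Cohf\closure=\Coh$. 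Injectivity will follow in one line from~\eqref{eq:closf-clos}: if $x,y\in\Obsf$ and $x\closure=y\closure$, then $x=x\closuref=x\closure\cap\Cohf=y\closure\cap\Cohf=y\closuref=y$.

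The only condition requiring real work is preservation of meets, $\paren{x\cap y}\closure=x\closure\cap y\closure$ for $x,y\in\Obsf$. The inclusion $\subseteq$ is just monotonicity of $\closure$ (as already used in the proof of \Cref{prop:obs-bdl}). For the converse I would take $\alpha\in x\closure\cap y\closure$, pick $\beta_1\in x$ and $\beta_2\in y$ with $\beta_1\subseteq\alpha$ and $\beta_2\subseteq\alpha$, and set $\gamma\eqdef\beta_1\cup\beta_2$. Since $x,y\subseteq\Cohf$ the sets $\beta_1,\beta_2$ are finite, so $\gamma$ is a finite subset of the clique $\alpha$, hence $\gamma\in\Cohf$; and because $x=x\closuref$ is down-closed in $\Cohf$ with $\gamma\contains\beta_1\in x$, we get $\gamma\in x$, and symmetrically $\gamma\in y$. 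Then $\gamma\in x\cap y$ and $\alpha\contains\gamma$ give $\alpha\in\paren{x\cap y}\closure$.

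I expect meet-preservation to be the main (indeed the only) obstacle, and its crux is the fact that a finite union of finite cliques contained in a common clique is again a finite clique: this amalgamation is exactly what makes $\closure$ commute with intersections, whereas the analogous map sending a down-set of an arbitrary sub-poset to the down-set it generates does not in general preserve intersections. Everything else is bookkeeping with the Kuratowski axioms~\eqref{eq:14},~\eqref{eq:23},~\eqref{eq:24} and the identity~\eqref{eq:closf-clos}.
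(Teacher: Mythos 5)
Your proof is correct, and its skeleton matches the paper's: the bottom, top and join cases, as well as injectivity via \eqref{eq:closf-clos}, are argued exactly as in the paper. The only real divergence is in meet preservation. The paper handles $\paren{x\closuref\cap y\closuref}\closure=\paren{x\closuref\closure}\cap\paren{y\closuref\closure}$ by an equational chain that routes through \Cref{lem:closuref-closure} (i.e. $x\closure=x\closuref\closure$ for $x\subseteq\Cohf$) and the fact, from \Cref{prop:obs-bdl}, that an intersection of observations is already closed; you instead argue directly at the level of cliques, amalgamating the two witnesses $\beta_1\in x$, $\beta_2\in y$ below $\alpha$ into the finite clique $\gamma=\beta_1\cup\beta_2$ and using that elements of $\Obsf$ are down-closed within $\Cohf$. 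These are essentially the same mathematics: the step in the paper's chain that passes from the intersection of the finitary observations to the intersection of their closures is valid precisely because of the amalgamation fact you spell out (the paper leaves it implicit under ``completely routine''). So your version is slightly longer but self-contained and makes the crux — a finite union of finite cliques contained in a common clique is again a finite clique — explicit, whereas the paper's buys brevity by leaning on the previously proved lemmas. Either way the argument goes through; no gap.
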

\begin{proof}
  This proof is completely routine.
  \begin{itemize}
  \item $\emptyset\closure=\emptyset$ holds trivially;
  \item To check that $\Cohf\closure=\Coh$, it is enough to verify
    $\Coh\subseteq\Cohf\closure$.
    This is trivial because $\emptyset$ is a finite clique, hence $\Coh=\set{\emptyset}\closure\subseteq\Cohf\closure$.
  \item Since $\closure$ is a Kuratowski closure, it axiomatically commutes with unions.
  \item Let $x,y$ be two sets of finite cliques, and consider $\paren{x\closuref\cap y\closure}\closure$.
    We get:
    \begin{align*}
      \paren{x\closuref\cap y\closuref}\closure
      &=\paren{x\closuref\closure}\cap \paren{y\closuref\closure}\closure\\
      &=\paren{x\closure\cap y\closure}\closure\\
      &=x\closure\cap y\closure\\
      &=\paren{x\closuref\closure}\cap \paren{y\closuref\closure}
    \end{align*}
  \item Now for injectivity, assume $x,y\in\Obsf$ such that $x\closure=y\closure$.
    Using (\ref{eq:closf-clos}) and the fact that $x,y$ are finitary observations, we get
    \begin{align*}
      &x=x\closuref=x\closure\cap\Cohf
      =y\closure\cap\Cohf=y\closuref=y.\qedhere
    \end{align*}
  \end{itemize}
\end{proof}

\begin{corollary}
  $\tuple{\Obsf,\cup,\cap,\emptyset,\Cohf}$ is isomorphic to the sub-algebra of $\tuple{\Obs,\cup,\cap,\emptyset,\Coh}$ containing only those observations $x$ such that $\forall\alpha\in x,\,\exists\beta\in\Cohf:\,\alpha\contains\beta$.
\end{corollary}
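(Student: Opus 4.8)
\emph{Proof plan.} The bulk of the work is already done: we have just verified that $\closure$ is an injective morphism of bounded distributive lattices from $\Obsf$ to $\Obs$, so $\tuple{\Obsf,\cup,\cap,\emptyset,\Cohf}$ is isomorphic to its image, which is automatically a sub-algebra of $\tuple{\Obs,\cup,\cap,\emptyset,\Coh}$. The plan is therefore to compute that image $\setcompr{x\closure}{x\in\Obsf}$ explicitly and then match it, by a double inclusion, against the announced collection of observations.

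First I would simplify the image. By the alternative definition of $\Obsf$, every element of $\Obsf$ has the form $y\closuref$ for some $y\subseteq\Cohf$, and \Cref{lem:closuref-closure} gives $y\closuref\closure=y\closure$; conversely, for any $y\subseteq\Cohf$ the set $y\closuref=y\closure\cap\Cohf$ lies in $\Obsf$ and again $y\closuref\closure=y\closure$. Hence the image of $\closure$ is exactly $\setcompr{y\closure}{y\subseteq\Cohf}$, that is, the observations obtained as the $\contains$-downclosure of some set of \emph{finite} cliques.

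Next I would identify this set with the collection $S$ of observations $x$ in which every clique $\alpha\in x$ dominates (with respect to $\contains$) some finite clique that also belongs to $x$. For $\setcompr{y\closure}{y\subseteq\Cohf}\subseteq S$: if $x=y\closure$ with $y\subseteq\Cohf$ and $\alpha\in x$, unfolding $\closure$ produces $\beta\in y$ with $\alpha\contains\beta$, and $\beta\in y\subseteq y\closure=x$ by~\eqref{eq:22}, so $\beta$ is a finite witness lying in $x$. For the converse inclusion: given $x\in S$, put $y\eqdef x\cap\Cohf$; then $y\closure\subseteq x$ because $x$ is an observation and $y\subseteq x$, while $x\subseteq y\closure$ is precisely the defining property of $S$, so $x=y\closure$ lies in the image. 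Together with the previous step this shows $\closure$ restricts to an isomorphism of bounded distributive lattices onto $S$.

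The only delicate point — and the one I expect to need the most care — is the reading of the membership condition: the witnessing finite clique $\beta$ must be taken to lie in $x$ itself (equivalently, in $x\cap\Cohf$). Without this requirement the condition holds vacuously, since the empty clique is always such a $\beta$, and the inclusion $S\subseteq\setcompr{y\closure}{y\subseteq\Cohf}$ would then break down whenever $\Coh\neq\Cohf$. Once the condition is read with $\beta\in x$, the argument above goes through, and everything else is routine bookkeeping with the Kuratowski-closure identities and \Cref{lem:closuref-closure} already established.
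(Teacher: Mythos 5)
Your proof is correct and follows the route the paper leaves implicit: the corollary is stated there without proof as an immediate consequence of the preceding lemma, and the intended argument is exactly yours — the injective lattice morphism $\closure$ identifies $\Obsf$ with its image, which via \Cref{lem:closuref-closure} is $\setcompr{y\closure}{y\subseteq\Cohf}$, i.e.\ the observations generated by their finite cliques. You are also right about the delicate point: as literally written the condition $\forall\alpha\in x,\,\exists\beta\in\Cohf:\,\alpha\contains\beta$ is vacuous (since $\emptyset\in\Cohf$ and $\alpha\contains\emptyset$ for every clique $\alpha$) and would describe all of $\Obs$, so it must be read, as you do, with the witness $\beta\in x\cap\Cohf$, which is precisely what your double inclusion establishes.
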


However, this strong relation between the two algebras becomes undone when one considers the implication.
The finitary observation algebra does have implications, defined by setting for any $x,y\in\Obsf$:
$$x\implf y\eqdef\setcompr{\alpha\in\Cohf}{\forall\beta\in x,\,\alpha\cup\beta\in\Cohf\Rightarrow\alpha\cup\beta\in y}.$$
\begin{lemma}
  For any $x,y,z\in\Obsf$, we have $x\cap z\subseteq y$ if and only if $z\subseteq x\implf y$.
\end{lemma}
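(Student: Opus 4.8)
The plan is to mimic, almost verbatim, the proof of \Cref{prop:heyting} (the analogous adjunction for $\Obs$), the only extra bookkeeping being the preservation of finiteness of cliques under union. Recall that elements of $\Obsf$ are exactly the subsets of $\Cohf$ that are down-closed in $\tuple{\Cohf,\contains}$, since they are of the form $w\closuref$ and $\closuref$ is the closure for $\contains$ restricted to $\Cohf$.

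For the right-to-left direction, I would assume $z\subseteq x\implf y$ and take an arbitrary $\alpha\in x\cap z$. Since $z\subseteq\Cohf$, $\alpha$ is a finite clique, and $\alpha\cup\alpha=\alpha\in\Cohf$; applying the membership $\alpha\in z\subseteq x\implf y$ with the witness $\beta\eqdef\alpha\in x$ yields $\alpha=\alpha\cup\alpha\in y$. Hence $x\cap z\subseteq y$.

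For the left-to-right direction, I would assume $x\cap z\subseteq y$ and fix $\alpha\in z$; to show $\alpha\in x\implf y$ I take any $\beta\in x$ with $\alpha\cup\beta\in\Cohf$. The clique $\alpha\cup\beta$ is finite (a union of two finite sets) and is more specific than both $\alpha$ and $\beta$, that is $\alpha\cup\beta\contains\alpha$ and $\alpha\cup\beta\contains\beta$. Since $x$ and $z$ are finitary observations, hence down-closed in $\tuple{\Cohf,\contains}$, and $\alpha\cup\beta\in\Cohf$, we get $\alpha\cup\beta\in x$ and $\alpha\cup\beta\in z$, so $\alpha\cup\beta\in x\cap z\subseteq y$. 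By definition of $\implf$ this proves $\alpha\in x\implf y$, hence $z\subseteq x\implf y$.

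The only point requiring any care, and therefore the ``main obstacle'', is checking that the finiteness side-condition in the definition of $\implf$ causes no trouble: in the first direction the relevant union is $\alpha$ itself, which is finite by assumption; in the second, $\alpha\cup\beta\in\Cohf$ is exactly the hypothesis of the implication being proved, and the down-closure of $x$ and $z$ invoked to place $\alpha\cup\beta$ inside them is precisely down-closure within $\Cohf$, which is what membership in $\Obsf$ guarantees. No appeal to $\Obs$ or to the (distinct) operator $\impl$ is needed.
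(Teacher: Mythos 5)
Your proof is correct and follows essentially the same argument as the paper: both directions use the down-closure of $x$ and $z$ within $\Cohf$ and the witness $\alpha\cup\alpha=\alpha$ (resp. $\alpha\cup\beta$) exactly as in the paper's proof. No discrepancies to report.
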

\begin{proof}
  We prove both implications.

  First, assume $x\wedge z\subseteq y$, and take $\alpha\in z$.
  To prove that $\alpha\in x\implf y$, we consider some arbitrary $\beta\in x$ such that $\alpha\cup\beta\in\Cohf$.
  Since both $x$ and $z$ are finitary observations, there are downclosed.
  We thus know that $\alpha\cup\beta\in x$ (since $\alpha\cup\beta\contains \beta\in x$) and $\alpha\cup\beta\in z$ (since $\alpha\cup\beta\contains \alpha\in z$).
  As such, we get $\alpha\cup\beta\in x\cap z$, which by hypothesis means $\alpha\cup\beta\in y$.
  This ensures that $\alpha\in x\implf y$.

  Second, we assert $z\subseteq x\implf y$ and pick $\alpha\in x\cap z$.
  By definition, $\alpha \in x$ and $\alpha\cup\alpha=\alpha\in\Cohf$.
  But we also have $\alpha\in z$, so by hypothesis $\alpha\in x\implf y$.
  Therefore, we deduce that $\alpha=\alpha\cup\alpha\in y$.
\end{proof}

However, the injective homomorphism we were able to build for the bounded distributive lattice structure does not preserve implications, or indeed pseudocomplements.
This seems unavoidable, as witnessed by the following example.
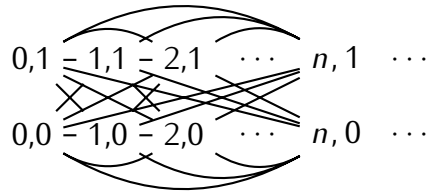
\begin{figure}[!ht]
  \centering
  \begin{tikzpicture}
    \node(0-1) at (2,1) {0,1};
    \node(1-1) at (3,1) {1,1};
    \node(2-1) at (4,1) {2,1};
    \node(n-1) at (6,1) {$n,1$};
    \node at ($(2-1)!.5!(n-1)$){$\cdots$};
    \node at ($(8,1)!.5!(n-1)$){$\cdots$};
    \node(0-2) at (2,0) {0,0};
    \node(1-2) at (3,0) {1,0};
    \node(2-2) at (4,0) {2,0};
    \node(n-2) at (6,0) {$n,0$};
    \node at ($(2-2)!.5!(n-2)$){$\cdots$};
    \node at ($(8,0)!.5!(n-2)$){$\cdots$};
    \draw[thick]
    (0-1)--(1-1)
    (1-1)--(2-1)
    (0-1) to[bend left](2-1)
    (0-1) to[bend left](n-1)
    (1-1) to[bend left](n-1)
    (2-1) to[bend left](n-1)
    (0-2)--(1-2)
    (1-2)--(2-2)
    (0-2) to[bend right](2-2)
    (0-2) to[bend right](n-2)
    (1-2) to[bend right](n-2)
    (2-2) to[bend right](n-2)
    (0-1) -- (1-2)
    (0-1) -- (2-2)
    (0-1) -- (n-2)
    (1-1) -- (0-2)
    (1-1) -- (2-2)
    (1-1) -- (n-2)
    (2-1) -- (1-2)
    (2-1) -- (0-2)
    (2-1) -- (n-2)
    (n-1) -- (1-2)
    (n-1) -- (2-2)
    (n-1) -- (0-2);
  \end{tikzpicture}
  \caption{The graph of $G_\omega$}
  \label{fig:g-omega}
\end{figure}
\begin{example}
  Consider the graph $G_\omega=\tuple{\Nat\times\set{0,1},\setcompr{\tuple{(n,i),(m,j)}}{i= j\vee n\neq m}}$, represented in~\Cref{fig:g-omega}.
  Its incoherence relation contains exactly the pairs $\tuple{(n,0),(n,1)}$ (and symetric).
  We define the set $x\eqdef\setcompr{\set{(n,0)}}{n\in\Nat}$ of singleton cliques whose second component is $0$.
  A clique $\alpha$ in $x\implf\bot$ would need to be incoherent with each $(n,0)$.
  The only way to accomplish this is to contain $(n,1)$, the only vertex incoherent with $(n,0)$.
  However, this entails that $\alpha$ is infinite, since it must contain the clique $\beta\eqdef\setcompr{(n,1)}{n\in\Nat}$.
  Therefore we may conclude that $x\implf\bot=\emptyset$, while $x\impl\bot=\set\beta\closure$.
\end{example}

\subsection{On the empty clique}
\label{sec:empty-clique}
\def\ne{\Coh^+}
The empty clique plays a special role in observation algebra: being the ``largest'' clique (in the sense of $\contains$), it is also the single generator of the interpretation of the constant $\top$.
However, in any observation algebra whose underlying graph is non-empty, there is another candidate for interpreting $\top$: the set of all non-empty cliques, which we will call $\ne$ in the following.
Indeed, $\ne$ sits just beneath $\set\emptyset\closure$, and above every other closed set of cliques.

The set $\ne$ is representable by a $\Tlat$ term for finite graphs, and by a $\Tobs$ term for infinite anticliques (see \Cref{sec:infinite-anticliques}).
This is not the case for infinite FAN graphs, since $\ne$ cannot be obtained as the closure of a finite set of finite cliques.
Similarly, $\ne$ cannot be represented by a $\Tobs$ term for infinite products, since it cannot be obtained from a finite set of pure observation terms (terms that only feature atomic observations from a single component).

This means that if one were to interpret $\top$ as $\ne$, for FAN graphs and for products, one would need special axioms for the finite case.
For instance, one could use $\top\equiv\bigvee_{a\in V}a$ to enumerate the vertices in the graph.
We have checked in Rocq that using this, axioms can be adapted to handle FAN graphs and infinite anticliques with this interpretation.

Products might be trickier to handle.
Indeed, one fact that underpins the proofs in \Cref{sec:prods} is that the empty clique is at the same time the largest clique in $\P=\bigotimes_{i\in\I}\G_i$ and in each of the $\G_i$.
In other words, the interpretation of $\top$ is somewhat uniform between the product algebra and the component algebras.
However, switching to $\ne$ as the interpretation of $\top$, we loose this property.
Instead, we have that $\ne(\G_i)\closure(\P)\subsetneq\ne(\P)$, because for any $j\neq i$ and $a\in O_j$ we know $\set{a\at j}\in \ne(\P)\setminus\paren{\ne(\G_i)\closure(\P)}$.
This property makes things much more complicated technically, and would require a different treatment altogether.

On the other hand, one natural graph construction might become more accessible.
We have not mentionned elsewhere in this paper the possibilty of combining graphs by sum rather than product, simply taking a disjoint union of vertices and edges without adding any edge between components.
Indeed, this rather useful construction has resisted our efforts to find a complete axiomatisation, because of the empty clique.
We believe that with the alternative $\ne$ interpretation of~$\top$, this construction would become similar to the case of products with~$\set\emptyset\closure$ as~$\top$.
We leave further study of these possibilities for future work.

\clearpage
\phantomsection{}
\addcontentsline{toc}{section}{References}
\bibliographystyle{apalike}
\bibliography{long,bibli}

\end{document}